\newenvironment{customthm}[1]
  {\innercustomthm}
  {\endinnercustomthm}
\numberwithin{equation}{section}
\numberwithin{figure}{section}
\theoremstyle{plain}
\newtheorem{thm}{\protect\theoremname}[section]
  \theoremstyle{plain}
  \newtheorem{fact}[thm]{\protect\factname}
  \theoremstyle{definition}
  \newtheorem{defn}[thm]{\protect\definitionname}
  \theoremstyle{remark}
  \theoremstyle{plain}
  \newtheorem{lem}[thm]{\protect\lemmaname}
  \theoremstyle{plain}
  \newtheorem{cor}[thm]{\protect\corollaryname}
  \theoremstyle{remark}
  \newtheorem{claim}[thm]{\protect\claimname}
  \theoremstyle{plain}
  \theoremstyle{remark}
  \newtheorem{remark}[thm]{\protect\remarkname}
 \theoremstyle{plain}
  \newtheorem*{fact*}{\protect\factname}
  \theoremstyle{plain}
  \newtheorem*{thm*}{\protect\theoremname}
  \theoremstyle{plain}
  \newtheorem*{lem*}{\protect\lemmaname}
 \theoremstyle{definition}
 \newtheorem*{defn*}{\protect\definitionname}
  \theoremstyle{remark}
  \newtheorem*{claim*}{\protect\claimname}
  \theoremstyle{plain}
  \newtheorem*{cor*}{\protect\corollaryname}
  \providecommand{\claimname}{Claim}
  \providecommand{\corollaryname}{Corollary}
  \providecommand{\definitionname}{Definition}
  \providecommand{\factname}{Fact}
  \providecommand{\lemmaname}{Lemma}
  \providecommand{\propositionname}{Proposition}
  \providecommand{\remarkname}{Remark}
  \providecommand{\theoremname}{Theorem}
\providecommand{\theoremname}{Theorem}
\newcommand\blfootnote[1]{%
  \begingroup
  \renewcommand\thefootnote{}\footnote{#1}%
  \addtocounter{footnote}{-1}%
  \endgroup
}
\global\long\def\ket#1{\left|#1\right\rangle }
\global\long\def\identity{\mathbf{1}}
\global\long\def\sp{2D-CLH^*(k,2)}
\global\long\def\a{\mbox{}}
\global\long\def\l{\mbox{}}
\global\long\def\k{\mathbb{\mathcal{K}}}
\global\long\def\a{\mathbb{\mathcal{A}}}
\global\long\def\h{\mathbb{\mathcal{H}}}
\global\long\def\s{\mathcal{S}}
\global\long\def\l{\mathcal{L}}
\title{On the Complexity of Two Dimensional Commuting Local Hamiltonians}
\author{Dorit Aharonov$^*$}
\author{Oded Kenneth$^{*\S}$}
\author{Itamar Vigdorovich$^{*\dagger}$}
\begin{document} 
\maketitle
\blfootnote{$*$ Hebrew University of Jerusalem \\ 
$\dagger$ Weizmann Institute of Science \\
$\S$ Technion - Israel Institute of Technology }
\begin{abstract}

The complexity of the commuting local Hamiltonians (CLH) problem 
still remains a mystery after two decades of research of quantum 
Hamiltonian complexity; it is only known to be contained in NP for 
few low parameters. Of particular interest is the tightly related 
question of understanding whether groundstates of CLHs can be generated 
by efficient quantum circuits. 
The two problems touch upon conceptual, physical and 
computational questions, including the centrality of non-commutation in 
quantum mechanics, quantum PCP and the area law. 
It is natural to try to address first the more physical case 
of CLHs embedded on a 2D lattice, but this problem too remained open apart from some very specific cases 
\cite{3local, planarHastings, Schuch}.   
Here we consider a wide class of two 
dimensional CLH instances; these are $k$-local CLHs, for any constant $k$; 
they are defined on qubits set on the edges of
any surface complex, where we require that this surface complex 
is not too far from being ``Euclidean''.
Each vertex and each face can be associated with 
an arbitrary term (as long as the terms commute). 
We show that this class is in NP, and moreover that 
the groundstates have an efficient 
quantum circuit that prepares them. 
This result subsumes that of Schuch \cite{Schuch} which regarded the special case of $4$-local Hamiltonians on a grid with qubits, and by that it removes the mysterious feature of Schuch's proof which showed containment in NP without providing a quantum circuit for the groundstate and considerably generalizes it. We believe this work and the tools we develop make a significant step
towards showing that 2D CLHs are in NP.  
\end{abstract}

\newpage

\tableofcontents

\newpage

\setcounter{page}{1}
\selectlanguage{english}%

\selectlanguage{american}%

\section{Introduction} 
\subsection{Commuting local Hamiltonians}
The Local Hamiltonian (LH) problem is central to the theory
of quantum complexity. In 1998 it was proved by Kitaev to be QMA-complete
\cite{LH}, initiating by that the area of quantum Hamiltonian complexity.
This result is often considered as the quantum analogue
of the celebrated 
Cook-Levin theorem, which states that the Boolean Satisfiability
problem (SAT) is NP-complete \cite{CookLevin-1}. 
In 2003 Bravyi and Vyalyi \cite{BV} raised the question of what is the 
complexity of the intermediate class in which all terms mutually commute 
(commuting local Hamiltonians, or CLHs). 
The question begs an answer not only because the commutation 
restriction is natural and often made in physics; 
but this is also a computational probe to the fundamental question:  
is the uncertainty exhibited by non-commuting operators
necessary for quantum systems to exhibit their {\it full} quantum nature?   
or, perhaps, it happens to be the (much less expected) case
that even commuting quantum systems can express
full quantum power.   

The CLH problem may seem at first sight to be trivially in NP, 
since by the commutation condition, there  
exists a common basis of eigenstates to all terms, where  
each constraint has a well defined value on  
each eigenstate; the problem seems like a 
classical constraint satisfaction problem (CSP). 
This hope breaks down when realizing that 
the eigenstates themselves maybe highly complex. While in CSP,
a proof for satisfiability is simply a string, i.e. a satisfying assignment,
in the quantum case the eigenstates themselves may be highly 
entangled. Indeed, a beautiful example is Kitaev's 
toric code \cite{ToricCode}, whose global entanglement 
is characterized by topological properties.   In the general case, we do not no whether groundstates of CLHs have an efficient classical description at all (that is, a polynomial size classical representation from which  
the result of any local measurement can be deduced efficiently).

The question of CLHs touches upon some of the most important aspects of  
quantum many body systems: fundamental, 
physical and complexity theoretical.  
For a start, stabilizer codes can be viewed as ground spaces of 
CLHs; these constitute by far the most common framework for 
the study of quantum error correcting codes. 
CLHs are also a very convenient 
place to start with when studying open problems and toy examples;  
for example in the study of  
the quantum PCP conjecture 
\cite{Detectability lemma, Naveh, QPCP} 
often CLHs are used as a case study 
(e.g. \cite{Localizable, FreedmanHastings, CLHexpander}). 
Moreover, CLH systems provide the simplest examples for systems obeying the 
{\it area law} bounding the entanglement in groundstates of gapped systems\footnote{the area law states that the entanglement in the 
groundstate between two regions grows like the size of 
the {\it boundary} between these two regions, 
rather than their volume}. In the one dimensional case, the area law 
was recently shown in a breakthrough result to
provide an efficient classical algorithm for constructing groundstates
\cite{LandauVaziraniVidick}.  
In two or higher dimensions such an algorithm cannot be expected, 
since CLHs become NP hard in 2D. 
However it is still possible that groundstates 
satisfying the area law have polynomial size quantum circuits 
(which may be hard to find). 
Understanding whether groundstates of 2D CLH systems have efficient descriptions 
is thus an essential first step towards clarifying how the area law 
affects the complexity of groundstates. 

Despite the importance and fundamental nature of this class,  
and fourteen years after the problem was posed \cite{BV}, 
the complexity of the CLH problem remains a mystery, even in 
the physically motivated case of 2D.  A trivial upper bound to the complexity of the CLH
problem is that it belongs to QMA. A simple lower bound exists as well: if
we let $d$ denote the dimension of the particles, and let $k$ denote the maximal number of particles that each local term acts on, then
we may define $CLH(k,d)$ accordingly. Using this notation 
$CLH(k,d)$ is NP-hard if $k,d \geq 2$.
The question becomes then to distinguish between those cases which are within NP, those which are QMA hard, and possibly, the intermediate cases. 
However, excluding a few special cases of CLH, not much is known.

\subsection{Previous results}
\label{subsec:prev}
Bravyi and Vyalyi proved that $CLH(2,d)$, namely the class of instances in which the 
particle dimensionality 
$d$ is an arbitrary constant, whereas the interactions only involve two 
such particles (this is called {\it two-local CLHs}), is in NP \cite{BV}. 
The proof relies 
on a decomposition lemma based on the theory of finite dimensional
C{*}-algebra representations. This tool has become 
essential in all following results about this problem. 
 
Aharonov and Eldar \cite{3local} then considered the $3$-local case with qubits
and qutrits. They showed that $CLH(3,2)\in NP$ and also that $NE-CLH(3,3)\in NP$
where NE is a geometrical restriction on the interaction called {\it nearly
Euclidean} \cite{3local}. 
An important fact about the proofs for both
of these results is that the witness which is sent by the prover is
virtually a constant depth quantum circuit which prepares a groundstate for
the system, starting from a product state. 
Hastings called states 
which can be generated by constant depth quantum circuits 
``trivial'' \cite{Localizable}; the name is justified since 
indeed, local observables can be computed classically  
in an efficient way for such states, given the circuit that generates them, 
because the {\it light cone} of qubits affecting the output 
qubits of a local observable is of constant size. 
Thus, the above mentioned results 
not only prove containment in NP, but also 
show that such systems have groundstates with very restricted 
multi-particle entanglement which is in some sense {\it local}. 

In this regard, Aharonov and Eldar \cite{3local} mentioned 
a tight ``threshold'' which
can be drawn at this point: commuting systems with parameters as above
are essentially classical; But, when raising $k$ or $d$ just by
$1$, i.e when considering $CLH(4,2)$ or $CLH(3,4)$, 
we arrive at a new regime in which the quantum system
can exhibit {\it global} entanglement, namely, the groundstates 
are no longer trivial (by Hastings' definition). 
In fact, such systems can exhibit global entanglement even when 
the system is embedded on a square lattice:  
Kitaev's toric code \cite{ToricCode} is a wonderful example, as 
it can indeed be shown that groundstates of this code with nearest neighbor interactions cannot be generated 
by a constant depth quantum circuit \cite{TQObounds}. This
raises the possibility \cite{3local} that general CLH systems with parameters 
above the ``transition point'' are too complex for containment in NP, 
as they allow global entanglement.  

There are several examples beyond the transition point which 
indicate that though global entanglement is possible, 
it might still be the case that CLH systems remain "classically accessible" even in that regime. 
First, it is known that despite their global entanglement, 
toric code states can be constructed in {\it logarithmic} 
depth quantum circuits called 
MERA \cite{MERA} which moreover, allow
local measurements to be simulated classically efficiently. In addition,
Schuch proved that CLHs in which all qubits and all $4$-local constraints
are embedded on a square lattice (generalizing the toric 
code to general interactions with the same geometry and dimensionality) 
also belong to NP \cite{Schuch}.
Interestingly enough though, Schuch's proof bypasses the question of whether 
an efficient description of a groundstate exists; 
instead, the witness which is sent by the prover
convinces the verifier that a low energy state exists without describing
that state at all. Schuch's result thus leaves open the possibility, suggested in \cite{3local}, that when crossing the transition point from local to global entanglement mentioned above, groundstates may in general become difficult to describe classically (not including the toric code special case).

Hastings provided two other results  proving upper bounds on the complexity of the CLH problem in certain cases. In \cite{Localizable} he 
considered $k$-local CLHs whose interaction graphs 
are {\it $1$-localizable}; roughly speaking, these are 
instances whose interaction 
graphs can be mapped to graphs continuously, 
such that the preimage of every point is of bounded diameter. 
This extends the result of \cite{BV} that two local Hamiltonians are 
in NP, to slightly more general constructions which are in some sense, 
two-local in every local region.   
In another result of Hastings \cite{planarHastings},  
he considered CLHs on a planar lattice, 
and proved that the problem is in NP under 
certain restrictive conditions on the C*-algebraic 
decomposition (essentially, that when dividing the lattice to stripes, 
the transformation which disentangles adjacent stripes, a'la Bravyi 
and Vyalyi \cite{BV}, is local). Hastings also provided parts of a 
proof that 2D CLH is in NP, and suggested that the proof 
will be completed elsewhere, however this was not done. 

We note that an interesting clue pointing in fact in the other 
direction, namely suggesting that the  
CLH problem could be harder than NP,  
was given recently by Gosset, Mehta and 
Vidick \cite{GSconnectivityQCMAhard}; they show 
that a certain problem regarding the connectivity of the ground 
space of CLHs is as hard as that of general LHs. 
It is suggested in \cite{GSconnectivityQCMAhard} that this 
is probably true even for CLHs in 2D, though this remains to be worked out. 

We are left with the mystery: 
possibly the above "classical" examples are just special cases, and 
in the general case above the low parameters threshold, global entanglement 
prevents an efficient description of the groundstates of CLHs; 
or maybe, the "classicality" of the entanglement in the 
toric code groundstates as well as in the other examples 
mentioned \cite{Schuch, planarHastings} 
is generic for all CLHs, and thus the problem lies in NP. 

\subsection{Results}
\selectlanguage{english}%
 We consider a wide subclass of CLH in 2D. 
Specifically, we consider $CLH(k,2)$ instances 
(i.e with qubits) where the qubits are arranged on the edges of a 
polygonal complex $\k$ whose underlying topological space is a surface. We refer to those as {\it 2D complexes}\footnote{despite some friction with ordinary {\it simplicial 2-complexes} as in e.g \cite{Hatcher} which do not necessarily define topologically a surface}.
The local terms live on the vertices of $\k$ (these are called stars), 
 and on its faces (plaquettes), where each of these terms
acts on the edges attached to the vertex or the face, 
respectively. In Section \ref{sec:formulation},
this class is formally defined and denoted by $2D-CLH^*(k,2)$. We shall
emphasize that the Hamiltonian terms need not be of the form of 
products of $\sigma_x$ or $\sigma_z$ Paulis as in Kitaev's surface codes, but can be general operators on the relevant qubits (as long as they commute).  
Moreover, the locality parameter $k$, which in this case equals the maximal degree of vertices and faces of $\k$ (a degree of a face is the number of its edges), 
is an arbitrary constant as well. 

\begin{minipage}{0.5\textwidth}
\begin{figure}[H]	
\caption{\label{fig:complex} Polygonal complex}
\includegraphics[scale=0.15]{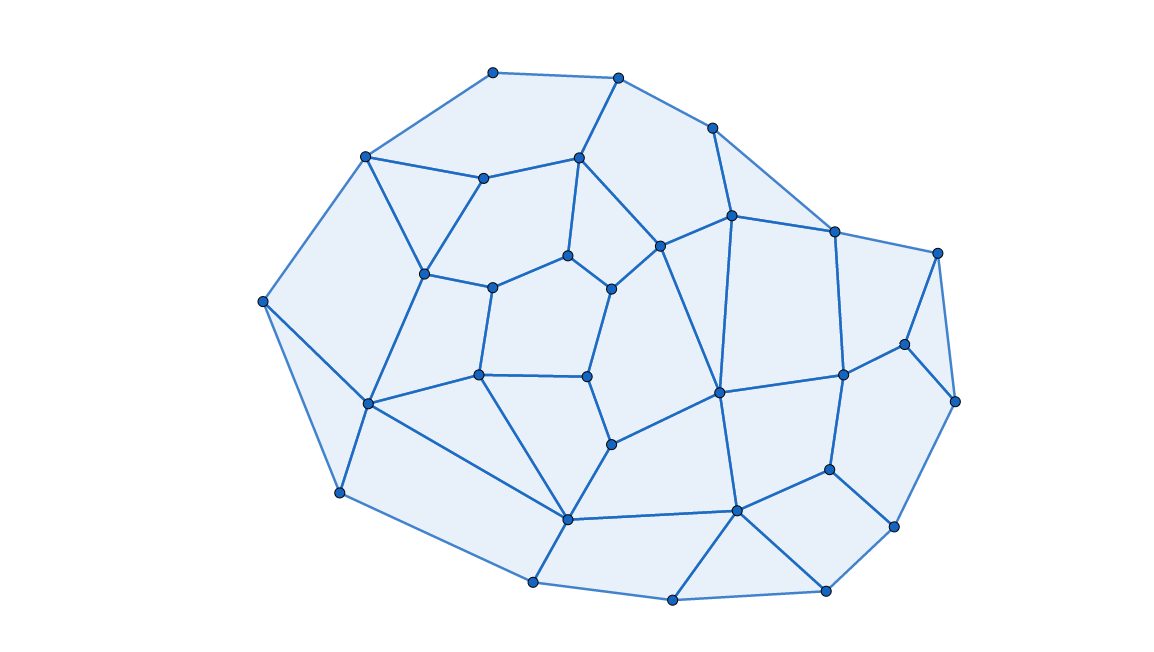}
\end{figure}
\end{minipage} \hfill
\begin{minipage}{0.45\textwidth}
An example of a polygonal complex, where each vertex and each face has a degree of at most 5. One may define on this complex a $2D-CLH^*(5,d)$ instance by assigning to each star and plaquette a Hamiltonian acting on the attached edges, where those Hamiltonians mutually commute.

\end{minipage}

{~}

We note that 
there is no restriction whatsoever on the topology of the complex $\k$; 
it can be of any genus, and may or may not include a boundary. 
We impose one condition on $\k$, which is a 
metric-geometric condition that we call quasi-Euclidity (though of similar flavor, it shouldn't be confused 
with the nearly-Euclidean condition of \cite{3local}). 
This condition ensures that the surface induced by the 
complex admits a triangulation in which  
the triangles may be slim (as in hyperbolic geometry) and 
may be fat (as in elliptic geometry) but only up to some constant. 
This makes the complex in some sense Euclidean up to a 
constant distortion, and prevents ``wild'' situations. 
Any physically natural 2D setting should be covered by this. 

Our main two results are:

\begin{customthm}{1}
\label{mainthm1}
The $2D-CLH^*(k,2)$ problem on quasi-Euclidean complexes 
is in NP. 
\end{customthm}
\begin{customthm}{2}
\label{mainthm2}
For any instance of $2D-CLH^*(k,2)$ defined on a quasi-Euclidean complex, 
there exists a polynomial
depth quantum circuit which prepares a groundstate.
\end{customthm}


Importantly, these results replace 
the mysterious feature of Schuch's result \cite{Schuch} 
providing a proof for containment in NP without an 
efficient groundstate 
description, by one in which the groundstate can be efficiently classically 
described; this seems to strengthen the common feeling 
that containment in NP should go hand in hand with 
efficient description for the groundstate.  
Moreover, our results hold for a wide class of cases, which includes not 
only the $4$-local case in a square lattice of Schuch \cite{Schuch}, but CLHs with arbitrary locality $k$, that are defined on any quasi-Euclidian 2D complex.
We remark that our definition of 
$2D-CLH^*(k,2)$ unfortunately does not capture the most general 
$k$-local quantum systems of qubits embedded on a surface (see Section \ref{sec:formulation} and Appendix \ref{apx:2d} for more details).

\subsection{Proof overview}

Our starting point is a folklore quantum algorithm for preparing 
the groundstates of the toric code.  
Recall that the toric code Hamiltonian \cite{ToricCode} 
acts on qubits set on the edges 
of an $n \times n$ grid with boundary conditions which make it topologically a torus. The Hamiltonian has two types of constraints, 
one for each vertex (star) denoted $s$, and one for each face (plaquette) denoted $p$: 
\begin{equation}
A_s=\bigotimes_{e \in s}\sigma_z^e,\qquad B_p=\bigotimes_{e\in p}\sigma_x^e. 
\label{eq:toric}
\end{equation}
$$H=-\sum_s A_s -\sum_p B_p$$
The groundstates of this Hamiltonian form a code space, 
and exhibit global-entanglement.

Consider creating ``holes'' in
the torus, by removing a small fraction of the plaquettes, 
in a regular manner. Figure \ref{fig:PuncturedToricCode} (A) shows
how by removing enough plaquettes we are left 
with a {\it punctured Hamiltonian} $\tilde{H}$, 
which involves {\it two local} interactions
between {\it super-particles} comprised each of
constantly many qubits. By \cite{BV} there is a
constant depth quantum circuit which prepares a groundstate (denote it $\ket{\psi}$) for $\tilde{H}$.

This doesn't seem at first as real progress, since
 $\ket{\psi}$ is a trivial state, whereas groundstates 
of the original Hamiltonian are globally entangled. 
The key idea is that now we can 
{\it correct for the plaquettes we have removed}, using  
the known idea of applying string operators connecting pairs of ``holes''.  

To do this, we first {\it measure} in the state $\ket{\psi}$ each of the 
plaquette terms which were removed. Due to the commutation
relations, the resulting state is {\it still} 
a groundstate of $\tilde{H}$ but 
now it is also an eigenstate of the toric
code, with  a known eigenvalue for each of the terms. 
Viewing the toric code as a subcode of the  
punctured code (the groundspace of the punctured Hamiltonian $\tilde{H}$), 
what we now need is 
a set of {\it logical operators} in the punctured code, 
that act within it and can 
transform our state into a toric code groundstate. 

To this end, we recall the notion of {\it string} operators which 
are Pauli operators 
acting on the paths (strings) connecting a pair of holes \cite{ToricCode}. 
Such an operator changes 
the values of the measurements corresponding to the constraints in 
both holes, while keeping all the other values intact. Notice 
that this process always works on {\it pairs} of holes.  The dependency
relations between the local terms 
($\prod_{s}A_{s}=\prod_{p}B_{p}=\identity$) \cite{ToricCode}
imply that for any eigenstate of the toric code there is  
an {\it even} number of plaquette (and also star) terms which are in their excited states.  Since all plaquettes in the punctured 
Hamiltonian are satisfied (i.e., not excited), it follows that 
there is an even number of excited plaquettes out of those 
which we removed, and thus such a pairing exists.

\begin{figure}
\begin{subfigure}{0.4\textwidth}
\caption{Punctured Hamiltonian}
\includegraphics[scale=0.23]{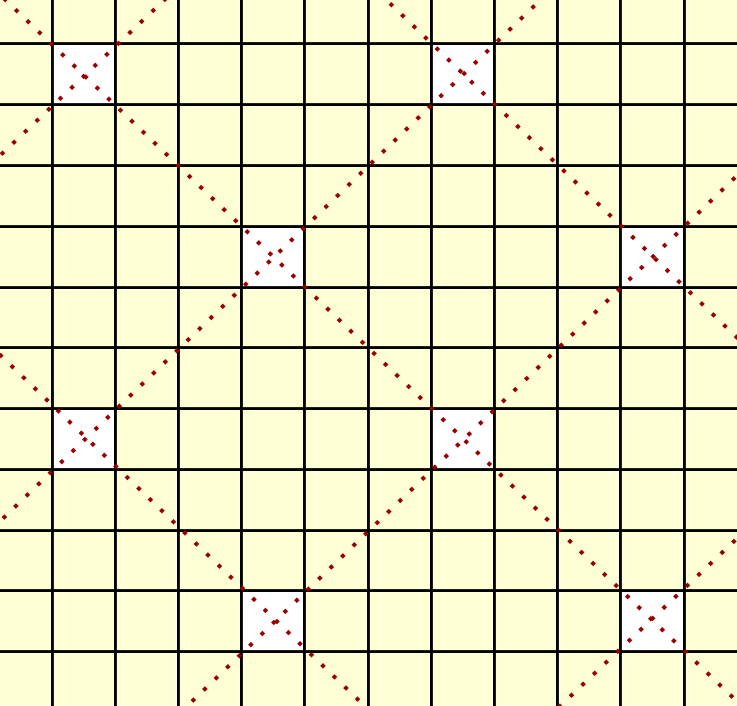}
\end{subfigure}
\begin{subfigure}{0.4\textwidth}
\caption{Logical Operators}
\includegraphics[scale=0.23]{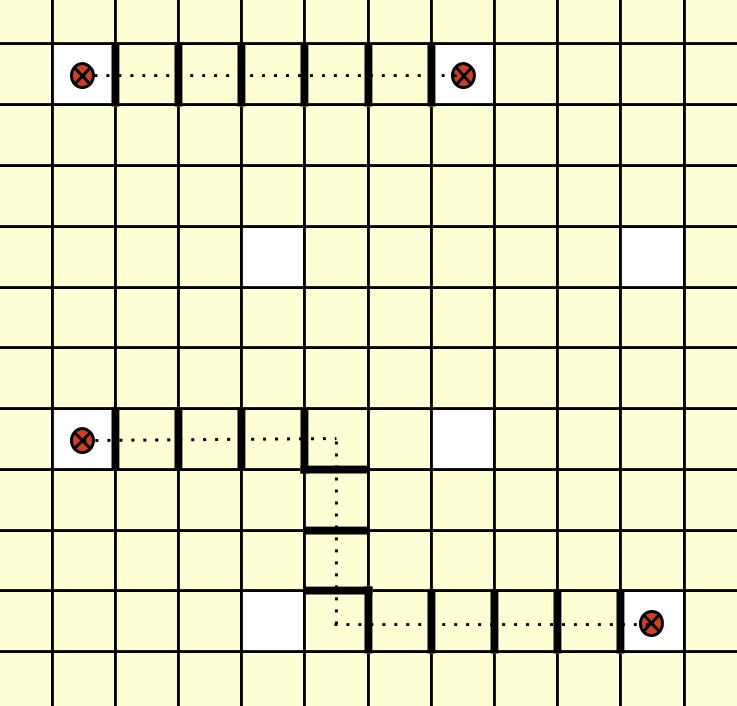}
\end{subfigure}
\caption{(A)  The white squares are the holes. The dotted lines induce a partition of the set of qubits (edges) to squares (tilted in 45 degrees), which are the super-particles, each 
containing a constant number of qubits. Every local term 
(star or plaquette) of the punctured Hamiltonian acts on qubits which belong to at most 2 super-particles.
(B): A hole with a spot inside indicates an excitation (i.e. a violation). 
The dotted lines are string logical operators (copaths) which annihilate particles in pairs. The edges in bold denote the qubits on which the
logical operator acts.}
\label{fig:PuncturedToricCode}
\end{figure}

 
Note that we could have actually removed {\it all} plaquettes, 
resulting in a punctured Hamiltonian $\tilde{H}$
consisting only of $A_s$ terms; Starting with the state $\ket{0^n}$, which is
a groundstate of $\tilde{H}$, we could then proceed as in  
the above algorithm, to derive a groundstate of the toric code 
(without any help of the prover).
We will make use of both approaches in this paper; 
the ``regular holes'' approach is the one we will generalize (conceptually) to more general instances, while the second more specific approach is used 
as a subroutine in our final algorithm, for technical reasons. We 
will thus present and prove it formally in Section \ref{sec:toric}.  
{~}

\subsubsection{Physical interpretation}\label{subsec:physical_int}
The toric code has a physical interpretation which 
will be very useful for us \cite{ToricCode}.  The value of the edges in the $\sigma_x$ and $\sigma_z$ basis are interpreted as a $\mathbb{Z}_2$ vector potential or electric field, respectively. When a constraint is violated, 
we interpret this as if an elementary excitation, or a particle, is created.  
The star constraints can be viewed as requiring that the 
electric flux from the vertex (namely the values of the qubits 
in the computational basis) is zero, i.e., that this 
vertex will have no electrical charge.
If a vertex constraint is violated, we say that there is an ``electric 
charge'' at that vertex. Likewise, the plaquette constraints require that the magnetic flux which  passes through the face is zero (mod $2$). If a plaquette constraint is violated we say that there is a "magnetric vortex" in this plaquette \cite{ToricCode}. 
The toric code consists of the states in which neither
electrical charges, nor magnetic vortices appear. The punctured system
however allows particles to be created at the sites which
we have removed. After measuring these terms, we know exactly 
where these particles are. It is left to annihilate them.  
Having a closed surface with no boundary, such as the torus, 
the total charge on it, as well as the 
total magnetic flux passing through it, must be zero (as Gauss and 
Stoke's laws imply, respectively). 
This means that there must be an even number of electrical charges, 
and an even number of magnetic vortexes,  
which can then be annihilated in pairs, by what is called 
``string operators'' connecting pairs of charges or pairs of vortexes (see \cite{ToricCode}). In the above algorithm for the toric code we only needed to   
annihilate magnetic vortices (plaquettes).

{~}
\subsubsection{From toric code to general $\sp$}
It is far from clear how the methods above concerning the toric code can 
be applied to general 2D CLH systems; after all, surface
codes seem to be an extremely restricted type of 2D CLHs (where the local terms must take the form of tensor
products of either $\sigma_x$ or $\sigma_z$ Pauli operators), whereas we are concerned with arbitrary commuting local terms. 
Theorem \ref{thm:equiv} in Section \ref{sec:equiv}  provides our first main step in the proof: we show that all $2D-CLH^*(k,2)$ instances are "equivalent to the toric code permitting boundaries". This in particular means that if 
all terms, stars and plaquettes, act non-trivially on all of their attached edges, (plus $\k$ is closed, i.e topologically has no boundary),  
then the instance is, up to a minor modification, equal to the toric code. 
In the general case, terms may act trivially on some of their qubits (edges); 
we will call such edges boundary/coboundary edges. Theorem \ref{thm:equiv} 
says that $\sp$ instance are virtually the toric code, except for those 
essentially 1D behaving boundary areas (and thus the term 
"permitting boundaries"). The proof of this structure theorem 
relies heavily on the C*-algebraic techniques mentioned earlier. 
We emphasize that Theorem \ref{thm:equiv} holds only after some transformation 
of the instance to one with no "classical qubits" whose value is simply a classical bit which can be provided by the prover 
(see subsection \ref{subsec:classical}).

\subsubsection{Constructing the Punctured Hamiltonian}
The above equivalence theorem raises 
the idea of using a similar algorithm as for the toric code groundstates, 
and somehow handling the special boundary/coboundary qubits. 
However, we encounter two challenges. 
First, we do not have sufficient control on operators 
near the boundary/coboundary. If we carelessly tear out holes 
in their vicinity,  
we might not know how to repair them- 
the correcting process of the toric code heavily relies on the specific commutation and anti-commutation relations between a string operator and the Hamiltonian terms (equation \ref{eq:toric}). We handle this difficulty by tearing out
holes only in the interior regions (that is regions 
without boundary/coboundary qubits) where we do have
resemblance to the toric code. It turns out that there is no 
need to tear holes close to boundary/coboundary qubits as 
in some sense these special qubits are already punctured: by definition such qubits are not  
surrounded by Hamiltonians acting on them non-trivially.

The second challenge is that we do no longer have the 
dependencies $\prod_s{A_s}=\prod_p{B_p}=\identity$ that ensured earlier 
an even number of excitations of any given type, and so the idea 
of fixing holes in pairs is irrelevant. In the physical interpretation, the latter means that the total charge on the manifold can be different than 0 since now flux can escape through the boundary. In section \ref{sec:puncture} we 
show that the curse of boundaries is 
in fact a blessing, since now we can also dump excitations to 
the boundary/coboundary with string operators, similarly to logical 
operators in surface codes \cite{BoundaryCodes} 
(figure \ref{fig:PuncturedSurfaceCode}).

The latter idea, which can be viewed as the main conceptual 
idea in the paper, introduces a new challenge - we have two types of 
special qubits. Boundary qubits give rise to copath string 
logical operators whereas coboundary qubits give rise to 
path string logical operators. We cannot expect that puncturing 
only plaquette terms out of the surface will allow us to fix them later on. 
Figure \ref{fig:gridLogicalOperators} shows simple examples of systems in which only one type of term (star/plaquette) have access to the boundary/coboundary via copath/path. In short, plaquettes play nicely with boundary edges whereas stars play nicely with coboundary edges.  

\begin{minipage}{0.6\textwidth}
\begin{figure}[H]
\caption{\label{fig:gridLogicalOperators} Logical operators}
\begin{subfigure}{0.45\textwidth}
\includegraphics[scale=0.20]{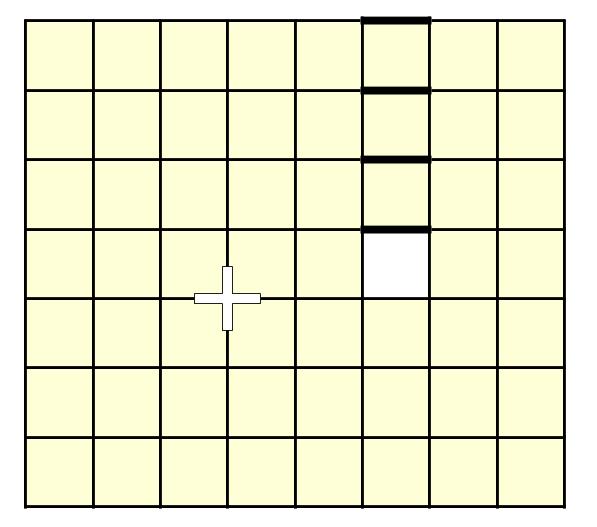}
\end{subfigure}
\begin{subfigure}{0.45\textwidth}
\includegraphics[scale=0.20]{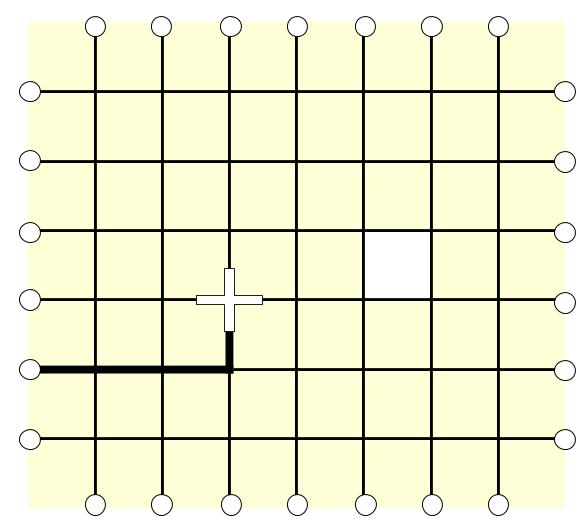}
\end{subfigure}
\end{figure}
\end{minipage} \hfill
\begin{minipage}{0.37\textwidth}

The white plaquette and the white plus indicate holes. 
In a complex with boundary but no coboundary only plaquette holes can be connected via a copath to utilize a logical operator, 
whereas in a complex with coboundary but no boundary only star holes 
can be connected via paths to utilize a logical operator.
\end{minipage}

{~}

A major technical effort in the paper is proving 
Lemma \ref{lem:either-path-or-copath} which roughly states that for any 
adjacent plaquette and star, at least one of them has access to the 
boundary/coboundary (unless they are both already touching the 
boundary/coboundary), hence a hole in one of them will be fixable 

With this in mind, we construct the punctured Hamiltonian as follows: 
we start by considering the set $\mathcal{W}$ of "fixable" terms. These are terms which are not in the boundary of the system (and thus are in the form of a toric code term) and in addition have access to the boundary or coboundary via a copath or path depending on whether it is a plaquette or star term respectively (see Definition \ref{def:access} and Figure \ref{fig:ribbon}). By Lemma \ref{lem:either-path-or-copath} the fixable holes are very ``dense''. 
We shall not hesitate to remove all of those terms since, by how the elements of the set $\mathcal{W}$ were chosen,  we can correct their values later on. 

We call the Hamiltonian obtained by removing all of the terms in $\mathcal{W}$ the {\it punctured Hamiltonian} $\tilde{H}$.

\subsubsection{2-locality of the punctured Hamiltonian}

Lemma \ref{lem:either-path-or-copath} guarantees that at any large enough constant size area, either there are boundary qubits (recall these are qubits which are acted trivially by at least one of its surrounding terms) which may serve as a hole, or else there must be a fixable term in that area, i.e a member of $\mathcal{W}$, which was removed. 
In the case of the grid it is now very simple to generate a $2$-local 
structure among constant size super-particles: 
just consider a coarse grained grid of $5\times5$, and use Lemma \ref{lem:either-path-or-copath} to conclude that there must be some hole inside 
each $5\times5$ square. However we are allowing much more general geometries than the grid; it is here and only here, that we make use of the quasi-Euclidity condition. This is what allows us to follow a similar process, and 
to tear holes in some regular manner. Technically, we need to apply Moore's bound (Fact \ref{fact:moore}) to bound the number of edges (qubits) 
which belong to any super-particle resulting from the process; 
together some other combinatorial arguments the proof goes through.

\begin{figure}

\begin{subfigure}{0.4\textwidth}
\caption{Punctured Hamiltonian}
\includegraphics[scale=0.23]{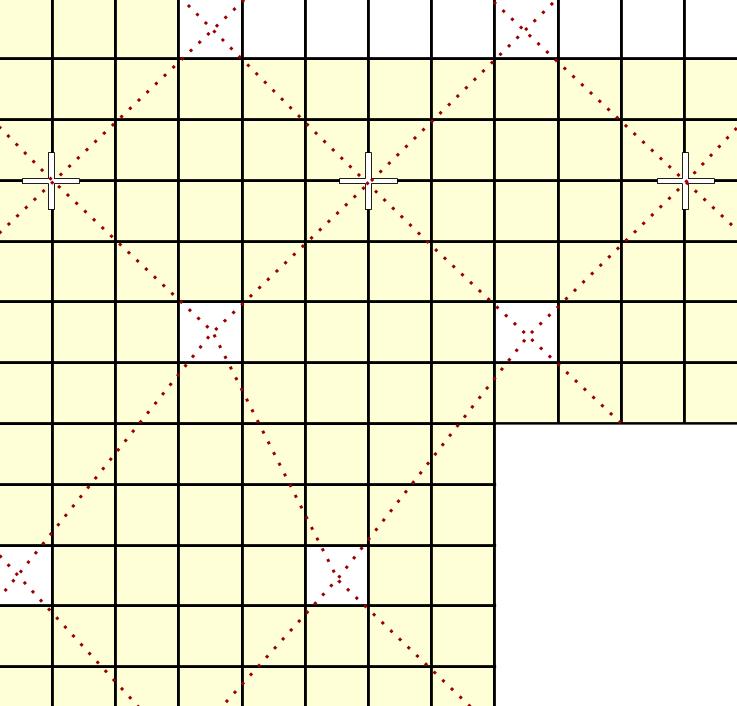}
\end{subfigure}
\begin{subfigure}{0.4\textwidth}
\caption{Logical Operators}
\includegraphics[scale=0.23]{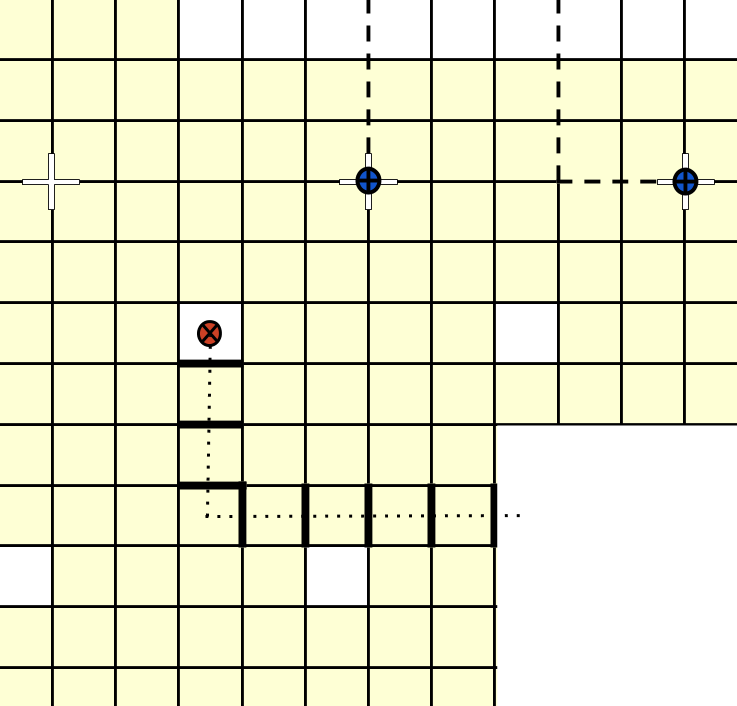}
\end{subfigure}
\caption{
(A) Even when boundaries/coboundaries exist, one can tear out holes to obtain a 2-local instance w.r.t superparticles of constant size.
(B) After measuring each hole, it remains to correct it if needed by connecting it to the boundary/coboundary via a string operator depending on the hole type (i.e plaquette/star).
\label{fig:PuncturedSurfaceCode}
}
\end{figure}

Now that the punctured Hamiltonian is 2-local, we again are guaranteed that
a groundstate can be generated by a constant depth quantum circuit \cite{BV}.
This is the only place where the prover is needed. Note that 
this groundstate is in general not the groundstate
of the original Hamiltonian, yet, the fact that
we have torn out only terms of $\mathcal{W}$, namely the fixable terms, 
implies that we can apply the
approach of measuring them and correcting them with string operators to the boundary/coboundary of the system (Figure \ref{fig:PuncturedSurfaceCode} (B)).

\subsection{Organization of the paper}
In Section \ref{sec:formulation} we formalize the problem. Section \ref{sec:background} gives some background: "the induced algebra", "classical qubits", and notations. 
Section \ref{sec:toric} provides the efficient algorithm for generating toric code states which we use as a subroutine. Section \ref{sec:equiv} contains
Theorem \ref{thm:equiv}, stating that $\sp$ instances are "equivalent to the toric code permitting boundaries". Based on this, in 
Section \ref{sec:puncture} we prove lemma \ref{lem:either-path-or-copath} which shows that many fixable terms (those with "access to the boundary") exist, 
and define the punctured Hamiltonian, in which all these terms are removed. In Section \ref{sec:2local} we show that
the punctured Hamiltonian is indeed 2-local with respect to super-particles of constant size. 
Section \ref{sec:completingproof} combines all these results to prove Theorems 1,2. In Section \ref{sec:discussion} we discuss the results, their implications, and state 
open questions.

\section{Formulation of the problem \label{sec:formulation}}
\subsection{Definitions}
\begin{defn} [CLH instance]
	\label{def:CLH} An instance of  
	$CLH(k,d)$ consists of a set of Hamiltonian terms (Hermitian matrices) 
	acting on $n$ qudits (particles of dimension $d$), where each term acts non-trivially 
	on at most $k$ of the $n$ qudits. The norm of each term is bounded by $1$, 
	and the terms mutually commute.
\end{defn}

To be precise, we note that as usual, the Hermitian matrices are given with 
entries represented by poly(n) bits.

We consider the cases where the CLH instance is defined on a 2D
complex. The type of complexes we allow (see definition bellow) is a generalization of a simplicial
2-complex; while in simplicial complexes the 2-cells must be 2-simplexes (triangles), 
we allow the 2-cells to be any simple polygon. Topologically speaking, we may define a simple polygon to be any set homeomorphic to the closed disk $D=\{\textbf{x}\in \mathbb{R}^2 \mid ||\textbf{x}||\leq 1\}$ with some choice of a finite amount (at least three) of points on its boundary to be called the vertices of the polygon. The arcs on the boundary which connect two adjacent vertices are called the sides of the polygon. Such complexes are often called {\it polygonal
complexes} \cite{Graph2}.

\begin{defn}[polygonal complex]
\label{def:complex}A polygonal complex $\mathcal{K}$ is a collection of points (called 0-cells or vertices),
line segments (1-cells, or edges), and simple polygons (2-cells, or faces) glued to each other such that:
\begin{enumerate}
\item Any side of a 2-cell in $\mathcal{K}$ is a 1-cell in $\mathcal{K}$. Every endpoint of a 1-cell in $\k$ is a 0-cell in $\k$.
\item The intersection of any two distinct 2-cells of $\mathcal{K}$ is either empty or else it is a single 1-cell (along with its endpoints). The intersection of any two distinct 1-cells of $\k$ is either empty or else it is a single 0-cell.
\end{enumerate}
If all polygons have exactly three vertices then $\k$ is called a simplicial 2-complex. 
The 1-skeleton of $\k$ is by definition the graph obtained
by removing all 2-cells from $\k$. Finally, $\k$ is called two dimensional (2D)
if the topological space which it defines $\s=\bigcup \k$  is a surface.
\end{defn}

By surface we mean the topological definition of a surface\footnote{In many texts (e.g \cite{Hatcher}) second countability and Hausdorff are required in the definition as well. In our case however, we are only considering finite polygonal complexes which always satisfy these two conditions.} allowing boundaries \cite{surfaces}; that is a topological space such that each point in the interior has a neighborhood homeomorphic to $\mathbb{R}^2$ whereas each point in the boundary has a neighborhood which is homeomorphic to the the upper plane $\{(x,y)\in \mathbb{R}^2\mid y\geq 0\}$. We shall remark that if $\k$ is finite (which will be the only case
we consider) then $\s$ is compact. If in addition $\s$ has no boundary (in the ordinary topological sense) then we say that $\s$ (and thus also $\k$) is closed.

Note that 2D polygonal complexes have the property that every 1-cell is the face of at most two 2-cells (one if that 1-cell is in the boundary, and two if it is in the interior). That is because if 3 or more 2-cells are attached at that 1-cell then the neighborhoods of points in the interior of that 1-cell are neither homeomorphic to  $\mathbb{R}^2$ nor to the upper plane.

The 1-skeleton of $\k$ admit the natural graph metric in which the distance between any two vertices is the length of the minimal path between them, where the length of every edge is 1.

\begin{defn} [triangulation]
A \emph{triangulation} of a topological space $X$ is a finite simplicial 2-complex $\mathcal{T}$ together with a homeomorphism $f:\mathcal{T}\rightarrow X$. The 2-cells of $\mathcal{T}$ are called the \emph{triangles} of the triangulation. 
\end{defn}

The following definition is inspired by metric geometry in which hyperbolic spaces are roughly defined to be metric spaces which have only $r$-slim triangles - triangles which do not contain any ball of radius $r$; whereas elliptic metric spaces are such which have a bound on the {\it diameter} of triangles \cite{metric}.

\begin{defn} [quasi-euclidean 2D complex]
	\label{def:qe}
	Let $\k$ be a 2D polygonal complex with underlying surface $S$. A triangulation of $\s$ is said to be $(r,R)-$quasi-Euclidean for some $0<r<R$ if each of its triangles contains a ball of radius $r$ in $\k$ (w.r.t metric defined above) and the subgraph in it is of diameter at most $R$. The degree of a triangulation is by definition the maximal degree of its 1-skeleton.
	In the case where $\s$ admits such a triangulation we say that $\k$ is $(r,R)$-quasi-Euclidean.
\end{defn}

We emphasize that there is no demand from the triangulation to be
in any sort in accordance with the complex structure of $\k$  
(e.g vertices of $\mathcal{T}$ do not need to be located on vertices of $\k$). 

\begin{minipage}{0.5\textwidth}
\begin{figure}[H]	
\caption{\label{fig:qeComplex} Quasi-Euclidean polygonal complex}
\includegraphics[scale=0.35]{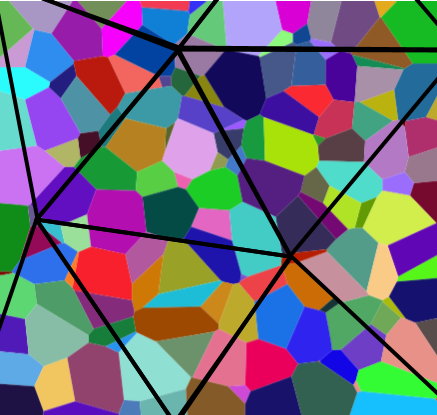}
\end{figure}
\end{minipage} 
\begin{minipage}{0.473\textwidth}
\vspace{4mm}
A triangulation $\mathcal{T}$ (dark lines) of the surface $\s$ on which the complex $\k$ lies. $\mathcal{T}$ is $(r,R)$-quasi-Euclidean with $r=2$,$R=12$ since each triangle contains a ball of radius $2$ but its diameter is less than $12$. The makes $\k$ a $(r,R)$-quasi-Euclidean complex.
Having each triangle contain a ball of radius $r\geq 2k$ (here $k=7$)
ensures that there exists a polygon which is contained in the triangle, as well as all other polygons touching it. The fact that the diameter of each triangle is at most $R$ implies that the number of edges in each triangle is bounded by a number dependent only on $R$ and $k$, by Moore's bound \cite{Graph2}.
\end{minipage}

\vspace{3mm}
\begin{defn} [$2D-CLH^*(k,d)$ instance]
\label{def:sp-CLH} Consider instances $x$ of $CLH(k,d)$ for which:

\begin{enumerate}
\item There exists a two dimensional polygonal
complex $\mathcal{K}.$
\item There exists a 1-1 mapping between qudits of $x$ and edges of $\mathcal{K}$.
\item There exists a 1-1 mapping between local terms of $x$ and the set
of vertices and faces of $\mathcal{K}$. 
\item If $h$ corresponds to a vertex $v$ then the set of qudits $\{q_{1},...,q_{r}\}$
which $h$ acts on corresponds to the set of edges $\{e_{1},...,e_{r}\}$
attached to $v$.
\item If $h$ corresponds to a face $f$ then the set of qudits $\{q_{1},...,q_{r}\}$
which $h$ acts on corresponds to the set of edges $\{e_{1},...,e_{r}\}$
which are in the boundary of $f$.
\end{enumerate}
We consider the restriction of this class to quasi-Euclidean complexes - those which admit a $(r,R)$-quasi-Euclidean triangulation of degree $D$, for some arbitrary constants $D>0$ 
and $R>r>2k$. We call such $2D-CLH^*(k,d)$ instances {\it quasi-Euclidean}.
\end{defn}

The quasi-Euclidean condition doesn't limit the topology in any way. Specifically, for any compact surface $\s$ there exists a quasi-Euclidean polygonal complex $\k$  such that $\s$ is its underlying surface (i.e $\s=\bigcup\k$) \cite{surfaces}.  This condition is needed only in Section \ref{sec:2local}. Hence in the following we ignore it and treat general $2D-CLH^*(k,2)$ instances; only in Section \ref{sec:2local} we will mention this condition again.

Another possible way to define a CLH on a 2D polygonal complex
is to place the qudits on the {\it vertices} rather than the edges, and then local terms are associated
with faces alone. We denote the class of such instances by $2D-CLH(k,d)$ (i.e without the star symbol) - the definition goes on the same line as Definition \ref{def:sp-CLH} though it is presented formally in the appendix - definition \ref{def:2D-CLH}.

The second definition captures the notion of a 2D system in a more general way. However, if our results can be generalized to $2D-CLH^*(k,d)$ for arbitrary $d$, this will in fact imply that they also hold for  $2D-CLH(k,d)$, under a mild condition similar to quasi-Euclidity (see Appendix \ref{apx:2d}).

To each of those classes corresponds the local Hamiltonian problem of deciding, given $a<b$ with $b-a<\frac{1}{poly(n)}$, whether
the ground energy of the system (i.e the sum of all local terms) is bellow $a$ or above $b$, provided the promise that one of these cases hold.
We use the same notation to denote both the class of such instances
(as in Theorem 2) and the corresponding decision problem (as in Theorem 1).

\section{Notation and Background}
\label{sec:background}
\subsection{Notations}
\label{subsec:notation}
Throughout this paper we use $\mathcal{H}$ to denote Hilbert spaces,
$q$ to denote qubits, and accordingly $\mathcal{H}_{q}$ to denote
the Hilbert space associated with the qubit $q$. $\mathcal{K}$ denotes
the complex on which the $\sp$ is defined whereas $\s$ denotes its underlying surface. We use $s$ to denote stars,
$p$ to denote plaquettes and let $|s|$ and $|p|$ denote the degree
of a star or a plaquette, i.e the number of edges which belong to $s$ or to $p$. $A_{s}$
denotes the local term which corresponds to $s$ and $B_{p}$ denotes
the local term which corresponds to $p$. $h$ denotes a local
term in general. We say that two stars (plaquettes) are adjacent if they share an edge,
and say that a star and plaquette are adjacent if they share two edges
(which is the only way a star and a plaquette can intersect). When more geometrical aspects are discussed we will
consider vertices instead of stars denoted by $v$, edges instead
of qubits denoted by $e$ and faces instead of plaquettes denoted
by $f$. We let $H$ denote
the sum of all local terms $H=\sum_{s}A_{s}+\sum_{p}B_{p}$ where
$s$ and $p$ range over the stars and plaquettes of the instance. When we construct a punctured Hamiltonian, i.e a Hamiltonian obtained by removing some terms from the original one, we will always denote it by $\tilde{H}$.

\subsection{The induced algebra}
\begin{defn}[induced algebra]
\label{def:induced-algebra}Let $h$ be an operator on a tensor product
Hilbert space $\mathcal{H}_{q_{1}}\otimes\mathcal{H}_{q_{2}}$ and
let $h=\sum_{i=1}^{m}h_{q_{1}}^{i}\otimes h_{q_{2}}^{i}$ be a Schmidt
decomposition\footnote{that is to say: $h_{q_{1}}^{i}\in\mathcal{L}\left(\mathcal{H}_{q_{1}}\right)$,
$h_{q_{2}}^{i}\in\mathcal{L}\left(\mathcal{H}_{q_{2}}\right)$  for each $i$ and the that sets $\left\{h^i_{q_1}\right\}^m_{i=1}$, $\left\{h^i_{q_2}\right\}^m_{i=1}$ are orthogonal with respect to the Hilbert-Schmidt inner product i.e $tr({h^i_{q_l}}^\dagger\cdot{h^j_{q_l}})=0$ for any $i\ne j$ and $l=1,2$)} of $h$.
The induced algebra of $h$ on $\mathcal{H}_{q_{1}}$ is denote by
$\mathcal{A}_{\mathcal{H}_{q_{1}}}^{h}$ or in short $\mathcal{A}_{q_{1}}^{h}$
and is defined to be the C*-algebra generated by $\{ I \} \cup \{ h^i_{q_1}\}^m_{i=1} $ ($I$ denotes the identity operator).
\end{defn}

\subsection{Classical qubits}
\label{subsec:classical}

The equivalence to the toric code which we are aiming for can be shown only after performing a certain reduction of removing "classical qubits". 
Classical qubits are classical in the sense that they do not participate
in the entanglement of the system and consequently, the prover may hand us its correct value as a classical bit. 

\begin{defn} [trivial qubit]
\label{def:classical}A qubit (or qudit) is called trivial, if no
local term acts on it non-trivially. 
\end{defn} 

\begin{defn} [classical qubit] 
A qubit (or qudit) is called classical if its
Hilbert space can be decomposed into a direct sum of 1-dimensional
subspaces which are invariant under all local terms in the Hamiltonian $H$.
\end{defn}

When we say that a Hamiltonian $h$ acts trivially on a certain qubit we simply mean that it can be written as $h=I\otimes h^\prime$ where $I$ is the identity operator on that qubit, and $h'$ acts only on other qubits.

Note that due to the low dimension of qubits, once such a non-trivial
direct sum decomposition exists then the subspaces must be one dimensional
and so the qubit is classical. Note also that every trivial qubit
is in particular classical - any direct sum decomposition will do. The following claim says that whenever there is a classical qubit $q$, the instance can be reduced to a new instance in which it is a trivial qubit.

\begin{claim} [removing classical qubits]
\label{claim:no-classical-qubits} To derive theorems 1,2 it is sufficient to prove it under the restriction of $2D-CLH^*(k,2)$ to instances with the condition that every classical qubit is trivial.
\end{claim}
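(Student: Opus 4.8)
The plan is to peel off classical qubits one at a time: each classical qubit is turned into a trivial qubit at the cost of a single extra bit supplied by the prover, while keeping the instance inside $\sp$, and the reduction is then iterated at most $n$ times. The core is a direct-sum decomposition of the Hamiltonian along the classical qubit. Fix an instance with $H=\sum_s A_s+\sum_p B_p$ and a classical qubit $q$. By definition $\mathcal{H}_q=L_0\oplus L_1$ with each $L_b$ one-dimensional and invariant under every local term acting on $q$; since the terms are Hermitian, $L_0^{\perp}$ is invariant as well, so we may take the decomposition orthogonal and fix an adapted orthonormal basis $\ket{0}_q,\ket{1}_q$ of $\mathcal{H}_q$. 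In this basis every term $h$ touching $q$ is block diagonal in the $q$-register, $h=\ketbra{0}{0}_q\otimes h_0+\ketbra{1}{1}_q\otimes h_1$ with $\norm{h_v}\le\norm{h}\le 1$, and hence $H=\ketbra{0}{0}_q\otimes H_0+\ketbra{1}{1}_q\otimes H_1$ for operators $H_0,H_1$ on the remaining $n-1$ qubits, with ground energy $E_0(H)=\min\{E_0(H_0),E_0(H_1)\}$.

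Given a value $v\in\{0,1\}$ supplied by the prover, define the reduced instance by replacing each term $h$ acting on $q$ with $h':=I_q\otimes h_v$ and leaving every other term unchanged. This is again a $\sp$ instance on the same complex $\k$: the qubits are unchanged ($q$ has simply become trivial, i.e. a boundary/coboundary edge), each new term has norm $\le 1$ and support among the edges around its vertex or face, and the terms still mutually commute --- for a pair $h_1,h_2$ one restricts $h_1h_2=h_2h_1$ to the invariant subspace $\ket{v}_q\otimes\mathcal{H}_{\mathrm{rest}}$ to get $(h_1)_v(h_2)_v=(h_2)_v(h_1)_v$, the case where one of them does not act on $q$ being identical (its ``$v$-block'' being itself). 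Moreover $H'=I_q\otimes H_v$, so $E_0(H')=E_0(H_v)\ge E_0(H)$ for every $v$, with equality when $v$ is the minimizing index.

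These two facts let both theorems propagate back. For Theorem 1, a witness for the general instance is $v$ together with a witness for the reduced instance; the verifier checks on the given matrices that $q$ is classical with the claimed basis, builds $H'$, and runs the verifier for the restricted class. On a YES instance the honest prover picks the minimizing $v$, making $H'$ a YES instance, while on a NO instance $E_0(H')\ge E_0(H)>b$ no matter what $v$ the prover chooses, so $H'$ is a NO instance; completeness and soundness are inherited. For Theorem 2, since $q$ is trivial in $H'$, a circuit preparing a groundstate of $H'$ followed by swapping qubit $q$ with a fresh ancilla prepared in $\ket{v}_q$ outputs (on the $n$ original qubits) a state supported on the groundspace of $H$. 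Finally one iterates: making qubits trivial can only create new classical qubits among the still-nontrivial ones, and each round turns at least one nontrivial qubit into a trivial one and never the reverse, so after at most $n$ rounds no classical nontrivial qubit remains, and composing the reductions proves the claim.

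The delicate point is not the algebra but making the reduction effective. The verifier must confirm efficiently that every qubit the prover declares classical is genuinely classical in the current Hamiltonian with the stated basis --- a finite check on the $2\times2$ $q$-blocks of its terms --- and, more importantly, that the final instance has no classical nontrivial qubit left, so that the restricted-class hypothesis is actually met. This can be handled by observing that a common invariant one-dimensional subspace of the $q$-blocks of the terms around $q$ is precisely a common root of an explicit finite family of binary quadratic forms read off from those blocks, so both the decision and, when it fails, an adapted basis are computable (via gcd/resultants); alternatively the prover supplies a short non-classicality certificate for each remaining qubit. The adapted basis of a classical qubit may be irrational, but it lies in an explicit algebraic extension of bounded degree of the field generated by the matrix entries, so the reduction --- and the extra single-qubit gate $\ket{0}\mapsto\ket{v}_q$ in the circuit case --- can be carried out symbolically without loss of precision.
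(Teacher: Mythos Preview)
Your proof is correct and follows essentially the same approach as the paper: the prover iteratively supplies, for each classical qubit, the index of the one-dimensional invariant summand containing a groundstate, and the verifier restricts each local term accordingly, yielding a new $\sp$ instance in which that qubit is trivial; soundness follows because restriction can only raise the ground energy, and completeness because an honest prover can always pick the summand meeting the groundspace. You are somewhat more explicit than the paper about the verifier checking that declared qubits are genuinely classical, about termination, and about handling Theorem~2's circuit (tensoring in $\ket{v}_q$), but these are refinements of the same argument rather than a different route.
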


\begin{proof}
Appendix \ref{apx:classicalQubits}	
\end{proof}

Thus, we shall assume from now on that all classical qubits were turned to be trivial qubits.

\section{Generating a toric code state}
\label{sec:toric}
The toric code is a special case of a $\sp$ instance. We shall not restrict to the particular setting of a grid on a torus, so by saying toric code we refer to any $\sp$ instance defined on a closed complex $\k$ (i.e it topologically has no boundary) with the usual star and plaquette local terms (equation \ref{eq:toric}). 

Starting with the state $\ket{0}^{\otimes n}$, we measure {\it all} plaquettes and record the measurement results by $\bar{\lambda} =  (\lambda_p)_p$ (where $\lambda_p=\pm 1$). As a result, the system collapses to a state corresponding to the measured values:
$\ket{\psi_{\bar{\lambda}}}$.  Note that $\ket{\psi_{\bar{\lambda}}}$ is a toric code state (i.e a groundstate of the Hamiltonian given in equation \ref{eq:toric}) precisely when  $\lambda_p = 1$ for each plaquette $p$. 

Whenever we have two plaquettes $p_1$,$p_2$ with $\lambda_{p_1}=\lambda_{p_2}=-1$ we can connect them by a copath $\gamma^*$, apply $L^*=\bigotimes_{e\in\gamma^*}Z_e$, and obtain a new state $\ket{\psi_{\bar{\lambda^\prime}}}$ where $\lambda$ and $\lambda^\prime$ are the same except for the value on the plaquettes $p_1$,$p_2$ (see Appendix \ref{apx:logical}). In other words, a pair of plaquette terms which are in their excited state can always be relaxed. After matching pairs of excitations, and annihilating them by applying string operators between
them, we obtain a toric code state.  It is thus left to show that such a matching always exists:

\begin{claim} [even amount of excitations]
The number of plaquettes $p$ for which $\lambda_p=-1$ is even.
\end{claim}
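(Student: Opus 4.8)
The plan is to exploit the single dependency relation among the plaquette terms that holds on a closed complex, namely $\prod_p B_p = \identity$, combined with the fact that $\ket{\psi_{\bar{\lambda}}}$ is by construction a simultaneous eigenstate of all the $B_p$.

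First I would establish this dependency relation. Since $\k$ is closed, its underlying surface $\s$ has no boundary, so every $1$-cell of $\k$ lies in the interior of $\s$; as noted right after Definition \ref{def:complex}, such a $1$-cell is then a side of exactly two $2$-cells. Hence in the product $\prod_p B_p = \prod_p \bigotimes_{e\in p}\sigma_x^e$ each edge operator $\sigma_x^e$ occurs exactly twice, once for each of the two plaquettes containing $e$; since all these operators commute and $(\sigma_x^e)^2 = I$, the product collapses to $\identity$.

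Next I would use the construction of $\ket{\psi_{\bar{\lambda}}}$: it is obtained from $\ket{0}^{\otimes n}$ by a projective measurement of the pairwise-commuting observables $\{B_p\}_p$ with recorded outcomes $\bar{\lambda} = (\lambda_p)_p$, $\lambda_p = \pm 1$, so that $B_p\ket{\psi_{\bar{\lambda}}} = \lambda_p\ket{\psi_{\bar{\lambda}}}$ for every $p$, and $\ket{\psi_{\bar{\lambda}}}$ is a nonzero (normalized) vector since Born's rule assigns positive probability only to outcomes whose joint spectral projection does not annihilate $\ket{0}^{\otimes n}$. Evaluating $\prod_p B_p$ on $\ket{\psi_{\bar{\lambda}}}$ in two ways then gives
$$\ket{\psi_{\bar{\lambda}}} = \identity\,\ket{\psi_{\bar{\lambda}}} = \Big(\prod_p B_p\Big)\ket{\psi_{\bar{\lambda}}} = \Big(\prod_p\lambda_p\Big)\ket{\psi_{\bar{\lambda}}},$$
whence $\prod_p\lambda_p = 1$, i.e. the number of plaquettes with $\lambda_p = -1$ is even.

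I do not expect a genuine obstacle here: the whole argument is immediate once the dependency relation is available, and the only substantive ingredient is the topological fact — already recorded in the excerpt — that on a closed complex every edge borders exactly two plaquettes. (The symmetric relation $\prod_s A_s = \identity$ would yield the analogous parity statement for stars, but it plays no role in this claim.)
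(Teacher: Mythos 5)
Your proof is correct and follows the paper's argument exactly: both use the dependency relation $\prod_p B_p = \identity$ on a closed complex together with $\ket{\psi_{\bar\lambda}}$ being a simultaneous eigenstate of all the $B_p$ to conclude $\prod_p \lambda_p = 1$. You merely spell out the justification for the dependency relation (each edge lies in exactly two plaquettes) and for the eigenstate property, which the paper takes for granted.
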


\begin{proof}
Since $\k$ is closed so $\prod_p B_p = \identity$ (and also $\prod_s A_s = \identity$). Therefore:
$$\ket{\psi_{\bar{\lambda}}}=\identity\ket{\psi_{\bar{\lambda}}}=
\prod_{p}B_{p}\ket{\psi_{\bar{\lambda}}}=\prod_{p}\lambda_{p}\ket{\psi_{\bar{\lambda}}}=(\prod_{p}\lambda_{p})\ket{\psi_{\bar{\lambda}}}$$ It follows that $\prod_{p}\lambda_{p}=1$.
\end{proof}

This is summarized by the following algorithm.

\subsubsection{Algorithm - constructing a toric code state (folklore):}
\label{alg:toric} 
\begin{enumerate}
\item Start with the tensor product state $\ket 0^{\otimes n}$.
\item For each star $p$ measure $B_{p}$ and record the measured value
$\lambda_{p}$.
\item As long as $-1\in\left\{ \lambda_{p}\right\} _{p}$ choose two stars
$p_{1},p_{2}$ with $\lambda_{p_{1}}=\lambda_{p_{2}}=-1$, find a
copath $\gamma^*$ connecting them (with some linear time path-finding classical
algorithm) and apply $Z$ along that copath, that is the operator $L=\bigotimes_{q\in\gamma} Z_q$. Then change the values of $\lambda_{p_{1}},\lambda_{p_{2}}$
from $-1$ to $1$. 
\end{enumerate}

It is not hard to be convinced that a similar approach works also for a variation of the toric code where each term is as in the toric code but with some scalar factor - we explain this in Appendix \ref{apx:Defected-toric-code}. This remark is relevant since the equivalence to the toric code (which we formulate in the following section) allows such factors.

\section{Equivalence to the toric code} 
\label{sec:equiv}
We now formulate the notion of equivalence between general $2D-CLH^*(k,d)$ instances and the toric code. 

\begin{defn} [boundary/coboundary qubit]
\label{def:boundary}A qubit is said to be in the \emph{boundary} of the
system if it is acted non-trivially by at most one plaquette; it
said to be in the \emph{coboundary} of the system if it is acted non-trivially by at most one star. Other qubits are said to be in the \emph{interior}. A local term which acts only on interior qubits is said to be in the \emph{interior of the system}.
\end{defn}

Qubits that live on edges which are topologically on the boundary of 
the manifold are of course in the boundary of the system;  however 
qubits which are (topologically) in the interior of the manifold can also 
be in the boundary/coboundary of the {\it system} if 
a Hamiltonian term {\it acts trivially} on them.
When this happens, these qubits serve, in spirit, as ``holes''. We will later exploit this fact in order to tear out holes only in the {\it interior} of the system to obtain the 2-local punctured Hamiltonian and a constant depth circuit that generates groundstate for it. 

Following \cite{BV}, we will make use of the notion of induced algebras (Definition \ref{def:induced-algebra}) of any term in the Hamiltonian, on any set of qubits it acts on. 
The induced algebra from a star (plaquette) term $s$ ($p$) on qubits 
$q_1,...q_r$ is denoted 
$\mathcal{A}_{q_{1},...,q_{r}}^{s}$ ($\mathcal{A}_{q_{1},...,q_{r}}^{p}$).   
We can now state the definition of 
equivalence to the toric code:

\begin{defn} [equivalence to the toric code permitting boundaries]
\label{def:equiv}
An instance of $2D-CLH^*(k,2)$ is said to be {\em equivalent
to the toric code} if its underlying surface   
$\mathcal{S}$ is closed (it topologically doesn't have boundary) and there
exists a choice of basis for each qubit such that 
$A_{s}\in\left\langle Z^{\otimes|s|}\right\rangle \backslash{} \mathbb{C}\cdot I$,
$B_{p}\in\left\langle X^{\otimes|p|}\right\rangle \backslash{} \mathbb{C}\cdot I $
for any $s,p$. 

An instance is said to be {\em equivalent to the toric code permitting boundaries} if there exists a choice of basis for each qubit such that:

\begin{enumerate}
\item $\mathcal{A}_{q_{1},...,q_{r}}^{s}
=\left\langle Z^{\otimes r}\right\rangle $
for any star $s$, for $(q_{1},...,q_{r})$ a copath
of qubits of $s$ which are not in the coboundary, with no two consecutive 
qubits in the boundary. 
\item $\mathcal{A}_{q_{1}^{\prime},...,q_{r}^{\prime}}^{p}=\left\langle X^{\otimes r}\right\rangle $
for any plaquette $p$, for $(q_{1}^{\prime},...,q{}_{r}^{\prime})$
a path of qubits of $p$ which are not in the boundary, with no two 
consecutive qubits in the coboundary.
\end{enumerate}

\end{defn}

\begin{thm} [equivalence to the toric code permitting boundaries]
\label{thm:equiv}
Every $2D-CLH^*(k,2)$ instance (after removing all classical qubits as described in subsection \ref{subsec:classical}) is equivalent
to the toric code permitting boundaries. In particular, if it has
no qubits which are in the boundary or in the coboundary then it is equivalent to the toric
code. 

\end{thm}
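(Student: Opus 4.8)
The plan is a local-to-global analysis of the induced algebras of the star and plaquette terms, in the spirit of Bravyi--Vyalyi \cite{BV} but exploiting the rigidity of a $2D$ complex: around each edge sit exactly two stars (its endpoints) and at most two plaquettes (its faces), and an adjacent star/plaquette pair overlaps in \emph{exactly two} edges. Throughout I use the lemma that commuting terms acting on qubit sets overlapping in a set $C$ have commuting induced algebras on $C$; the classification of self-adjoint subalgebras of $M_{2}(\mathbb{C})$ (only $\mathbb{C}I$, an abelian $\langle P\rangle$ with $P$ non-scalar Hermitian, or all of $M_{2}$); and the standing reduction (Claim~\ref{claim:no-classical-qubits}) that a qubit on which two or more terms act non-trivially is not classical, i.e.\ its family of induced algebras is not simultaneously diagonalizable. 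The first step is the local normal form at an edge $e$ with endpoints $u,v$ and faces $f,f'$. If $e$ is not in the coboundary then $A_{u},A_{v}$ both act non-trivially on $e$, their induced algebras on $e$ commute and are self-adjoint and $\neq\mathbb{C}I$, hence coincide: $\mathcal{A}^{A_{u}}_{e}=\mathcal{A}^{A_{v}}_{e}=\langle P_{e}\rangle$ for a common non-scalar Hermitian $P_{e}$; symmetrically $\mathcal{A}^{B_{f}}_{e}=\mathcal{A}^{B_{f'}}_{e}=\langle Q_{e}\rangle$ if $e$ is not in the boundary. If $e$ is interior, all four terms act non-trivially, $e$ is not classical, so the only possibly non-commuting pair among these four algebras, namely $(\langle P_{e}\rangle,\langle Q_{e}\rangle)$, must indeed fail to commute.

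Non-commutation only says that, as Bloch directions, $P_{e}$ and $Q_{e}$ are non-parallel, whereas the toric-code form requires them \emph{orthogonal}, and supplying this is the crux. Here one uses the \emph{second} shared edge: let $f$ be a face incident to $v$ with $e,e'$ its two edges at $v$; then $[A_{v},B_{f}]=0$ and their overlap is exactly $\{e,e'\}$, so $\mathcal{A}^{A_{v}}_{\{e,e'\}}$ and $\mathcal{A}^{B_{f}}_{\{e,e'\}}$ commute. Since every $\{e,e'\}$-component of $A_{v}$ has its $e$-part in $\langle P_{e}\rangle$ and its $e'$-part in $\langle P_{e'}\rangle$, we get $\mathcal{A}^{A_{v}}_{\{e,e'\}}\subseteq\langle P_{e}\rangle\otimes\langle P_{e'}\rangle$, reaching both factors non-trivially (as $e,e'$ are non-coboundary); a short case check against commutation with $\mathcal{A}^{B_{f}}_{\{e,e'\}}$ (which, when $e,e'$ are interior, reaches both factors through the non-parallel $Q_{e},Q_{e'}$) excludes the over-large $4$-dimensional and the spurious $3$-dimensional possibilities and leaves the correlated algebra $\langle P_{e}\otimes P_{e'}\rangle$; the explicit $2\times2$ commutator identity for $(\vec p\cdot\vec\sigma)\otimes(\vec p'\cdot\vec\sigma)$ versus $(\vec q\cdot\vec\sigma)\otimes(\vec q'\cdot\vec\sigma)$ then forces $\vec p\perp\vec q$ and $\vec p'\perp\vec q'$ given $\vec p\not\parallel\vec q$ and $\vec p'\not\parallel\vec q'$. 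Hence on every interior edge $P_{e}\perp Q_{e}$; I fix a basis of $\mathcal{H}_{e}$ with $\langle P_{e}\rangle=\langle Z\rangle$ and $\langle Q_{e}\rangle=\langle X\rangle$, and on a boundary (resp.\ coboundary) edge, where only $P_{e}$ (resp.\ $Q_{e}$) is constrained, I take $\langle P_{e}\rangle=\langle Z\rangle$ (resp.\ $\langle Q_{e}\rangle=\langle X\rangle$).

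With these bases fixed the globalization is short. Given a star $s=A_{v}$ and a copath $(q_{1},\dots,q_{r})$ of its qubits none of which is in the coboundary and no two consecutive of which are in the boundary, consecutive $q_{i},q_{i+1}$ lie on a common face $f_{i}$ at $v$; every $q_{i}$-component of $A_{v}$ is diagonal in $\langle Z_{q_{i}}\rangle$, so $\mathcal{A}^{s}_{q_{1},\dots,q_{r}}$ is spanned by diagonal operators $\sum_{T\subseteq[r]}c_{T}Z^{T}$, and the pair analysis applied to $(A_{v},B_{f_{i}})$---using that at least one of $q_{i},q_{i+1}$ is interior, which furnishes the required non-parallelism---gives $\mathcal{A}^{s}_{\{q_{i},q_{i+1}\}}=\langle Z_{q_{i}}Z_{q_{i+1}}\rangle$, i.e.\ $c_{T}=0$ unless $q_{i},q_{i+1}$ are both in $T$ or both outside it, and $c_{T}\neq0$ for some $T\supseteq\{q_{i},q_{i+1}\}$. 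Chaining over $i=1,\dots,r-1$ kills every $T\notin\{\varnothing,[r]\}$ and keeps $T=[r]$, so $\mathcal{A}^{s}_{q_{1},\dots,q_{r}}=\langle Z^{\otimes r}\rangle$. The plaquette statement is the mirror image ($Z\leftrightarrow X$, boundary $\leftrightarrow$ coboundary, star $\leftrightarrow$ plaquette, copath $\leftrightarrow$ path). When there are no boundary or coboundary qubits at all, every star and plaquette has all its qubits eligible and the surface is closed, so this specializes to $A_{s}\in\langle Z^{\otimes|s|}\rangle\setminus\mathbb{C}I$ and $B_{p}\in\langle X^{\otimes|p|}\rangle\setminus\mathbb{C}I$ --- plain equivalence to the toric code (up to the harmless scalar factors allowed in Section~\ref{sec:toric}).

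\textbf{Main obstacle.} The delicate part is the orthogonality upgrade of the second paragraph: turning ``the star- and plaquette-induced algebras on $e$ do not commute'' into ``they anticommute'', which is exactly what the toric-code normal form demands. This requires localizing the global relation $[A_{v},B_{f}]=0$ to the two shared edges in precisely the right way, and controlling the induced algebra of a star on a \emph{pair} of its own qubits. Entangled with it is the bookkeeping near the boundary/coboundary --- deciding exactly which sub-copaths of a star and sub-paths of a plaquette stay ``clean'', which is what the ``not in the (co)boundary / no two consecutive in the (co)boundary'' clauses of Definition~\ref{def:equiv} encode --- together with the separate handling of degenerate local pictures in $\k$ (loop edges, bigons, or a face/vertex meeting a given term in more than the generic pattern), most of which the polygonal-complex axioms already rule out.
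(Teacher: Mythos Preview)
Your overall architecture is the same as the paper's: single-edge analysis to get $\langle P_e\rangle,\langle Q_e\rangle$; a two-qubit lemma on an adjacent star/plaquette pair to upgrade to the correlated forms $\langle Z\otimes Z\rangle,\langle X\otimes X\rangle$; then an induction/chaining along the copath (resp.\ path) using $\mathcal{A}^{s}_{q_1,\dots,q_{r+1}}\subseteq\mathcal{A}^{s}_{q_1,\dots,q_r}\otimes\mathcal{A}^{s}_{q_{r+1}}$ to propagate. The globalization you wrote is essentially the paper's induction, rephrased via the parity constraint on $T$, and is fine.

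The place where your outline is thinner than it needs to be is the two-qubit step, which in the paper is Lemma~\ref{lem:strong-commutation}. You describe it as ``a short case check'' that pins $\mathcal{A}^{s}_{\{e,e'\}}$ down to $\langle P_e\otimes P_{e'}\rangle$, followed by the clean commutator identity $[(\vec p\cdot\vec\sigma)\otimes(\vec p'\cdot\vec\sigma),(\vec q\cdot\vec\sigma)\otimes(\vec q'\cdot\vec\sigma)]=0\Rightarrow \vec p\perp\vec q,\ \vec p'\perp\vec q'$. The identity is correct and is a nice way to phrase the \emph{last} step, but it only fires once both pair-algebras are already known to be two-dimensional and correlated. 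That is exactly what is not short: inside $\langle P_e\rangle\otimes\langle P_{e'}\rangle$ there are, besides the full $4$-dimensional algebra and $\langle P_e\otimes P_{e'}\rangle$, several $3$-dimensional subalgebras and also $2$-dimensional ones such as $\langle I-2\pi_{(-,-)}\rangle$ that \emph{do} reach both factors non-trivially, so ``reaches both factors'' alone does not isolate the correlated algebra. Excluding these uses commutation with $\mathcal{A}^{p}_{\{e,e'\}}$, but you have not yet pinned \emph{that} algebra down either, so there is a bootstrapping problem. The paper resolves this by a direct computation: write a generic $T=uZ\otimes I+vI\otimes Z+wZ\otimes Z\in\mathcal{A}^{s}_{\{e,e'\}}$ and a generic $R=I\otimes R_1+P\otimes R_2\in\mathcal{A}^{p}_{\{e,e'\}}$ (with $P$ the single-qubit generator on the interior edge), and by repeatedly forming $T^2$, $uvT-wT^2$, etc.\ force $u^2=v^2=w^2$ and then derive from $[T,R]=0$ that $R_1,R_2\in\langle Z\rangle$, making the second qubit classical---a contradiction. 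This is the content you are packaging as a ``short case check''.

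A second, related point: your orthogonality argument invokes non-parallelism of $Q$ on \emph{both} edges. Definition~\ref{def:equiv} only forbids \emph{two consecutive} boundary (resp.\ coboundary) qubits, so in the star case you must handle the pair $(q_i,q_{i+1})$ where one of them is in the boundary and hence $\dim\mathcal{A}^{p}_{q_{i+1}}$ need not be $2$. The paper's Lemma~\ref{lem:strong-commutation} is stated precisely for this asymmetric situation (only $\dim\mathcal{A}^{s}_{q_1}=\dim\mathcal{A}^{s}_{q_2}=\dim\mathcal{A}^{p}_{q_1}=2$ assumed) and gets $\mathcal{A}^{s}_{q_1,q_2}=\langle Z\otimes Z\rangle$ by contradiction with ``$q_2$ not classical'' rather than with ``$q_2$ interior''. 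Your sketch should make that asymmetric case explicit; the Bloch-vector identity as you state it does not cover it.
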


The proof of this theorem is in Appendix \ref{apx:equiv}. 
It is based on a classification of the possible induced algebras of a
Hamiltonian on a single qubit in the interior (Lemma \ref{lemma:qubit-algebra}, 
 \ref{lem:schurs-lemma}) which shows that these algebras 
are always generated by a {\it single}
 Pauli operator (i.e., an operator which 
is equal to a Pauli matrix up to a change of basis). 
Moreover, the main technical part is 
captured by Lemma \ref{lem:strong-commutation} which provides a 
severe restriction on the induced algebras on {\it pairs} of qubits
(which are in the interior, roughly), essentially  
showing that they must be similar to those of the toric code. 
This analysis involves a close and fairly technical
study of the implication of the commutation relations between 
the Hamiltonians on the algebras that they induce. 

An immediate implication of Theorem \ref{thm:equiv} is that we now know how to generate a groundstate for any $2D-CLH^*(k,2)$ instance which has no qubits in the boundary or coboundary of the system, since such instances are equivalent to the toric code (see Appendix \ref{apx:Defected-toric-code} for a more detailed explanation).

\section{Construction of punctured Hamiltonian}  \label{sec:puncture}

We are now ready to show how we can generate a groundstate of an arbitrary quasi-Euclidean $2D-CLH^*(k,2)$  instance, even when there are qubits in the boundary/coboundary.

\begin{defn} [access to the boundary/coboundary]
\label{def:access} 
A star $s$ is said to have \emph{access to the coboundary} if there
exists a path $\gamma$ starting from $s$ which ends at a coboundary
edge such that $L=\bigotimes_{q\in\gamma}X_{q}$ anti-commutes with
$A_{s}$ and commutes with any other local term. Similarly, a plaquette
$p$ is said to have \emph{access to the boundary} if there exists
a copath $\gamma^{*}$ starting from $p$ which ends at a boundary edge such
that $L^{*}=\bigotimes_{q\in\gamma^{*}}Z_{q}$ anti-commutes with
$B_{p}$ and commutes with any other local term.
\end{defn}

Access to the boundary or coboundary means that either $L^*$ or $L$ serve as an appropriate logical operator for the corresponding plaquette or star respectively (see Appendix \ref{apx:logical} for more about logical operators in surface codes).

\begin{lem} [{\bf Maim lemma:} access to the boundary/coboundary]
\label{lem:either-path-or-copath} 
Let $s,p$ be adjacent star and
plaquette which are in the interior of the system. Then either $s$ has access to
the coboundary, or $p$ has access to the boundary.
\end{lem}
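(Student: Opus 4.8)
\medskip

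The plan is to argue by contradiction: suppose $s$ has no access to the coboundary and $p$ has no access to the boundary, and derive a violation. The starting point is Theorem~\ref{thm:equiv}: after the reduction removing classical qubits, the instance is equivalent to the toric code permitting boundaries, so in a suitable local basis $A_s$ lives in $\langle Z^{\otimes r}\rangle$ along copaths through $s$ (avoiding coboundary qubits and consecutive boundary qubits), and $B_p$ lives in $\langle X^{\otimes r'}\rangle$ along paths through $p$ (avoiding boundary qubits and consecutive coboundary qubits). So locally around $s$ and $p$ the terms genuinely behave like toric-code stabilizers, and access to the boundary/coboundary is exactly the existence of a string operator (a path of $X$'s from $s$, or a copath of $Z$'s from $p$) that anticommutes with the one term it should and commutes with everything else.

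\medskip

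First I would set up the ``where can a string go?'' combinatorics. Building a path $\gamma$ of $X$-operators starting at $s$: at each interior star along the way the path must enter and exit through exactly two edges (so that $\prod X$ commutes with that star's $Z$-type term), and at each interior plaquette it passes through it must use an even number of that plaquette's edges (so it commutes with the $X$-type term there) — except we want it to anticommute with $A_s$ at the start. This is precisely a ``defect-line'' / homology-type condition: the path is free to propagate through the interior, and the only obstructions to extending it are (i) hitting a coboundary edge — which is exactly the desired termination, giving access — or (ii) getting stuck in a region that is, in the relevant sense, ``closed.'' Dually for the copath $\gamma^*$ of $Z$-operators from $p$ and boundary edges. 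So the contradiction hypothesis says: starting from $s$ the $X$-string can never reach a coboundary edge, and starting from $p$ the $Z$-string can never reach a boundary edge.

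\medskip

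Next I would turn ``the string from $s$ never reaches the coboundary'' into a statement about a sub-complex. Let $C$ be the connected region reachable from $s$ by such $X$-strings; non-access means $C$ contains no coboundary edge, i.e.\ \emph{every} star in $C$ acts nontrivially on all of its edges and every edge of $C$ is shared by two stars in the interior sense — so $C$ looks like a piece of a genuine toric code with respect to the $Z$/$X$ structure, and moreover it is ``closed'' on the coboundary side. Symmetrically, non-access of $p$ gives a region $C^*$ with no boundary edge. The key point is that $s$ and $p$ are adjacent and in the interior, so these two regions overlap around the common edges of $s$ and $p$, and in that overlap region we have a patch of honest toric code that has neither boundary nor coboundary edges. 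On such a closed patch the dependency relations $\prod A = \prod B = I$ force the usual homological duality: an $X$-string emanating from $s$ with nowhere to terminate would have to close up into a cycle, but then it would either be a product of plaquette terms (hence commute with $A_s$, contradiction) or detect a nontrivial homology class — and I'd want to show that in this finite closed patch every such obstruction is itself the boundary of something, so the string \emph{can} be routed to terminate at $s$ alone, which is absurd since a single-star excitation cannot be created in a closed region (parity of excitations). Equivalently: in a region with no boundary and no coboundary, $s$ and $p$ each can reach a ``partner'' of the same type, but adjacency of $s$ and $p$ plus the $Z$–$X$ anticommutation at their two shared edges forces one of the two strings to actually escape. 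Making this last step precise — carefully tracking the commutation of $L$ with plaquette terms that straddle the boundary of the reachable region, and using the equivalence theorem's exact form near boundary/coboundary qubits — is where the real work lies.

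\medskip

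The main obstacle I anticipate is exactly this interface analysis: the statement is clean in the pure toric code because of global homology, but here we only have \emph{local} toric-code structure guaranteed by Theorem~\ref{thm:equiv}, and the boundary/coboundary qubits are where that structure degrades. The delicate point is showing that the ``reachable region'' $C$ (resp.\ $C^*$) has a well-behaved frontier consisting only of coboundary (resp.\ boundary) edges under the contradiction hypothesis, so that one can legitimately apply the closed-surface parity/homology argument to the union region around $s$ and $p$. I expect the proof to proceed by: (1) invoking Theorem~\ref{thm:equiv} to fix the local $Z/X$ basis; (2) defining the reachable regions and characterizing their frontiers; (3) observing that non-access of both $s$ and $p$ makes the relevant region effectively a closed toric-code patch containing both a ``$Z$-type defect'' at $s$ and an ``$X$-type defect'' at $p$ that cannot be discharged; (4) deriving a contradiction with the parity/dependency relations valid on that closed patch. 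I would not expect to need the quasi-Euclidean condition here at all — only Theorem~\ref{thm:equiv} and the combinatorics of string operators on a surface complex.
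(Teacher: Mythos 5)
Your proposal takes a genuinely different route from the paper's, and it contains a gap that the paper's argument is specifically designed to close. The paper does not argue by contradiction or by homology. Instead it builds, from the edge shared by $s$ and $p$, a \emph{ribbon} (Definition \ref{def:ribbon}): a sequence of edges each consecutive pair of which is shared by a star and a plaquette, and which therefore contains both a path of stars $\gamma$ and a copath of plaquettes $\gamma^*$. The ribbon is truncated at the first boundary/coboundary edge $q_m$ it meets (connectedness and the standing assumption that such an edge exists guarantee this happens). All the technical weight then sits at $q_m$: Lemma \ref{lem:properboundary2} shows that the induced algebras $\a_{q_m}^{s_m}$ and $\a_{q_m}^{p_m}$ cannot both be the full operator algebra, and Corollary \ref{cor:properboundary2} upgrades this to the dichotomy that either $\a_{q_{m-1},q_m}^{s_m}=\langle Z\otimes Z\rangle$ or $\a_{q_{m-1},q_m}^{p_m}=\langle X\otimes X\rangle$. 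That dichotomy is what determines whether $L=\bigotimes_{q\in\gamma}X_q$ or $L^*=\bigotimes_{q\in\gamma^*}Z_q$ is the required logical operator; the commutation/anticommutation relations are then checked term by term (Claims \ref{claim:plaquettes-commute-with-L}--\ref{claim:commutation-relation-with-L*}).

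The gap in your version is twofold. First, you implicitly equate ``the string geometrically reaches a coboundary/boundary edge'' with ``access,'' but access requires that the string operator actually commutes with every term near the terminus --- and this commutation is precisely what Theorem \ref{thm:equiv} does \emph{not} guarantee at boundary/coboundary qubits, where the controlled toric-code structure degrades. Deriving that commutation is the content of Lemma \ref{lem:properboundary2}, for which your proposal has no analogue; you even flag this interface analysis as ``where the real work lies,'' which is accurate, but it means the argument is not yet a proof. Second, your ``closed patch'' parity argument does not hold on a sub-region: the dependency relations $\prod_s A_s=\prod_p B_p=I$ are a \emph{global} property of a closed surface, and a bounded region with a frontier does not inherit them, so ``the strings must escape a region without boundary/coboundary'' does not follow from parity. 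The obstruction is not homological but algebraic, localized on a single qubit at the terminus of the ribbon, and the ribbon construction plus the C*-algebra analysis there (not the quasi-Euclidean condition, where you are correct) is the actual proof.
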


\begin{proof}
Appendix \ref{apx:proof_of_lem:either-path-or-copath}
\end{proof}

The proof of Lemma \ref{lem:either-path-or-copath} relies on the a further study of the induced algebras near the boundary/coboundary of the system. The idea is to start with an edge shared by $s$ and $p$ and start drawing a ribbon from it which is briefly a juxtaposition of a path and an adjacent copath (see Definition \ref{def:ribbon} and Figure \ref{fig:ribbon}). We do this until we encounter a boundary/coboundary edge. At areas far from the boundary, we are in a regime which look like the toric code and thus the desired commutation and anti-commutation relations hold. It is then corollary \ref{cor:properboundary2} that provides restrictions on the induced algebras near the boundary/coboundary qubit which in turn implies that either the path or the copath within the ribbon can serve as the support for a logical operator which can correct $p$ or $s$ respectively.

\noindent {\bf Construction of punctured Hamiltonian}:
Let $\mathcal{W}$ denote the set consisting of all stars and plaquettes in the interior of the system which have
access to the coboundary or to the
boundary, respectively. This set can be thought of as the set of ``fixable'' terms.
Let $\tilde{H}$ be the \emph{punctured Hamiltonian:} the local Hamiltonian
obtained by replacing all terms which are in $\mathcal{W}$ by the identity operator.  

\section{2-locality of the punctured Hamiltonian} 
\label{sec:2local}
We now show with the help of Lemma \ref{lem:either-path-or-copath} that the punctured Hamiltonian $\tilde{H}$ has so many holes that it is 2-local. 

The division to superparticles is based on the quasi-Euclidean condition (this is the only place we use this condition). 
Recall that by definition, the quasi-Euclidity condition (Definition \ref{def:qe})
provides us with a triangulation $\mathcal{T}$ of $\s$ of degree
$D=\mathcal{O}(1)$ such that each triangle contains a ball of radius
$2k$  and is of diameter $R=\mathcal{O}(1)$ (with the ordinary graph metric with edge length 1). 

We now construct a graph which will help us divide the qubits to 
superparticles. The vertices of this graph will be associated with terms in 
the Hamiltonian. A local term can be associated with a point
in the surface in a natural way: each star is naturally realized 
as the vertex which is associated with it, and each plaquette $p$
is associated with some arbitrarily chosen point in its interior 
to be called ``the center of the plaquette''. This allows us to 
precisely speak of a local term as a point on the surface.

\begin{claim} [punctured triangles]
\label{claim:triangle-center}For each triangle $T\in\mathcal{T}$
there exists a term $h$ of $\tilde{H}$ 
such that all of the edges attached to it (i.e the edges associated with the qubits which $h$ acts on), are fully contained in $T$ 
and moreover, $h$ acts trivially on at least one of its qubits.
\end{claim}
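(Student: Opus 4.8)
The plan is, for a given triangle $T\in\mathcal{T}$, to single out a ``deep'' vertex $v_0$ of $\k$ that lies well inside $T$ and, crucially, far from $\partial\s$, and then to run the following dichotomy near $v_0$: \emph{either} some term of $\tilde{H}$ whose edges all lie in $T$ already acts trivially on one of its qubits (take it as $h$), \emph{or} the star $s_0$ at $v_0$ and some plaquette $p_0$ incident to $v_0$ are both in the interior of the system, in which case Lemma \ref{lem:either-path-or-copath} forces $s_0$ or $p_0$ into $\mathcal{W}$, so that the corresponding term of $\tilde{H}$ is the identity operator --- which acts trivially on all of its qubits, all of which lie in $T$ --- and we take that as $h$. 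In both branches the claim holds.

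\textbf{Geometric bookkeeping.} First I would record, using the quasi-Euclidean condition (Definition \ref{def:qe}) with $r>2k$, that $T$ contains a ball of radius $2k$ in $\k$ about some vertex $v_0$. Since every star and every plaquette has degree at most $k$, any edge that belongs to $s_0$, to a plaquette incident to $v_0$, to a star at distance $1$ from $v_0$, or to a plaquette meeting an edge of $s_0$, lies within distance $2k$ of $v_0$ in $\k$, hence inside $T$; the same is true of every star and plaquette attached to such an edge. So each of these finitely many terms, whether kept or (if it is in $\mathcal{W}$) replaced by the identity, is supported inside $T$ in $\tilde{H}$.

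\textbf{The dichotomy.} Suppose no term of $\tilde{H}$ supported inside $T$ acts trivially on any of its qubits. Then none of the terms above is in $\mathcal{W}$ (else it would be the identity, acting trivially everywhere), and each acts non-trivially on all of its edges. For an edge $e$ incident to $v_0$, its two endpoints carry two stars and $e$ lies in one or two faces; by the setup all these terms are supported in $T$, hence act non-trivially on $e$, so if $e$ is topologically interior it has two stars and two plaquettes acting non-trivially on it and is therefore an interior qubit by Definition \ref{def:boundary}; the only obstruction is $e$ being a topological boundary edge of $\s$. The same applies to the edges of every plaquette incident to $v_0$. Hence, \emph{provided $v_0$ is at distance more than $k$ from $\partial\s$}, the star $s_0$ and every plaquette incident to $v_0$ lie in the interior of the system; picking any such plaquette $p_0$ (one exists, as $v_0$ is a vertex of some face of $\k$), the adjacent pair $(s_0,p_0)$ satisfies the hypotheses of Lemma \ref{lem:either-path-or-copath}, so $s_0$ or $p_0$ is in $\mathcal{W}$ --- making the corresponding term of $\tilde{H}$ the identity, contradicting our assumption. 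This settles the claim once the ``deep vertex'' is in hand.

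\textbf{Where the work is.} Everything reduces to producing $v_0$ that is at once the center of a radius-$\Theta(k)$ ball in $\k$ inside $T$ and at distance $>k$ from $\partial\s$. When the vertex handed to us by quasi-Euclidity is too close to $\partial\s$, I would re-center it: near $\partial\s$ the surface looks locally like a half-plane, and since the radius-$2k$ ball about $v_0$ sits inside $T$ one can shift $v_0$ a constant distance ``inward'', keeping a radius-$\Theta(k)$ ball inside $T$ while getting past distance $k$ from $\partial\s$; inflating the quasi-Euclidean radius constant (harmless, all parameters being $O(1)$) provides the slack. The delicate sub-case is when no inward shift is available --- when the part of $\s$ meeting that ball is uniformly within a constant distance of $\partial\s$, i.e.\ ``thin'' --- where one must instead argue directly that a trivial action occurs, essentially because such a region behaves one-dimensionally and Theorem \ref{thm:equiv} then leaves no room for every term to act non-trivially on every one of its edges, so some qubit is trivial and some term acts trivially on it. I expect this thin/near-boundary analysis, together with the exact constant-chasing certifying that each term invoked really lies inside $T$, to be the main obstacle; the interior case is an essentially immediate application of Lemma \ref{lem:either-path-or-copath}.
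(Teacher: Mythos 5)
The core idea matches the paper's: choose a ball inside $T$ about a vertex $v_0$, take an adjacent star--plaquette pair there, and invoke Lemma \ref{lem:either-path-or-copath} when both are in the interior. Where you diverge, and where your argument is incomplete, is in how you reach the hypotheses of that lemma. The paper does \emph{not} re-center $v_0$ away from $\partial\s$ and does not need a ``deep vertex'': it picks a ball of radius $2k$ inside $T$, takes $p$ to be any plaquette through its center and $s$ a star adjacent to $p$, observes that all edges of $s$ and $p$ lie in the ball, and then argues via a short disjunction: either $s$ and $p$ are both in the interior, in which case Lemma \ref{lem:either-path-or-copath} places one of them in $\mathcal{W}$ (so that term of $\tilde{H}$ is the identity, acting trivially on all of its qubits); or one of them is not in the interior, in which case it has a boundary/coboundary qubit $q$, and a term acting on $q$ acts trivially on it --- and that term is at most one step away, so its edges still lie in $B\subseteq T$. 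No avoidance of $\partial\s$ enters at all.

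Your proposal instead makes the interior-ness of $s_0$ and $p_0$ contingent on $v_0$ being at distance $>k$ from $\partial\s$, which forces you to introduce a re-centering step, and then you explicitly concede that you cannot carry out the re-centering when the region around $v_0$ is ``thin'' --- you wave toward Theorem \ref{thm:equiv} and a ``one-dimensional'' heuristic but do not actually resolve that case. That flagged sub-case is a genuine gap: it is never closed, and it is precisely the situation the paper's shorter disjunction handles without any geometric repositioning. The fix is not more near-boundary analysis; it is to drop the re-centering idea and run the paper's dichotomy directly on the $s,p$ pair already in hand --- if neither is in $\mathcal{W}$, at least one has a boundary or coboundary qubit, and that qubit by Definition \ref{def:boundary} exhibits a nearby plaquette (resp.\ star) acting trivially on it, furnishing the required $h$ supported inside $T$.
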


\begin{proof}
Appendix \ref{apx:2-local}
\end{proof}

Choose such a term $h$ for every triangle $T\in\mathcal{T}$ and
call it ``the center of the triangle $T$''. Such a term acts trivially on some 
edge $e$ (when considered as a term in $\tilde{H}$; 
if $h$ was removed, then this term is in fact the identity). 
In addition, for each 1-cell of $\mathcal{T}$, 
that is a side of a triangle $T\in\mathcal{T}$, choose some point
in its interior to be called ``the center of the 1-cell''. Then
connect each triangle center with 3 paths to the centers of the sides
of $T$. Those paths should be non intersecting, contained in the
interior of $T$ (except at the end of the paths) and in addition
must satisfy one more condition: clearly, those three non-intersecting
paths divide $T$ into 3 regions; the paths should be drawn such that
$e$ belongs to one region and all the other edges of $h$ belong
to the two other regions (that way $h$ will act non-trivially on at most 2 regions). To be sure that such paths can always be drawn, 
it suffices to show it for an equilateral triangle - this can of course be 
done. Then the general case is obtained as a 
homeomorphism of the triangle (see figure \ref{fig:puncturedTriangle}).

\begin{minipage}{0.5\textwidth}
\begin{figure}[H]	
\caption{\label{fig:puncturedTriangle} Choosing a holes in each triangle and 	 separating to regions}
\includegraphics[scale=0.26]{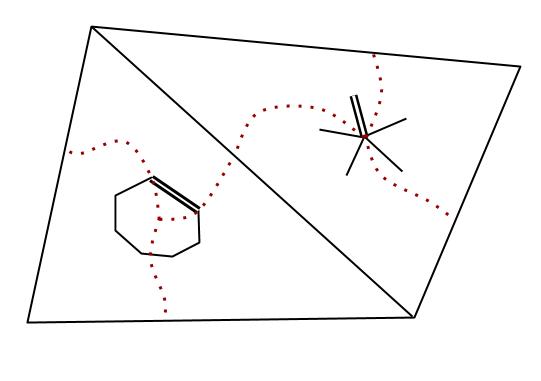}
\end{figure}
\end{minipage} \hfill
\begin{minipage}{0.45\textwidth}
According to Claim \ref{claim:triangle-center}, every triangle includes a local term $h$ which acts trivially on (at least) one of its edges $e$ (this edge is marked as a double edge). Whether a star term or a plaquette term, we can connect it to the three triangle sides with three paths (dotted curves) such that $e$ belongs to one region, and the other edges belong to the two other regions.

\end{minipage}
\\
This construction gives rise to a graph $G$ 
which highly resembles $\mathcal{T}^{*}$ the dual of $\mathcal{T}$.
The vertices of $G$ consist of the chosen triangles 
center of $\mathcal{T}$, as well as the centers of 1-cells of triangles 
in $\mathcal{T}$ which are on the boundary of $\s$. Between any two vertices 
of $G$ corresponding
to the centers of two triangle $T_{1},T_{2}$ which share a side 
(i.e $T_{1}\cap T_{2}$ is a 1-cell of $\mathcal{T}$) let there be an edge; 
in addition, for every triangle which
has a side on the boundary of the surface, 
let there be an edge between the triangle
center and the boundary. The edges of $G$ are drawn on $\s$ as 
the paths constructed in the previous paragraph.

Consequently, vertices of $\mathcal{T}$
are in one-to-one correspondence with faces of $G$. Those faces induce
a partition $\mathcal{P}$ of the set of qubits $\mathcal{Q}$ according to
the face of $G$ which they belong to (if an edge of $\k$ touches
more then one face of $\mathcal{T}^{*}$ then join it to one of those
faces arbitrarily) \cite{Graph1}. We accordingly have: $\h=\bigotimes_{q\in\mathcal{Q}}\h_{q}=\bigotimes_{P\in\mathcal{P}}\h_{P}$
with $\h_{P}:=\bigotimes_{q\in P}\h_{q}$. We refer to each cluster
$P\in\mathcal{P}$ and to its Hilbert space $\h_{P}$ as a super-particle.

\begin{minipage}{0.5\textwidth}
\begin{figure}[H]	
\caption{\label{fig:superparticles} The graph embedding of $G$ in $\s$}
\includegraphics[scale=0.18]{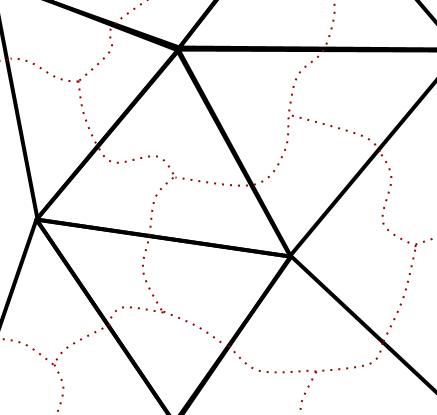}
\end{figure}
\end{minipage} \hfill
\begin{minipage}{0.45\textwidth}
\vspace{6mm}
The dark lines are the quasi-Euclidean triangulation. The dotted curves are the edges of the graph $G$ realized as the chosen paths in $\s$. The faces of $G$ induce a partition of the qubits into superparticles. The fatness of the triangles and the bounded degree of the triangulation implies that the superparticles' size is constant.
\end{minipage}

\vspace{4mm}

So far we have only used the ``slimness'' of a triangle condition in 
the definition of quasi-Euclidean condition. Here is where we need the bound on the fatness
of triangles and the upper bound on its degree. The following claims are proven in Appendix  \ref{apx:2-local}.
\begin{claim} [constant sized super-particles]
\label{claim:super-particle}
Each super-particle includes at most
$D\cdot k^{R+2}$ qubits (in particular $\mathcal{O}(1)$).
\end{claim}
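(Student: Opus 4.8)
The plan is to bound the number of qubits of $\k$ that can lie inside a single face of $G$. Recall that faces of $G$ are in one-to-one correspondence with vertices $v$ of the triangulation $\mathcal{T}$; a given face $F_v$ is assembled from the ``pie slices'' contributed by all triangles $T\in\mathcal{T}$ incident to $v$ (one slice per such triangle, cut out by the three paths we drew inside $T$). So the first step is the purely local estimate: fix a triangle $T$ and count the edges $e$ of $\k$ that meet $T$. Since the triangulation has diameter at most $R$ (in the graph metric on the $1$-skeleton of $\k$, edge lengths $1$), every vertex of $\k$ touching $T$ lies within distance $R$ of a fixed vertex of $\k$ inside $T$ — here I use that the radius-$2k$ ball condition forces at least one whole polygon, hence at least one vertex of $\k$, to sit inside $T$. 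Now apply Moore's bound (Fact \ref{fact:moore}) in the $1$-skeleton of $\k$, whose maximal degree is $k$ (each vertex has degree at most $k$ because $k$ is the locality bound, i.e. the max star degree): the number of vertices of $\k$ within distance $R$ of a fixed vertex is at most $1 + k + k(k-1) + \dots \le k^{R+1}$ (up to the exact constant bookkeeping), and hence the number of edges of $\k$ meeting $T$ is at most $k\cdot k^{R+1} = k^{R+2}$, say.

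The second step is to pass from one triangle to one face of $G$. A face $F_v$ of $G$ is covered by the slices coming from the triangles incident to $v$, and the number of triangles incident to $v$ is at most the degree of the $1$-skeleton of $\mathcal{T}$, which is $D$ by hypothesis. Each such triangle contributes at most $k^{R+2}$ qubits of $\k$ to $F_v$ (in fact only the qubits of its slice, which is a subset of the qubits meeting the triangle), and each qubit of $\k$ was assigned to exactly one face of $G$ in the partition $\mathcal{P}$. Therefore $|F_v|\le D\cdot k^{R+2}$, which is the claimed bound; since $D$, $k$, $R$ are all constants, this is $\mathcal{O}(1)$.

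The main obstacle is really just making the counting in the first step airtight: one has to be careful that ``within distance $R$'' is the right notion — the diameter-$R$ condition in Definition \ref{def:qe} bounds the diameter of \emph{the subgraph of $\k$'s $1$-skeleton contained in $T$}, so a vertex of $\k$ meeting $T$ but not strictly inside it needs a short argument (at most one extra edge-step, which only changes the exponent by an additive constant and is absorbed into $k^{R+2}$), and one must confirm that every edge of $\k$ lying in a slice of $F_v$ indeed meets the ambient triangle $T$ so that the Moore count applies. None of this is deep, but it is where the three quasi-Euclidean constants $r>2k$ (guaranteeing a polygon, hence a vertex, inside each triangle), $R$ (diameter bound feeding Moore), and $D$ (degree bound, controlling how many triangles meet a vertex) each get used exactly once.
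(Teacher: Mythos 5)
Your proposal matches the paper's argument: both apply Moore's bound (Fact \ref{fact:moore}) to the subgraph of $\k$'s $1$-skeleton inside a single triangle (diameter $\le R$, degree $\le k$) to get $O(k^{R+2})$ edges per triangle, then use the degree bound $D$ on $\mathcal{T}$ to cap the number of triangles contributing slices to any one face of $G$, yielding $D\cdot k^{R+2}$. The only cosmetic difference is that you invoke the radius-$r$ condition to anchor a reference vertex inside $T$, whereas the paper applies Moore's bound directly to the diameter-$\le R$ subgraph without needing such an anchor (the $r>2k$ condition is used only in Claim \ref{claim:triangle-center}).
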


\begin{claim} [punctured Hamiltonian is 2-local]
\label{claim:2-local}Each local term of $\tilde{H}$ acts on at most
two super-particles.
\end{claim}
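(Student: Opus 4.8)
The plan is to show that each local term $h$ of $\tilde H$ — whether a star $A_s$ or a plaquette $B_p$ — has all its attached edges contained in at most two faces of the graph $G$, hence acts on at most two super-particles. I will split into two cases according to whether $h$ was removed (i.e. $h\in\mathcal{W}$, so it appears in $\tilde H$ as the identity) or survived.

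First, the trivial case: if $h$ is one of the removed terms, then in $\tilde H$ it is literally the identity operator and acts on zero super-particles, so there is nothing to prove. Hence assume $h$ survives in $\tilde H$. I will first treat the sub-case where $h$ is one of the chosen triangle centers. By the construction following Claim \ref{claim:triangle-center}, the term chosen as center of a triangle $T$ has all of its edges contained in $T$, and the three non-intersecting paths of $G$ inside $T$ were drawn precisely so that the distinguished trivial edge $e$ of $h$ lies in one of the three resulting regions while all other edges of $h$ lie in the two remaining regions. Since the regions of $T$ cut out by these paths are exactly (the restrictions to $T$ of) faces of $G$, the non-trivially-acted-upon edges of $h$ meet at most two faces of $G$, so $h$ acts on at most two super-particles.

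The main work is the remaining sub-case: $h$ survives and is not a triangle center. Here I use the quasi-Euclidity bound $r>2k$: each triangle $T$ of $\mathcal{T}$ contains a ball of radius $2k$, which (as noted in the discussion of Figure \ref{fig:qeComplex}) guarantees that whenever a polygon or a star touches $T$, all of its edges are already contained in $T$. Concretely, if $h=A_s$ with $s$ a vertex, then $s$ lies in some triangle $T$ (or on a shared side/vertex of triangles), and the ball-of-radius-$2k$ condition forces every edge incident to $s$ to lie in $T$; similarly if $h=B_p$ then $p$ and all its boundary edges lie in a single triangle $T$. Now within that triangle $T$, the faces of $G$ restricted to $T$ are exactly the three regions cut out by the three paths from the center of $T$ — except that one of those regions belongs to the face of $G$ "containing" $T$'s own center-term and the partitioning was arranged so the center-term's edges span at most two of them. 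For a term $h$ that is not the center of $T$, I must argue its edges still fall into at most two regions of $T$. This follows because $h$ and the center term of $T$ are adjacent (they share an edge or two, being a star–plaquette or star–star or plaquette–plaquette pair within the same small triangle), so $h$'s support sits locally around a single vertex or face; together with the fact that the three paths of $G$ emanate from a common point (the center of $T$) and are non-intersecting, any edge-cluster localized around one vertex/face other than along the trivial edge $e$ is separated by at most one of these three paths, landing in two adjacent regions. I will make this precise by a small case analysis on the combinatorial position of $h$'s edges relative to the three paths of $T$.

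The step I expect to be the main obstacle is the last one: carefully verifying, in the non-center sub-case, that the three paths inside each triangle genuinely separate the support of every surviving term into at most two of the three regions. This is where one has to be honest about the freedom in drawing the paths (the homeomorphism-from-the-equilateral-triangle argument) and check it is compatible with every possible local configuration of a term $h$ whose edges lie in $T$; I anticipate invoking the planarity/non-crossing structure of $G$ together with the bound $|s|,|p|\le k$ and $r>2k$ to rule out a term straddling all three regions. The constant-size bound from Claim \ref{claim:super-particle} is not needed for 2-locality itself, only to conclude the super-particles have $\mathcal{O}(1)$ dimension, which is used elsewhere.
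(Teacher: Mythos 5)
There is a genuine gap, and it stems from a misreading of the quasi-Euclidity condition. You assert that the bound $r>2k$ ``guarantees that whenever a polygon or a star touches $T$, all of its edges are already contained in $T$.'' This is not what the condition says: each triangle $T$ merely \emph{contains} some ball of radius $2k$, which is used in Claim \ref{claim:triangle-center} to find one protected term deep inside $T$; it does nothing to prevent a star or plaquette near $\partial T$ from having edges in an adjacent triangle. Your entire treatment of the non-center case is therefore set up inside a single triangle and simply does not cover all surviving terms. Moreover, even within one triangle, the step you postpone (that the three paths separate every surviving term into at most two of the three regions) is exactly where the burden lies, and you acknowledge it is not done.

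The paper's proof avoids both difficulties by reasoning directly about the embedded graph $G$ rather than triangle by triangle. A term whose edges meet no drawn edge of $G$ lies in one face of $G$, hence one super-particle. A term whose edges meet an edge of $G$ but no vertex of $G$ lands in at most two faces, hence two super-particles; this in particular handles terms straddling two triangles, which only meet the single $G$-edge through the shared-side midpoint. The only way to meet three faces of $G$ is to sit at a vertex of $G$ interior to $\s$, and the only such vertices are the chosen triangle centers $h$; these were constructed so that their trivial edge $e$ is isolated in one of the three regions, leaving them non-trivial on at most two super-particles. Your handling of the removed terms and of the triangle centers matches the paper, but the ``every term lives in one triangle'' shortcut is false, and the paper replaces the per-triangle case analysis you were hoping to do with the cross-edge/cross-vertex dichotomy on $G$, which does not use the $2k$-ball bound at all inside this claim (it is used only in Claim \ref{claim:triangle-center}, while the $R$ bound enters only in Claim \ref{claim:super-particle}).
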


\section{Completing the algorithm and the proofs for Theorems 
\ref{mainthm1} $\&$ \ref{mainthm2}.}
\label{sec:completingproof}
We now proof Theorems 1 \& 2. By Claims \ref{claim:super-particle}, \ref{claim:2-local}, it is possible to prepare a ground space
$\ket{\tilde{\psi}}$ of $\tilde{H}$, 
using a constant depth quantum circuit. 
Given such a groundstate, we measure every $h\in\mathcal{W}$ one by one.
Actually it will be simpler to measure $I-2\cdot\pi_{h}$ instead
where $\pi_{h}$ is the orthogonal projector onto the ground space
of $h$. Record that result of the measurement by $\lambda_{h}$. Accordingly,
having $\lambda_{h}=1$ indicates that $\ket{\tilde{\psi}}$ is already
a groundstate of $h$ whereas $\lambda_{h}=-1$ indicates an
excitation at that spot. 
The state we had $\ket{\tilde{\psi}}$
collapses by these measurements to a new state $\ket{\psi}$ which is an eigenstate of every
$h\in\mathcal{W}$, while still being in the ground space of $\tilde{H}$. Recall that the set of terms we measured (the set $\mathcal{W}$) all have access to the boundary (Definition \ref{def:access}). Thus their value can be changed via string logical operators while not effecting the 
value of any other term. This is summarized by the
following algorithm:

\subsubsection{Algorithm (constructing a groundstate for an arbitrary quasi-Euclidean $2D-CLH^*(k,2)$ instance): } \label{alg:final}
\begin{enumerate}
\item If the instance has no boundary or coboundary qubits, then it is equivalent to the toric code, so apply algorithm \ref{alg:toric} and terminate. 
\item Else, generate a groundstate of $\tilde{H}$ with a constant
depth quantum circuit.
\item For each term $h\in\mathcal{W}$ which was removed, measure $I-2\cdot\pi_{h}$, and record the measurement value as $\lambda_{h}=\pm1$. ($\pi_h$ is the orthogonal projector onto the groundspace of $h$). 
\item Fix every $h\in\mathcal{W}$ for which $\lambda_{h}=-1$: if $h$
is a star term $s$, find a path $\gamma$ from $s$ to the coboundary
and apply $L=\bigotimes_{q\in\gamma}X_{q}$. If $h$ is a plaquette
term $p$, find a copath $\gamma^*$ from $p$ to the boundary and
apply $L^{*}=\bigotimes_{q\in\gamma^{*}}Z_{q}$.
\end{enumerate}

This proves Theorem 2. Theorem 1 follows as well: if the instance has no boundary/coboundary qubits (and this can of course be checked efficiently by the verifier) then the system is equivalent to the toric code, so it's ground energy can be computed easily (see Appendix \ref{apx:Defected-toric-code} for the case where the local terms are as in the toric code only upto a factor of a scalar). Otherwise, the problem of computing the ground energy of $H$ reduces to computing the ground energy of $\tilde{H}$, since the verifier knows that any groundstate of $\tilde{H}$ can be corrected to a (possibly other) groundstate of $\tilde{H}$ such that all terms in $\mathcal{W}$ are satisfied (i.e the energy with respect to the terms in $\mathcal{W}$ is minimal). It is thus left to note that $\tilde{H}$ is a 2-local CLH, and this problem is in NP by \cite{BV}.

\section{Discussion} 
\label{sec:discussion}
An interesting property of the algorithm is that all of the {\it quantum} operations are summed up to have only {\it constant} depth. Indeed, the algorithm consists of three steps: a constant depth quantum circuit that generates a groundstate for the punctured Hamiltonian, a non-constant depth computation of path finding which can be carried out in a {\it classical} manner, and finally a constant depth quantum circuit of logical operators (tensor product of Pauli operators).

This observation regards the complexity of the algorithm, but it is interesting also conceptually. While the quantum circuit presented here is
of polynomial depth, it is enough for the verifier to obtain only a constant
depth circuit description, and verify that it is indeed a groundstate of 
the punctured Hamiltonian, 
in order to be know the ground energy of the whole system (since the verifier 
knows that these holes can always be fixed). 
This means that while the time it takes to generate a groundstate for the system is concentrated
on creating global entanglement, all the hardness and potential frustration
of the groundstate comes into play only at the level of local entanglement
of the groundstate of the 2-local punctured system.

Moreover, our results shed new light on the possible threshold phenomenon suggested in \cite{3local}. Recall that this threshold (described above in subsection \ref{subsec:prev})  regards the fact that up until $k=3,d=3$, and also for $k=2$ and arbitrary $d$, $CLH(k,d)$ always have trivial groundstate, which in turn implies that those problems are in NP. The threshold refers to the fact one cannot expect the exact same phenomenon for higher parameters since then there are systems with topological quantum order which are known to have no trivial groundstates. It is thus interesting that our proof extends this trivial state phenomenon even beyond this transition point into the regime of potentially global entanglements, in the sense that even here the prover hands us a description of a trivial state - a ground state of $\tilde{H}$ (even though it cannot in general be a groundstate of the actual instance). This raises the question of whether such a property holds for more general CLHs.

Can these results be extended to all 2D systems?   
A generalization from qubits to qudits of dimension 
larger than $2$ would imply this, under the quasi-Euclidity assumption (see Appendix \ref{apx:2d}). Thus, the main open problem is to generalize our results to higher dimensional particles. We note that in any case one can still tear holes in a regular manner (using e.g the quasi-Euclidity assumption) to obtain a punctured Hamiltonian which is 2-local with respect to superparticles, and thus has a trivial groundstate. The problem is that we do not know how to fix those holes later on: our characterization of $\sp$ instances (i.e Theorem \ref{thm:equiv}) and of fixable terms (namely, the creation of logical operators in Lemma \ref{lem:either-path-or-copath}) strongly uses the fact that the particles are 2 dimensional. It is open whether further generalization could be derived using more general characterizations 
of commuting local Hamiltonians, perhaps over general finite groups 
(e.g the quantum double model \cite{ToricCode}).

We mention that if indeed the results can be generalized to 
qudits, it might also be possible to generalize to 
3D manifolds or more, perhaps in an inductive manner.

A more technical question is whether the quasi-Euclidity condition can 
be relaxed.  
Quasi-Euclidity seems closely related to the notion of $1$-localizablity 
introduced in Hastings' paper 
\cite{Localizable} already mentioned  
(In fact, the quasi-Euclidity condition we use can be replaced by the 
technical assumption used in \cite{Localizable} regarding the girth of the 
complex; we could then deduce the existence of a groundstate 
for $\tilde{H}$ from $1$-localizabilty instead of $2$-locality). 
This raises the question of whether manifolds which are 
very non-Euclidean and which have low girths, 
can exhibit much more complex multi-particle entanglement (we mention in 
this context \cite{guth}). 

{~}

{\noindent \bf Acknowledgments}

{\noindent D.} Aharonov, O. Kenneth and I. Vigdorovich acknowledge the generous support of ERC grant number 280157 for funding their work on this paper.

{~}

\newpage

\begin{appendices}
\section{Comparison between different 2D-CLH notions}
\label{apx:2d}

\begin{defn} [$2D-CLH(k,d)$ instance]
\label{def:2D-CLH}   Consider instance $x$ of $CLH(k,d)$ for which:
\begin{enumerate}
\item There exists a two dimensional polygonal complex $\mathcal{K}$.
\item There exists a 1-1 mapping between qudits of $x$ and vertices of
$\mathcal{K}$.
\item There exists a 1-1 mapping between local terms of $x$ and faces of
$\mathcal{K}$. 
\item If $h$ corresponds to a face $f$ then the set of qudits $\{q_{1},...,q_{r}\}$
which $h$ acts on corresponds to the set of vertices $\{v_{1},...,v_{r}\}$
of $f$.
\end{enumerate}
The class of such instances is denoted by $2D-CLH(k,d)$.

\end{defn}
  Any $2D-CLH^*(k,d)$ instance can be transformed into a $2D-CLH(k,d)$ instance. Indeed, for any given $2D-CLH^*(k,d)$ instance with complex $\k$, one may define a new complex $\k^{\prime}$ by placing a vertex in the interior of every edge, and connecting two of those vertices if and only if the edges that they live on belong to the same face, and share a vertex. Then, whenever $e_1,...,e_r$ are the edges of some star/plaquette in $\k$, they are now the {\it vertices} of a face in $\k'$. Therefore we obtain a $2D-CLH(k,d)$ instance.

The reverse construction is not always
possible. To see this observe that the construction above always results with a complex where each vertex is either of degree 2 or 4 (since in the original complex each edge either  belongs to 1 plaquette or to 2 plaquettes).
However when setting $d=2$ it can be shown, using C{*}-algebraic considerations (which can found in
section 4 of \cite{3local}), that if a qubit (vertex) of a $2D-CLH(k,2)$ instance is of degree 5 or above that it must be a classical qubit. As a result,  $2D-CLH(k,2)$ instances can always be reduced to ones where the maximal degree of the complex is $4$. Consequently, when demanding that the vertices of the underlying complex of the $2D-CLH(k,2)$ instances is never of degree $3$, then the two classes happen to be equivalent (e.g in the case of a grid the two settings are precisely the same).
In other words, it is only the degree $3$ 
vertices in $2D-CLH(k,2)$ instances which prevent
them from being represented in a star-plaquette fashion.

It thus follows that our results regarding $2D-CLH^*(k,2)$ are equally true for $2D-CLH(k,2)$ as long as there are no vertices of degree $3$. If though our results are generalized to $2D-CLH^*(k,d)$ for arbitrary $d$ then they are equally true for $2D-CLH(k,d)$, with a slight modification of the quasi-Euclidity assumption. Indeed, if we assume that the surface admits a quasi-Euclidian grid (i.e a grid where each square is neither to fat nor too slim), then we may use such a grid to group qubits which are in the same square into superparticles. We thus have now a new $2D-CLH(k,d)$  defined on the grid with higher $d$. It is then left to note that a new tilted grid, with edges on the original grid's vertices, defines a $2D-CLH^*(k,d)$ instance with the exact same interaction of the original one.

\section{Algebraic definitions and lemmas \label{apx:background}}

\subsection{C{*}-algebras}

We now introduce some facts about finite dimensional C{*}-algebras
and their applications to the study of commuting local Hamiltonians. If $\mathcal{H}$ is a finite dimensional complex Hilbert space we denote
by $\mathcal{L}\left(\mathcal{H}\right)$ the complex algebra of all
linear operators on $\mathcal{H}$. 
We are in fact interested in representations of finite dimensional C*-algebras, i.e when the elements of the C*-algebras are realized as matrices. Actually, we are interested only in representations that send the unital element of the algebra to the identity operator (note that every finite dimensional C*-algebra is unital \cite{Cstar} i.e it has 
an element $\identity$ neutral to multiplication from right and left, however this element may in general be other then the identity matrix). 
Therefore, for the sake of this paper, it will be convenient to define a C*-algebra in a more reduced way then defined abstractly:
\begin{defn} [C*-algebra] Let $\h$ be a finite dimensional Hilbert space. A C*-algebra is any algebra $\a\subseteq \l (\h)$ which is closed under the $\dagger$ operation (i.e $a^\dagger \in \a$ whenever $a\in \a$) and which includes the identity operator ($I\in\a$).
	\end{defn}
Given a finite set
of operators $a_{1},...,a_{m}\in\text{\ensuremath{\mathcal{L}\left(\mathcal{H}\right)}}$,
the C{*}-subalgebra generated by $a_{1},...,a_{m}$ is the minimal
C{*}-algebra that includes $a_{1},...,a_{m}$ and is denoted by $\left\langle a_{1},...,a_{m}\right\rangle $.
This should not be confused with $Sp\{a_{1},...,a_{m}\}$ which is
the linear subspace spanned by $a_{1},...,a_{m}$ (viewed as vectors in a vector space). The dimension of
a C{*}-algebra is by definition its dimension as a vector space. The
center of an algebra $\mathcal{A}$ is by definition the subalgebra
$\mathcal{Z}(\mathcal{A})=\left\{ A\in\mathcal{A}\mid\forall B\in\mathcal{A}\,:\,\left[A,B\right]=0\right\} $.
Finally, two algebras $\a_{1},\a_{2}\subseteq\mathcal{L}(\h)$ are
said to commute if $[a_{1},a_{2}]=0$ whenever $a_{1}\in\mathcal{A}_{1},\,a_{2}\in\a_{2}$.

The structure theorem (see \cite{Cstar}) 
states that every finite dimensional C{*}-algebra
is a direct sum of algebras of all operators on a Hilbert space. One
way of formulating this is as follows: 
\begin{fact}[classification of finite dimensional C*-algebra]
\label{fact:Cstar}  Let $\mathcal{A}\subseteq\mathcal{L}\left(\mathcal{H}\right)$
a C{*}-algebra where $\mathcal{H}$ is finite dimensional. There exists
a direct sum decomposition:
\begin{equation}
\mathcal{H}=\bigoplus_{\alpha}\mathcal{H}^{\alpha}\label{eq:Cstar-directsum}
\end{equation}
 and a tensor product structure
\begin{equation} 
\mathcal{H}=\bigoplus_{\alpha}\mathcal{H}_{1}^{\alpha}\otimes\mathcal{H}_{2}^{\alpha}\label{eq:Cstar-tensor}
\end{equation}
 such that
\begin{equation}
\mathcal{A}=\bigoplus_{\alpha}\mathcal{L}\left(\mathcal{H}_{1}^{\alpha}\right)\otimes\mathcal{I}\left(\mathcal{H}_{2}^{\alpha}\right)\label{eq:Cstar-alg-dec}
\end{equation}
Furthermore, $\mathcal{Z}(\mathcal{A})$ is spanned by the set of
orthogonal projections on the $\mathcal{H}^{\alpha}$, over all the $\alpha'$s. 
\end{fact}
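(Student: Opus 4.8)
The plan is to prove the Wedderburn-type decomposition by first peeling off the center, then classifying a single ``factor'', and finally reassembling. The center $\mathcal{Z}(\mathcal{A})$ is a unital, $\dagger$-closed, commutative subalgebra of $\mathcal{L}(\mathcal{H})$; by the elementary commutative case (simultaneous diagonalization of commuting normal operators) it is the linear span of a unique family $\{P_{\alpha}\}$ of mutually orthogonal projections with $\sum_{\alpha}P_{\alpha}=I$, namely its minimal projections. Put $\mathcal{H}^{\alpha}:=P_{\alpha}\mathcal{H}$, so $\mathcal{H}=\bigoplus_{\alpha}\mathcal{H}^{\alpha}$; since each $P_{\alpha}\in\mathcal{Z}(\mathcal{A})\subseteq\mathcal{A}$, the algebra splits as an internal direct sum $\mathcal{A}=\bigoplus_{\alpha}\mathcal{A}_{\alpha}$, where $\mathcal{A}_{\alpha}:=P_{\alpha}\mathcal{A}P_{\alpha}$ is a C*-algebra acting on $\mathcal{H}^{\alpha}$ (closure under $\dagger$, products and the unit $P_{\alpha}$ all use that $P_{\alpha}$ is central). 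A one-line argument then gives $\mathcal{Z}(\mathcal{A}_{\alpha})=\mathbb{C}\,I_{\mathcal{H}^{\alpha}}$: any central element of $\mathcal{A}_{\alpha}$, extended by zero on the other blocks, lies in $\mathcal{Z}(\mathcal{A})=\mathrm{span}\{P_{\beta}\}$ and is supported on block $\alpha$, hence is a scalar multiple of $P_{\alpha}$.

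\textbf{Stage 2 (structure of a factor).} Let $\mathcal{B}\subseteq\mathcal{L}(\mathcal{K})$ be a finite-dimensional C*-algebra with $\mathcal{Z}(\mathcal{B})=\mathbb{C}\,I$. Every self-adjoint $a\in\mathcal{B}$ is a real-linear combination of its spectral projections, which are polynomials in $a$ and hence lie in $\mathcal{B}$; so $\mathcal{B}$ is spanned by its projections, and by dimension descent there is a minimal nonzero projection $q$. Minimality forces $q\mathcal{B}q=\mathbb{C}q$ (a proper nonzero projection of $q\mathcal{B}q$ would be a smaller nonzero projection of $\mathcal{B}$). Next I would build matrix units: take a maximal family of mutually orthogonal projections $q=q_{1},\dots,q_{d}$ in $\mathcal{B}$ together with partial isometries $v_{i}\in\mathcal{B}$ satisfying $v_{i}^{\dagger}v_{i}=q$, $v_{i}v_{i}^{\dagger}=q_{i}$ (with $v_{1}=q$), and set $P=\sum_{i}q_{i}$. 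If $P\neq I$, then for any $b\in\mathcal{B}$ and any $i$ the element $w:=P^{\perp}bv_{i}$ has $w^{\dagger}w\in q\mathcal{B}q=\mathbb{C}q$; if nonzero, normalizing $w$ produces a partial isometry whose final projection is $\leq P^{\perp}$ and equivalent to $q$, contradicting maximality, so $w=0$. Hence $P^{\perp}bq_{i}=P^{\perp}bv_{i}v_{i}^{\dagger}=0$ for all $i$, giving $P^{\perp}bP=0$, and applying this to $b^{\dagger}$ and taking adjoints gives $PbP^{\perp}=0$; thus $P$ is central, so $P=I$. Now $e_{ij}:=v_{i}v_{j}^{\dagger}\in\mathcal{B}$ satisfy the matrix-unit relations $e_{ij}e_{kl}=\delta_{jk}e_{il}$, $e_{ij}^{\dagger}=e_{ji}$, $\sum_{i}e_{ii}=I$; since $e_{1i}be_{j1}\in q\mathcal{B}q=\mathbb{C}e_{11}$, the identity $b=\sum_{i,j}e_{ii}be_{jj}=\sum_{i,j}e_{i1}(e_{1i}be_{j1})e_{1j}$ shows $\mathcal{B}=\mathrm{span}\{e_{ij}\}$. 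Finally, putting $\mathcal{K}_{2}:=q\mathcal{K}$, $\mathcal{K}_{1}:=\mathbb{C}^{d}$, the map $\mathbb{C}^{d}\otimes\mathcal{K}_{2}\to\mathcal{K}$, $\xi_{i}\otimes\eta\mapsto e_{i1}\eta$, is an isometric isomorphism (isometry from $e_{1i}e_{j1}=\delta_{ij}q$, surjectivity from $\sum_{i}e_{i1}e_{1i}=I$), and under it $e_{ij}$ becomes $|\xi_{i}\rangle\langle\xi_{j}|\otimes I$, so $\mathcal{B}=\mathcal{L}(\mathcal{K}_{1})\otimes\mathcal{I}(\mathcal{K}_{2})$ with $\mathcal{I}(\mathcal{K}_{2}):=\mathbb{C}\cdot I_{\mathcal{K}_{2}}$.

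\textbf{Stage 3 (reassembly and the center statement) and the main obstacle.} Applying Stage 2 to each $\mathcal{A}_{\alpha}$ on $\mathcal{H}^{\alpha}$ gives $\mathcal{H}^{\alpha}=\mathcal{H}_{1}^{\alpha}\otimes\mathcal{H}_{2}^{\alpha}$ and $\mathcal{A}_{\alpha}=\mathcal{L}(\mathcal{H}_{1}^{\alpha})\otimes\mathcal{I}(\mathcal{H}_{2}^{\alpha})$; combining with Stage 1 yields exactly the decompositions \eqref{eq:Cstar-directsum}, \eqref{eq:Cstar-tensor}, \eqref{eq:Cstar-alg-dec}. For the last assertion, $\mathcal{Z}(\mathcal{A})=\bigoplus_{\alpha}\mathcal{Z}(\mathcal{A}_{\alpha})=\bigoplus_{\alpha}\mathbb{C}\,I_{\mathcal{H}^{\alpha}}=\mathrm{span}\{P_{\alpha}\}$, and the $P_{\alpha}$ are precisely the orthogonal projections onto the summands $\mathcal{H}^{\alpha}$. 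I expect the crux of the argument to be the matrix-unit construction in Stage 2 — equivalently, showing that a minimal projection in a C*-factor has central support $I$, so that all minimal projections are Murray--von Neumann equivalent and their final projections tile $I$. This is exactly where both the $\dagger$-structure (to form $w^{\dagger}w$ and normalize within $\mathcal{B}$) and the triviality of the center are used essentially; once the $e_{ij}$ are available, the tensor identification and the center statement are routine bookkeeping.
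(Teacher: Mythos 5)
The paper does not prove this statement: it records it as a known structure theorem and defers to the textbook reference \cite{Cstar} (Takesaki). So there is no in-paper argument to compare against. Your proposal is a correct and essentially self-contained proof along the standard route. Stage 1 (simultaneously diagonalize the commutative center to get the minimal central projections $P_\alpha$, cut $\mathcal{A}$ into factors $\mathcal{A}_\alpha=P_\alpha\mathcal{A}$, and note each has trivial center) is sound, including the point that a central element of $\mathcal{A}_\alpha$, being of the form $P_\alpha z P_\alpha$, already commutes with all of $\mathcal{A}$ and hence lies in $\mathrm{span}\{P_\beta\}$. Stage 2 is the usual matrix-unit construction in a finite-dimensional factor: minimal projection $q$ with $q\mathcal{B}q=\mathbb{C}q$, a maximal orthogonal family of projections $q_i$ equivalent to $q$ via partial isometries $v_i$, the argument that $P=\sum q_i$ must be central (using $w=P^\perp b v_i$ and $w^\dagger w\in\mathbb{C}q$) and hence $P=I$, matrix units $e_{ij}=v_iv_j^\dagger$, the expansion $b=\sum e_{i1}(e_{1i}be_{j1})e_{1j}$, and the unitary $\mathbb{C}^d\otimes q\mathcal{K}\cong\mathcal{K}$ identifying $\mathcal{B}$ with $\mathcal{L}(\mathbb{C}^d)\otimes\mathbb{C}I$. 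Stage 3 assembles these and the center statement follows. All the small checks (closure of $w$ in $\mathcal{B}$ since $I-P\in\mathcal{B}$, $v_j^\dagger v_k=0$ for $j\neq k$, isometry and surjectivity of the tensor identification) go through. In short: where the paper cites, you supply a correct standard proof; the only thing to flag is that you might have noted explicitly that $I\in\mathcal{B}$ is being used both to form $P^\perp$ and in the reduction $P\in\{0,I\}$, but the argument is complete as written.
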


Here and later, $\alpha$ denotes this direct sum index (on some finite
range). $\mathcal{I}\left(\mathcal{H}_{2}^{\alpha}\right)$ here denotes
the trivial 1-dimensional algebra $\mathbb{C}\cdot I$ on $\mathcal{H}_{2}^{\alpha}$.
A proof for this fact can be found in \cite{Cstar}.

\subsection{The induced algebra by a Hamiltonian}
\begin{defn*}[induced algebra - Definition \ref{def:induced-algebra}]
\label{apxdef:induced-algebra}Let $h$ be an operator on a tensor product
Hilbert space $\mathcal{H}_{q_{1}}\otimes\mathcal{H}_{q_{2}}$ and
let $h=\sum_{i=1}^{m}h_{q_{1}}^{i}\otimes h_{q_{2}}^{i}$ be a Schmidt
decomposition\footnote{that is to say: $h_{q_{1}}^{i}\in\mathcal{L}\left(\mathcal{H}_{q_{1}}\right)$,
$h_{q_{2}}^{i}\in\mathcal{L}\left(\mathcal{H}_{q_{2}}\right)$  for each $i$ and the that sets $\left\{h^i_{q_1}\right\}^m_{i=1}$, $\left\{h^i_{q_2}\right\}^m_{i=1}$ are orthogonal with respect to the Hilbert-Schmidt inner product i.e $tr({h^i_{q_l}}^\dagger\cdot{h^j_{q_l}})=0$ for any $i\ne j$ and $l=1,2$)} of $h$.
The induced algebra of $h$ on $\mathcal{H}_{q_{1}}$ is denote by
$\mathcal{A}_{\mathcal{H}_{q_{1}}}^{h}$ or in short $\mathcal{A}_{q_{1}}^{h}$
and is defined to be the C*-algebra  $\langle h^i_{q_1}\rangle^m_{i=1}$ (note that by definition $I\in \a^{h}_{q_1}$).
\end{defn*}

\begin{claim}[induced algebra is independent on decomposition]
\label{inducedalgebra_indepednent}
In the case where $h$ is Hermitian, the induced algebra is independent
on the chosen decomposition. In fact, even if it is not a Schmidt
decomposition, the algebra $\a_{q_{1}}^{h}$ will remain the same
as long as the set $\left\{ h_{2}^{i}\right\} _{i}$ is linearly independent.
\end{claim}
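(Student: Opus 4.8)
The plan is to isolate the only ingredient of Definition~\ref{def:induced-algebra} that could depend on the decomposition, namely the first--factor operators $h_{q_1}^1,\dots,h_{q_1}^m$, and to show that although these individual operators are not canonical, the linear subspace $\mathrm{Sp}\{h_{q_1}^i\}_{i=1}^m\subseteq\mathcal{L}(\mathcal{H}_{q_1})$ that they span is attached to $h$ intrinsically, as soon as $\{h_{q_2}^i\}_{i=1}^m$ is linearly independent. Since in any Schmidt decomposition the family $\{h_{q_2}^i\}$ is orthogonal, hence linearly independent, the statement for Hermitian $h$ is a special case of this more general claim, so I would just prove the general version and not use Hermiticity at all.

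The argument splits into two small steps. First I would record the elementary reduction that for any finite set $S\subseteq\mathcal{L}(\mathcal{H}_{q_1})$ one has $\langle S\rangle=\langle \mathrm{Sp}(S)\rangle$ (both algebras being understood to contain $I$, per the paper's convention): indeed a C*-algebra is in particular a linear subspace, so $\mathrm{Sp}(S)\subseteq\langle S\rangle$ yields one inclusion, and $S\subseteq\mathrm{Sp}(S)$ yields the other. Applied with $S=\{h_{q_1}^i\}$, this reduces the claim to showing that $\mathrm{Sp}\{h_{q_1}^i\}_{i=1}^m$ is the same for any two admissible decompositions. For the second step I would give an intrinsic description of that span: viewing $h$ as a tensor in $\mathcal{L}(\mathcal{H}_{q_1})\otimes\mathcal{L}(\mathcal{H}_{q_2})$ and, for a linear functional $\varphi$ on $\mathcal{L}(\mathcal{H}_{q_2})$, setting $h^{\varphi}:=(\mathrm{id}\otimes\varphi)(h)=\sum_{i=1}^m\varphi(h_{q_2}^i)\,h_{q_1}^i\in\mathcal{L}(\mathcal{H}_{q_1})$, I would prove $\mathrm{Sp}\{h_{q_1}^i\}_{i=1}^m=\{\,h^{\varphi}\mid \varphi\in\mathcal{L}(\mathcal{H}_{q_2})^{*}\,\}$. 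The right--hand side depends only on $h$, so this finishes the proof. The inclusion $\supseteq$ is read off the formula for $h^{\varphi}$; for $\subseteq$, here is exactly where linear independence of $\{h_{q_2}^i\}$ is used: extend this family to a basis of $\mathcal{L}(\mathcal{H}_{q_2})$ and take the associated dual functionals $\varphi_j$ with $\varphi_j(h_{q_2}^i)=\delta_{ij}$, so that $h^{\varphi_j}=h_{q_1}^j$ and every generator lies in the right--hand set. Combining the two steps, $\mathcal{A}_{q_1}^h=\langle h_{q_1}^i\rangle=\langle \mathrm{Sp}\{h_{q_1}^i\}\rangle=\langle\{h^{\varphi}\mid\varphi\}\rangle$ is a function of $h$ alone.

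This is essentially routine, so there is no real obstacle; the only points that genuinely require attention are (i) recognizing that linear independence of the \emph{second}--factor family is precisely the hypothesis that makes the dual--basis step legitimate (and that one does \emph{not} need $\{h_{q_1}^i\}$ linearly independent, which is why this covers non--Schmidt decompositions as well), and (ii) observing that it is immaterial whether $\mathrm{Sp}\{h_{q_1}^i\}$ is closed under $\dagger$, since passing to the generated C*-algebra closes it up regardless---so Hermiticity of $h$ enters only through guaranteeing the existence of a Schmidt decomposition in the first place.
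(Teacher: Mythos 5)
Your proof is correct, and it is a mild but genuine repackaging of what the paper does. The paper proceeds by directly comparing two decompositions: it fixes one decomposition $\sum_i h_{q_1}^i\otimes h_{q_2}^i$, extends $\{h_{q_2}^i\}$ to a basis of $\mathcal{L}(\mathcal{H}_{q_2})$, expands the second decomposition's right factors in that basis, substitutes, and reads off by uniqueness of coefficients that each $h_{q_1}^i$ is a linear combination of the $\hat h_{q_1}^j$, then invokes symmetry. You instead give an intrinsic, decomposition-free description of the relevant object: you first observe $\langle S\rangle=\langle \mathrm{Sp}(S)\rangle$ to reduce to the span, and then identify $\mathrm{Sp}\{h_{q_1}^i\}$ with the \emph{contraction set} $\{(\mathrm{id}\otimes\varphi)(h):\varphi\in\mathcal{L}(\mathcal{H}_{q_2})^*\}$, the nontrivial inclusion again coming from extending to a basis and applying dual functionals. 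The underlying mechanism (linear independence of the second-factor family lets one recover the first-factor span via dual functionals) is identical in both, but your version isolates a canonical invariant and makes it immediate that the answer depends on $h$ alone, avoiding the symmetry step. Your side remark that Hermiticity is never used in the argument (only linear independence of $\{h_{q_2}^i\}$) is accurate and matches the paper's own ``in fact, even if it is not a Schmidt decomposition'' clause, though note that Schmidt decompositions exist for arbitrary operators, not only Hermitian ones, so Hermiticity is not needed even for existence.
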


\begin{proof}
Write two decompositions:
\[
\sum_{i}h_{q_{1}}^{i}\otimes h_{q_{2}}^{i}=h=\sum_{j}\hat{h}_{q_{1}}^{j}\otimes\hat{h}_{q_{2}}^{j}
\]
where the sets $\left\{ h_{q_{2}}^{i}\right\} _{i},\left\{ \hat{h}_{q_{2}}^{j}\right\} _{j}$
are linearly independent. We show that the induced algebra according
to the first decomposition $\a$ is contained in the induced algebra
according to the second decomposition $\hat{\a}$. Then, the equality
follows by symmetry. Let us complete the set $\left\{ h_{q_{2}}^{i}\right\} _{i}$
into a basis of $\mathcal{L}\left(\h_{q_{2}}\right)$, by the operators
$\left\{ h_{q_{2}}^{i'}\right\} _{i'}$. We can thus write the $\hat{h}_{q_{2}}^{j}$
operators in terms of this basis: 
\[
\hat{h}_{q_{2}}^{j}=\sum_{i}c_{i,j}h_{q_{2}}^{i}+\sum_{i^{\prime}}c_{i',j}h_{q_{2}}^{i^{\prime}}
\]
 with $c_{i,j}$, $c_{i',j}$ complex numbers. Setting the equations
above, in the second decomposition we get:
\[
h=\sum_{j}\hat{h}_{q_{1}}^{j}\otimes\hat{h}_{q_{2}}^{j}=\sum_{i}\left(\sum_{j}c_{i,j}\hat{h}_{q_{1}}^{j}\right)\otimes h_{q_{2}}^{i}+\sum_{i'}\left(\sum_{j}c_{i',j}\hat{h}_{q_{1}}^{j}\right)\otimes h_{q_{2}}^{i'}
\]
 Comparing this with $h=\sum_{i}h_{q_{1}}^{i}\otimes h_{q_{2}}^{i}$
we conclude by linear independence that $h_{q_{1}}^{i}=\sum_{j}c_{i,j}\hat{h}_{q_{1}}^{j}$
for any $i$ which mean that each $h_{q_{1}}^{i}$ is in $\hat{\a}$
and so $\a\subseteq\hat{\a}$ 

\end{proof}
Note that also if a Hamiltonian $h$ acts on multiple particles $\mathcal{H}=\bigotimes_{i=1}^{n}\mathcal{H}_{q_{i}}$
then we can combine a subset of particles together, that is write
$\mathcal{H}=\left(\bigotimes_{i=1}^{r}\mathcal{H}_{q_{i}}\right)\otimes\left(\bigotimes_{i=r+1}^{n}\mathcal{H}_{q_{i}}\right)$
and speak of the algebra that $h$ induces on the first factor. By
the notation given in definition \ref{def:induced-algebra}, this
is simply the algebra $\mathcal{A}_{\bigotimes_{i=1}^{r}\mathcal{H}_{q_{i}}}^{h}$
which we will denote in short by $\mathcal{A}_{q_{1},...,q_{r}}^{h}$.
We call this the algebra that $h$ induces on the particles $q_{1},...,q_{r}$.
\begin{lem}[connection between the induced algebra of a system and its subsystems]
\label{lemma:subalgebra}Suppose $h$ is a Hamiltonian acting on
$\mathcal{H}=\mathcal{H}_{q_{1}}\otimes\mathcal{H}_{q_{2}}\otimes\mathcal{H}_{q_{3}}$.
Then $\mathcal{A}_{q_{1},q_{2}}^{h}\subseteq\mathcal{A}_{q_{1}}^{h}\otimes\mathcal{A}_{q_{2}}^{h}.$ 
\end{lem}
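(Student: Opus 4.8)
The plan is to fabricate, out of operator Schmidt decompositions of $h$ across the three relevant bipartitions, one common decomposition of $h$ across the cut $q_1q_2\mid q_3$ whose $q_1q_2$-side operators are elementary tensors $a_j\otimes b_k$ with $a_j\in\mathcal{A}_{q_1}^h$ and $b_k\in\mathcal{A}_{q_2}^h$; a short C*-algebra argument then closes the gap. First I would fix a Schmidt decomposition $h=\sum_j a_j\otimes w_j$ across $q_1\mid q_2q_3$, so that $a_j\in\mathcal{L}(\mathcal{H}_{q_1})$, $w_j\in\mathcal{L}(\mathcal{H}_{q_2}\otimes\mathcal{H}_{q_3})$, both families are linearly independent, and by Definition \ref{def:induced-algebra} $\mathcal{A}_{q_1}^h=\langle a_j\rangle_j$; in particular every $a_j\in\mathcal{A}_{q_1}^h$. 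Likewise fix a Schmidt decomposition $h=\sum_k b_k\otimes u_k$ across $q_2\mid q_1q_3$, with $\{b_k\}$ linearly independent and $b_k\in\mathcal{A}_{q_2}^h$.

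The second decomposition exhibits $h$ as a member of $\mathcal{L}(\mathcal{H}_{q_1})\otimes Sp\{b_k\}_k\otimes\mathcal{L}(\mathcal{H}_{q_3})$, after the obvious reordering of tensor legs so that the $q_2$-leg sits in the middle. Now extend $\{a_j\}$ to a basis of $\mathcal{L}(\mathcal{H}_{q_1})$ and let $\{\phi_j\}$ be the dual functionals; then $w_j=(\phi_j\otimes\mathrm{id}_{q_2q_3})(h)$, and applying $\phi_j\otimes\mathrm{id}_{q_2q_3}$ to the membership just noted forces $w_j\in Sp\{b_k\}_k\otimes\mathcal{L}(\mathcal{H}_{q_3})$. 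Hence we may write $w_j=\sum_k b_k\otimes r_{jk}$ with $r_{jk}\in\mathcal{L}(\mathcal{H}_{q_3})$, and substituting back,
\[ h=\sum_{j,k}(a_j\otimes b_k)\otimes r_{jk}. \]
Finally fix a Schmidt decomposition $h=\sum_\mu M_\mu\otimes N_\mu$ across $q_1q_2\mid q_3$, so that $\{N_\mu\}$ is linearly independent and $\mathcal{A}_{q_1,q_2}^h=\langle M_\mu\rangle_\mu$ by Definition \ref{def:induced-algebra}. Extending $\{N_\mu\}$ to a basis of $\mathcal{L}(\mathcal{H}_{q_3})$ with dual functionals $\psi_\mu$ and applying $\mathrm{id}_{q_1q_2}\otimes\psi_\mu$ to both this Schmidt decomposition and the displayed equation — exactly as in the proof of Claim \ref{inducedalgebra_indepednent} — yields $M_\mu=\sum_{j,k}\psi_\mu(r_{jk})\,a_j\otimes b_k$, so $Sp\{M_\mu\}_\mu\subseteq Sp\{a_j\otimes b_k\}_{j,k}$ and therefore $\mathcal{A}_{q_1,q_2}^h=\langle M_\mu\rangle_\mu\subseteq\langle a_j\otimes b_k : j,k\rangle$. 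Since $\mathcal{A}_{q_1}^h\otimes\mathcal{A}_{q_2}^h$ is itself a C*-subalgebra of $\mathcal{L}(\mathcal{H}_{q_1}\otimes\mathcal{H}_{q_2})$ (it is closed under products and adjoints and contains $I\otimes I$) which contains every $a_j\otimes b_k$ together with the identity, it contains the C*-algebra they generate; this gives $\mathcal{A}_{q_1,q_2}^h\subseteq\mathcal{A}_{q_1}^h\otimes\mathcal{A}_{q_2}^h$, as claimed in Lemma \ref{lemma:subalgebra}.

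The only steps that need genuine care — and the ones I expect to be the main (though routine) obstacle — are the two ``slicing'' assertions: that a functional applied to the $q_1$-leg of an operator lying in $\mathcal{L}(\mathcal{H}_{q_1})\otimes Sp\{b_k\}\otimes\mathcal{L}(\mathcal{H}_{q_3})$ lands in $Sp\{b_k\}\otimes\mathcal{L}(\mathcal{H}_{q_3})$, and the analogous statement for $q_3$-leg functionals in the last step. Rather than invoke an abstract ``minimal invariant subspace'' lemma, I would simply carry explicit bases and dual bases through, which turns each of these into a one-line linear-algebra verification. Everything else I rely on — existence of operator Schmidt decompositions, the fact that the tensor product of two C*-subalgebras is again a C*-subalgebra, and Claim \ref{inducedalgebra_indepednent} — is already available, so this is the whole content of the proof.
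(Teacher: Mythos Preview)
Your proof is correct. It reaches the same conclusion as the paper's but is organized in the opposite direction. The paper starts with a Schmidt decomposition $h=\sum_i h_{q_1,q_2}^i\otimes h_{q_3}^i$ across $q_1q_2\mid q_3$ --- so the $h_{q_1,q_2}^i$ are immediately the generators of $\mathcal{A}_{q_1,q_2}^h$ --- then further Schmidt-decomposes each $h_{q_1,q_2}^i=\sum_j h_{q_1}^{i,j}\otimes h_{q_2}^{i,j}$, and uses the linear independence of $\{h_{q_2}^{i,j}\otimes h_{q_3}^i\}_{i,j}$ together with Claim~\ref{inducedalgebra_indepednent} to conclude that every $h_{q_1}^{i,j}\in\mathcal{A}_{q_1}^h$ (and symmetrically for $q_2$). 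You instead start from the $q_1\mid q_2q_3$ and $q_2\mid q_1q_3$ Schmidt decompositions, where membership of the $a_j$ and $b_k$ in the single-qubit induced algebras is tautological, and then combine them via dual-functional slicing before comparing with the $q_1q_2\mid q_3$ decomposition. The paper's route is more economical --- two nested Schmidt decompositions and one appeal to Claim~\ref{inducedalgebra_indepednent}, versus your three Schmidt decompositions and two slicing steps --- but your approach has the virtue that membership in $\mathcal{A}_{q_1}^h$ and $\mathcal{A}_{q_2}^h$ never has to be argued after the fact, since you build outward from those algebras to begin with.
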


\begin{proof}
Write:
\begin{equation}
h=\sum_{i=1}^{m}h_{q_{1},q_{2}}^{i}\otimes h_{q_{3}}^{i}\label{eq:subalgebra-lemma-schmidt1}
\end{equation}
 a Schmidt decomposition of $h$ where $h_{q_{1},q_{2}}^{i}$ act
only on $\mathcal{H}_{q_{1}}\otimes\mathcal{H}_{q_{2}}$ and $h_{q_{3}}^{i}$
act only on $\mathcal{H}_{q_{3}}$. So $\mathcal{A}_{q_{1},q_{2}}^{h}$
is generated by $\left\{ h_{q_{1},q_{2}}^{i}\right\} _{i=1}^{m}$.
Now, for every $1\leq i\leq m$ write a Schmidt decomposition for
$h_{q_{1},q_{2}}^{i}$ in respect to the tensor product $\mathcal{H}_{q_{1}}\otimes\mathcal{H}_{q_{2}}$:
\begin{equation}
\,\,\,h_{q_{1},q_{2}}^{i}=\sum_{j=1}^{r_{i}}h_{q_{1}}^{i,j}\otimes h_{q_{2}}^{i,j}\label{eq:subalgebra-lemma-schmidt2}
\end{equation}
So we now have that:
\begin{equation}
h=\sum_{i=1}^{m}h_{q_{1},q_{2}}^{i}\otimes h_{q_{3}}^{i}=\sum_{i=1}^{m}\sum_{j=1}^{r_{i}}h_{q_{1}}^{i,j}\otimes h_{q_{2}}^{i,j}\otimes h_{q_{3}}^{i}\label{eq:subalgebra-lemma-combine}
\end{equation}
The set $\{h_{q_{3}}^{i}\}_{i=1}^{m}$ is linearly independent, and
for each $1\leq i\leq m$ the set $\left\{ h_{q_{2}}^{i,j}\right\} _{j=1}^{r_{i}}$
is linearly independent. It follows that the set $\left\{ h_{q_{2}}^{i,j}\otimes h_{q_{3}}^{i}\right\} _{i,j=1}^{m,r_{i}}$
is linearly independent, and therefore by Claim 
\ref{inducedalgebra_indepednent}, $\left\{ h_{q_{1}}^{i,j}\right\} _{i,j=1}^{m,r_{i}}$ generates $\mathcal{A}_{q_{1}}^{h}$ and in particular $\left\{ h_{q_{1}}^{i,j}\right\} _{i,j=1}^{m,r_{i}}\subseteq \mathcal{A}_{q_{1}}^{h}$.
In a similar way we also get that $\left\{ h_{q_{2}}^{i,j}\right\} _{i,j=1}^{m,r_{i}}\subseteq \mathcal{A}_{q_{2}}^{h}$.
It is only left to note that
$h_{q_{1},q_{2}}^{i}$ belongs to $\mathcal{A}_{q_{1}}^{h}\otimes\mathcal{A}_{q_{2}}^{h}$
as can readily be seen in eq. \ref{eq:subalgebra-lemma-schmidt2}
and conclude that ${\mathcal{A}^{h}_{q_{1},q_{2}}}\subseteq{\mathcal{A}^{h}_{q_{1}}}\otimes{\mathcal{A}^{h}_{q_{2}}}$
as $\left\{ h_{q_{1},q_{2}}^{i}\right\}$ generates $\mathcal{A}_{q_{1},q_{2}}^{h}$.

\end{proof}

\begin{lem} [commutation of induced algebras]
\label{lemma:alg-commute}Let $l$ and $r$ be two (not necessarily
commuting) Hamiltonians acting on the Hilbert space $\mathcal{H}=\mathcal{H}_{q_{1}}\otimes\mathcal{H}_{q_{2}}\otimes\mathcal{H}_{q_{3}}$
such that $l$ acts only on $\mathcal{H}_{q_{1}}\otimes\mathcal{H}_{q_{2}}$
(and trivially on $\mathcal{H}_{q_{3}})$ and $r$ acts only on $\mathcal{H}_{q_{2}}\otimes\mathcal{H}_{q_{3}}$
(and trivially on $\mathcal{H}_{q_{1}})$. Then l and $r$ commute
if and only if $\mathcal{A}_{q_{2}}^{l}$ and $\mathcal{A}_{q_{2}}^{r}$
commute.
\end{lem}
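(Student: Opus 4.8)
The plan is to reduce everything to the structure theorem (Fact \ref{fact:Cstar}) applied to $\mathcal{A}_{q_2}^l$ and $\mathcal{A}_{q_2}^r$, exploiting the fact that these are C*-algebras living on the common factor $\mathcal{H}_{q_2}$. First I would fix Schmidt decompositions $l=\sum_i l_{q_1}^i\otimes l_{q_2}^i$ (with $\{l_{q_1}^i\}$ linearly independent) and $r=\sum_j r_{q_2}^j\otimes r_{q_3}^j$ (with $\{r_{q_3}^j\}$ linearly independent), so that $\mathcal{A}_{q_2}^l=\langle l_{q_2}^i\rangle_i$ and $\mathcal{A}_{q_2}^r=\langle r_{q_2}^j\rangle_j$ by definition of the induced algebra. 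The commutator $[l,r]$ then expands as $\sum_{i,j} l_{q_1}^i\otimes[l_{q_2}^i,r_{q_2}^j]\otimes r_{q_3}^j$. Because $\{l_{q_1}^i\}_i$ is linearly independent in $\mathcal{L}(\mathcal{H}_{q_1})$ and $\{r_{q_3}^j\}_j$ is linearly independent in $\mathcal{L}(\mathcal{H}_{q_3})$, the family $\{l_{q_1}^i\otimes r_{q_3}^j\}_{i,j}$ is linearly independent in $\mathcal{L}(\mathcal{H}_{q_1}\otimes\mathcal{H}_{q_3})$; hence $[l,r]=0$ if and only if $[l_{q_2}^i,r_{q_2}^j]=0$ for every pair $i,j$. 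That already gives one direction and the easy half of the other — if the algebras commute then certainly the generators commute, so $[l,r]=0$.

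For the converse I would argue that commutation of all the generators $\{l_{q_2}^i\}$ with all the generators $\{r_{q_2}^j\}$ forces the generated C*-algebras $\mathcal{A}_{q_2}^l$ and $\mathcal{A}_{q_2}^r$ to commute. This is a purely algebraic statement: if $S,T\subseteq\mathcal{L}(\mathcal{H})$ are sets of operators with $[s,t]=0$ for all $s\in S$, $t\in T$, then $[\langle S\rangle,\langle T\rangle]=0$. The cleanest way is to observe that $T':=\{a\in\mathcal{L}(\mathcal{H}_{q_2})\mid [a,t]=0\ \forall t\in T\}$, the commutant of $T$, is itself a C*-algebra (it is closed under linear combinations, products, and the $\dagger$ operation since $T$ can be taken closed under $\dagger$ — note the Schmidt components of a Hermitian $r$ can be chosen Hermitian, or one just adds adjoints to the generating set without changing the generated algebra); since $T'$ contains $S$ and is a C*-algebra, it contains $\langle S\rangle=\mathcal{A}_{q_2}^l$. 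Thus every element of $\mathcal{A}_{q_2}^l$ commutes with every $t\in T$, i.e.\ $T\subseteq(\mathcal{A}_{q_2}^l)'$; running the same argument once more, $(\mathcal{A}_{q_2}^l)'$ is a C*-algebra containing $T$, hence contains $\langle T\rangle=\mathcal{A}_{q_2}^r$, which is exactly the assertion that the two algebras commute.

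Combining the two directions closes the proof. The only mildly delicate point — and the one I would state carefully rather than wave at — is the reduction ``$[l,r]=0\iff$ all $[l_{q_2}^i,r_{q_2}^j]=0$'', which rests on choosing the decompositions so that the $q_1$-components and the $q_3$-components are each linearly independent (guaranteed by taking Schmidt decompositions) and then invoking that a tensor of two linearly independent families is linearly independent. Everything else is the standard ``commutant of a set is a $*$-closed algebra'' bookkeeping. I do not expect any real obstacle here; the lemma is essentially a packaging of the structure theory already set up in Claim \ref{inducedalgebra_indepednent} and Lemma \ref{lemma:subalgebra}.
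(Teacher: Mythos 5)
Your proposal is correct and follows essentially the same route as the paper: Schmidt decompositions, expanding $[l,r]=\sum_{i,j}l_{q_1}^i\otimes[l_{q_2}^i,r_{q_2}^j]\otimes r_{q_3}^j$, and using linear independence of $\{l_{q_1}^i\otimes r_{q_3}^j\}_{i,j}$ to reduce to pairwise commutation of the Schmidt components. The only difference is that you spell out, via the commutant argument, the final step that pairwise-commuting generating sets yield commuting C*-algebras, which the paper simply asserts without elaboration.
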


\begin{proof}

Write a Schmidt-decomposition for $l$ and $r$:
\begin{equation}
l=\sum_{i}l_{q_{1}}^{i}\otimes l_{q_{2}}^{i},\,\,\,\,r=\sum_{j}r_{q_{2}}^{j}\otimes r_{q_{3}}^{j}\label{eq:l-and-r-schmidt}
\end{equation}
 So:
\begin{equation}
\left[l,r\right]=\sum_{i,j}l_{q_{1}}^{i}\otimes\left[l_{q_{2}}^{i},r_{q_{2}}^{j}\right]\otimes r_{q_{3}}^{j}
\label{eq:l-and-r-commute}
\end{equation}
It is clear then that if $\mathcal{A}_{q_{2}}^{l}$ and $\mathcal{A}_{q_{2}}^{r}$
commute then $[l,r]=0$ since $l_{q_{2}}^{i}\in\mathcal{A}_{q_{2}}^{l}$
and $r_{q_{2}}^{j}\in\mathcal{A}_{q_{2}}^{r}$. For the converse,
note that $\left\{ l_{q_{1}}^{i}\right\} _{i}$ is a linearly independent
set and so is $\left\{ r_{q_{3}}^{j}\right\} _{j}.$ Hence $\left\{ l_{q_{1}}^{i}\otimes r_{q_{3}}^{j}\right\} _{i,j}$
is a linearly independent set. Consequently, $[l,r]=0$ can only occur
if $\left[l_{q_{2}}^{i},r_{q_{2}}^{j}\right]=0$ for all $i,j$ which
in turn implies that $\mathcal{A}_{q_{2}}^{l}$ and $\mathcal{A}_{q_{2}}^{r}$
commute.

\end{proof}

\begin{lem} [full operator algebra implies triviality]
\label{lem:schurs-lemma}In the case where $l$ and $r$ commute,
if the algebra $\mathcal{A}_{q_{2}}^{l}$ is the full operator algebra
$\mathcal{L}(\mathcal{H}_{q_{2}})$ then the algebra $\mathcal{A}_{q_{2}}^{r}$
is the trivial algebra $\mathcal{I}(\mathcal{H}_{q_{2}})=\mathbb{C}\cdot I$.
\end{lem}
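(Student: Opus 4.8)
The plan is to combine Lemma \ref{lemma:alg-commute} with the structure theorem for finite dimensional C*-algebras (Fact \ref{fact:Cstar}). Since $l$ and $r$ commute, Lemma \ref{lemma:alg-commute} tells us that $\mathcal{A}_{q_2}^l$ and $\mathcal{A}_{q_2}^r$ commute as subalgebras of $\mathcal{L}(\mathcal{H}_{q_2})$. So the whole statement reduces to a purely algebraic fact: if $\mathcal{A},\mathcal{B}\subseteq\mathcal{L}(\mathcal{H})$ are commuting C*-algebras and $\mathcal{A}=\mathcal{L}(\mathcal{H})$ is the full operator algebra, then $\mathcal{B}=\mathbb{C}\cdot I$.

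To prove that algebraic fact, I would argue as follows. Every element $b\in\mathcal{B}$ commutes with every element of $\mathcal{A}=\mathcal{L}(\mathcal{H})$, i.e. $b$ is in the commutant of the full operator algebra. By Schur's lemma (or directly: $b$ commutes with every rank-one operator $\ketbra{\phi}{\psi}$, and a short computation forces $b$ to be a scalar multiple of the identity), the commutant of $\mathcal{L}(\mathcal{H})$ inside $\mathcal{L}(\mathcal{H})$ is exactly $\mathbb{C}\cdot I$. Hence $b\in\mathbb{C}\cdot I$ for every $b\in\mathcal{B}$, so $\mathcal{B}\subseteq\mathbb{C}\cdot I$; since $I\in\mathcal{B}$ by definition of a C*-algebra in this paper, we get $\mathcal{B}=\mathbb{C}\cdot I=\mathcal{I}(\mathcal{H}_{q_2})$. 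Applying this with $\mathcal{A}=\mathcal{A}_{q_2}^l$ and $\mathcal{B}=\mathcal{A}_{q_2}^r$ gives the lemma.

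Alternatively, and perhaps more in the spirit of the paper's toolkit, one can invoke Fact \ref{fact:Cstar} directly: write the decomposition $\mathcal{H}_{q_2}=\bigoplus_\alpha \mathcal{H}_1^\alpha\otimes\mathcal{H}_2^\alpha$ adapted to $\mathcal{A}_{q_2}^l$. Since $\mathcal{A}_{q_2}^l=\mathcal{L}(\mathcal{H}_{q_2})$ is the full algebra, the decomposition has a single summand with $\mathcal{H}_2^\alpha$ one-dimensional, i.e. $\mathcal{H}_{q_2}=\mathcal{H}_1\otimes\mathbb{C}$ with $\mathcal{A}_{q_2}^l=\mathcal{L}(\mathcal{H}_1)\otimes\mathbb{C}$. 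The algebra $\mathcal{A}_{q_2}^r$ commutes with all of $\mathcal{L}(\mathcal{H}_1)\otimes\mathcal{I}$, so it must be contained in $\mathcal{I}(\mathcal{H}_1)\otimes\mathcal{L}(\mathbb{C})=\mathbb{C}\cdot I$, which is the claim.

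There is no real obstacle here; this is a short consequence of results already in hand. The only thing to be slightly careful about is the convention (used throughout this paper) that a C*-algebra always contains the identity operator, so that "trivial algebra" genuinely means $\mathbb{C}\cdot I$ rather than the zero algebra; this guarantees that the inclusion $\mathcal{A}_{q_2}^r\subseteq\mathbb{C}\cdot I$ is actually an equality. I expect the write-up to be just a few lines, citing Lemma \ref{lemma:alg-commute} and then either Schur's lemma or Fact \ref{fact:Cstar}.
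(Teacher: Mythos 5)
Your proof is correct and is essentially the same one-line argument the paper gives: invoke Lemma \ref{lemma:alg-commute} to conclude that $\mathcal{A}_{q_2}^l$ and $\mathcal{A}_{q_2}^r$ commute, then observe that anything commuting with all of $\mathcal{L}(\mathcal{H}_{q_2})$ lies in its center $\mathbb{C}\cdot I$. The paper phrases the final step as ``the center of a full operator algebra is trivial,'' which is the same fact you derive via Schur's lemma or via the commutant; your alternative via Fact \ref{fact:Cstar} is also fine but not needed.
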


\begin{proof}
The algebra $\mathcal{A}_{q_{2}}^{l}$ commutes with $\mathcal{A}_{q_{2}}^{r}$,
but since the center of a full operator algebra is trivial we conclude
that $\mathcal{A}_{q_{2}}^{r}=\mathbb{C}\cdot I$.
\end{proof}

\subsection{The induced algebra on a qubit}

Fact \ref{fact:Cstar} makes it easy to characterize C{*}-algebras
on a single qubit - that is subalgebras of $\mathcal{L}(\mathbb{C}^{2})$.
\begin{lem} [induced algebras on a qubit]
\label{lemma:qubit-algebra}A C{*}-algebra $\mathcal{A}$ on a qubit
is either 1 dimensional (the trivial algebra), 4 dimensional (the
full operator algebra $\mathcal{\mathcal{A}=L}(\mathbb{C}^{2})$),
or a 2 dimensional algebra - in which case $\mathcal{A}$ is generated
by a single operator $P$ which is a traceless Pauli operator. Furthermore,
if $\a'$ is another two dimensional C{*}-algebra which commutes with
$\mathcal{A}$ then $\mathcal{A}=\a'$.
\end{lem}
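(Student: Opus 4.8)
The plan is to invoke the classification of finite-dimensional C*-algebras (Fact~\ref{fact:Cstar}) applied to a subalgebra $\mathcal{A}\subseteq\mathcal{L}(\mathbb{C}^2)$, and then treat the short list of possibilities case by case. Since $\dim\mathbb{C}^2=2$, the direct-sum decomposition $\mathbb{C}^2=\bigoplus_\alpha \mathcal{H}_1^\alpha\otimes\mathcal{H}_2^\alpha$ can only take one of two shapes: either there is a single summand ($\alpha$ ranges over one value), or there are exactly two summands each of which is one-dimensional. In the single-summand case, $\mathbb{C}^2=\mathcal{H}_1\otimes\mathcal{H}_2$ with $\dim\mathcal{H}_1\cdot\dim\mathcal{H}_2=2$; so either $\dim\mathcal{H}_1=1$, giving $\mathcal{A}=\mathbb{C}\cdot I$ (the trivial $1$-dimensional algebra), or $\dim\mathcal{H}_1=2$, giving $\mathcal{A}=\mathcal{L}(\mathbb{C}^2)$ (the full $4$-dimensional algebra). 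In the two-summand case, $\mathcal{A}=\mathbb{C}\oplus\mathbb{C}$ acting diagonally with respect to the orthogonal decomposition $\mathbb{C}^2=\mathcal{H}^0\oplus\mathcal{H}^1$ into two lines; this is a $2$-dimensional algebra, and it is spanned by $I$ together with the orthogonal projection $\pi_0$ onto $\mathcal{H}^0$.

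\textbf{Identifying the $2$-dimensional algebra as $\langle P\rangle$.} In this last case I would write $P := \pi_0-\pi_1 = 2\pi_0 - I$, which is Hermitian, squares to $I$, and is traceless (its eigenvalues are $+1$ on $\mathcal{H}^0$ and $-1$ on $\mathcal{H}^1$), hence is a Pauli operator in the sense of the paper (a Pauli matrix up to a unitary change of basis: conjugate the standard basis to an orthonormal basis adapted to $\mathcal{H}^0\oplus\mathcal{H}^1$ and $P$ becomes $\sigma_z$). Moreover $\langle P\rangle = \mathrm{Sp}\{I,P\} = \mathcal{A}$, so $\mathcal{A}$ is generated by the single traceless Pauli $P$. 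This takes care of the trichotomy.

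\textbf{The ``furthermore'' clause.} Now suppose $\mathcal{A}'$ is another $2$-dimensional C*-algebra on the qubit commuting with $\mathcal{A}$. By the trichotomy just proved, $\mathcal{A}' = \langle P'\rangle$ for some traceless Pauli $P'$, and likewise $\mathcal{A}=\langle P\rangle$. Since $\mathcal{A}$ and $\mathcal{A}'$ commute, in particular $[P,P']=0$. Two traceless Pauli operators on a single qubit either commute (iff they are proportional, i.e.\ $P'=\pm P$, since a $\pm1$-eigenbasis of $P$ must also be an eigenbasis of $P'$ and a traceless diagonal involution is determined up to sign) or anticommute; as they commute, $P'=\pm P$, hence $\langle P'\rangle = \langle P\rangle$, i.e.\ $\mathcal{A}'=\mathcal{A}$. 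To see the dichotomy concretely: after a change of basis take $P=\sigma_z$; any operator commuting with $\sigma_z$ is diagonal, and a diagonal traceless involution is $\pm\sigma_z$.

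\textbf{Main obstacle.} None of the steps is deep; the only place requiring mild care is the bookkeeping in the two-summand case of Fact~\ref{fact:Cstar} --- checking that the tensor factors $\mathcal{H}_1^\alpha\otimes\mathcal{H}_2^\alpha$ are forced to be one-dimensional when there are two summands (this is exactly the dimension count $\sum_\alpha \dim\mathcal{H}_1^\alpha\dim\mathcal{H}_2^\alpha = 2$ with at least two nonzero terms), and verifying that the resulting algebra is genuinely $2$-dimensional rather than collapsing. Everything else is the elementary linear algebra of $2\times 2$ matrices, so I would keep that part terse.
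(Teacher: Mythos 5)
Your proof is correct and follows the same overall route as the paper: apply Fact~\ref{fact:Cstar} to $\mathcal{A}\subseteq\mathcal{L}(\mathbb{C}^2)$, enumerate the possible direct-sum/tensor shapes by dimension counting, and then handle the commuting-algebra clause via $[P,P']=0\Rightarrow P'=\pm P$. The one place you diverge in detail is in how the traceless Pauli generator is produced: the paper starts from an arbitrary non-scalar element of $\mathcal{A}$ and massages it in stages (Hermitize, subtract the trace, normalize to a unitary), while you simply set $P=\pi_0-\pi_1$ directly from the two orthogonal spectral projectors supplied by the structure theorem, which makes Hermiticity, $P^2=I$, and tracelessness immediate. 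This is a modest streamlining rather than a different method; both arguments are complete and the logic is identical, so there is nothing to flag.
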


\begin{proof}
By fact \ref{fact:Cstar}, $\mathcal{A}$ admits a direct sum decomposition
of full operator algebras. Since $\dim\mathcal{H}=2$ then the dimensions
of the decomposition components as in eq. \ref{eq:Cstar-directsum},\ref{eq:Cstar-tensor}
can take exactly three forms: if there is no direct sum so we can
write $\mathcal{H}=\mathcal{H}_{1}\otimes\mathcal{H}_{2}$ and then
either $\dim\mathcal{H}_{1}=1\,\text{ and }\dim\mathcal{H}_{2}=2$
or $\dim\mathcal{H}_{1}=2\text{ and }\dim\mathcal{H}_{2}=1$. The
first implies that $\dim\mathcal{A}=1$ and the second implies that
$\dim\mathcal{A}=4$ (eq. \ref{eq:Cstar-alg-dec}). The third and
last option is that the direct sum is of length 2 and that each $\mathcal{H}_{1}^{\alpha}$
is of dimension 1, and this implies that $\dim\mathcal{A}=2$. 

In the latter case, there must be some $P\in\mathcal{A}\backslash\mathbb{C}\cdot I$.
We may assume that $P$ is Hermitian for the following reasoning:
if $P$ is anti-Hermitian ($P^\dagger=-P$) 
then $iP$ is an Hermitian operator in $\mathcal{A}\backslash\mathbb{C}\cdot I$, 
and if $P$ is not anti-Hermitian then $P+P^{\dagger}$ is a non-zero 
Hermitian operator in $\mathcal{A}\backslash\mathbb{C}\cdot I$. We may also
assume that $P$ is traceless for if it isn't then $P-\frac{1}{2}tr(P)\in\mathcal{A}\backslash\mathbb{C}\cdot I$
is. By normalizing $P$ we may assume that $P$ is also unitary.

Finally, if $\a'$ commutes with $\a$ and is generated by some traceless 
Pauli operator $P'$, then $[P,P']=0$ implying that $P=\pm P'$ and
so $\a=\langle P\rangle=\langle\pm P\rangle=\a'$

\end{proof}

\begin{remark}
We distinguished between the term Pauli operator and Pauli matrix.
A Pauli matrix refers to any of the four matrices $I,X,Y,Z$, whereas
a Pauli operator is a linear operator which is represented by a Pauli
matrix in some orthonormal basis. Note that a traceless operator
on a qubit is Pauli if and only if it is Hermitian and unitary. 
Indeed any traceless Hermitian and unitary operator on a qubit has two 
eigenvalues 1 and -1 and so it is represented by a Pauli matrix for some 
orthonormal basis.
\end{remark}

\subsection{A few more simple but useful lemmas}

\begin{lem}[anti-commutation lemma]
\label{lem:anti-commute}
Let $C,D$ be two anti-commuting non-zero opertors on a single qubit. 
Then one of the following holds:
\begin{enumerate}
\item $C$ and $D$ are both invertable, in which case $C=cZ$ and $D=dX$ $(c,d\ne 0)$ for some choice of basis. In this case we say that $C$,$D$ anti-commute \textbf{regularly}.
\item $C$ and $D$ are proportional to complementary orthogonal projectors: that is $C=c(I+Z)$ and $D=d(I-Z)$ for some choice of basis $(c,d\ne 0)$. In this case we say that $C$,$D$ anti-commute \textbf{irregularly}.
\end{enumerate}
\end{lem}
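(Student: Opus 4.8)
\textbf{Proof plan for Lemma \ref{lem:anti-commute}.}

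The plan is to reduce everything to a normal-form computation on a single qubit. Start by choosing an orthonormal basis in which $C$ is diagonal (possible since — wait, $C$ need not be Hermitian). Instead, I would first observe that anti-commuting forces a rigidity on the spectra. Concretely, from $CD = -DC$ we get $C^2 D = C(CD) = C(-DC) = -(CD)C = (DC)C = D C^2$, so $C^2$ commutes with $D$; symmetrically $D^2$ commutes with $C$. On a qubit, an operator that is not a scalar multiple of $I$ has a $1$-dimensional eigenspace decomposition or a single Jordan block. I would split into cases according to whether $C$ is invertible.

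\emph{Case 1: both $C$ and $D$ invertible.} Then $D C D^{-1} = -C$, so $C$ is conjugate to $-C$; hence $\operatorname{tr} C = 0$ and $C$ has eigenvalues $\{c,-c\}$ with $c\ne 0$, so $C$ is diagonalizable and, in a suitable basis, $C = cZ$. Conjugation by $D$ swaps the two eigenlines of $C$ (it sends the $c$-eigenspace to the $(-c)$-eigenspace and vice versa), so in that basis $D$ is anti-diagonal: $D = \begin{pmatrix} & d_1 \\ d_2 & \end{pmatrix}$ with $d_1,d_2\ne 0$ by invertibility. Finally, rescaling the two basis vectors independently — which keeps $C$ proportional to $Z$ — lets us arrange $d_1 = d_2 = d$, giving $D = dX$. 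This is the regular case.

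\emph{Case 2: $C$ not invertible (the case $D$ not invertible is symmetric).} Since $C \ne 0$ and $\det C = 0$, either $C$ is diagonalizable with eigenvalues $\{0, c\}$, $c \ne 0$, or $C$ is a nonzero nilpotent (a single Jordan block with eigenvalue $0$). I would rule out the nilpotent subcase: if $C^2 = 0$ then $C$ has rank $1$ with $\operatorname{im} C = \ker C$; writing $C = \ketbra{u}{w}$ with $\braket{w}{u} = 0$, the relation $CD = -DC$ would force $D$ to preserve the line $\operatorname{span}\ket{w}$ up to the anti-commutation sign bookkeeping, and a direct $2\times 2$ computation shows no nonzero $D$ anti-commutes with a rank-one nilpotent on a qubit — so this subcase is vacuous. (I expect this nilpotent exclusion to be the main obstacle: it is the one place where "single qubit" is genuinely used and where a careless argument could miss a case; the clean way is to just parametrize $D$ by four entries in the basis adapted to $C$ and solve the linear system $CD + DC = 0$.) So $C$ is diagonalizable; pick a basis with $C = c\,\tfrac{I+Z}{2}\cdot 2 = c(I+Z)$ up to scalar, i.e. $C = c\,\mathrm{diag}(2c, 0)$ — more cleanly, in the eigenbasis $C = c\begin{pmatrix}1 & \\ & 0\end{pmatrix}$. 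Then $CD + DC = 0$ read off entrywise forces the diagonal entries of $D$ to vanish and leaves $D$ anti-diagonal again; but now anti-commuting with a diagonal projector-like $C$ that is \emph{not} traceless is different — redo the entrywise computation: with $C = \mathrm{diag}(a,0)$, $a\ne 0$, and $D = \begin{pmatrix} d_{11} & d_{12} \\ d_{21} & d_{22}\end{pmatrix}$, $CD + DC = \begin{pmatrix} 2a\,d_{11} & a\,d_{12} \\ a\,d_{21} & 0 \end{pmatrix} = 0$ forces $d_{11} = d_{12} = d_{21} = 0$, so $D = \mathrm{diag}(0, d_{22})$ with $d_{22} \ne 0$ (as $D\ne 0$). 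Rewriting in the $(I,Z)$ basis: $C = \tfrac a2(I + Z)$ and $D = \tfrac{d_{22}}{2}(I - Z)$, which is exactly the irregular case after renaming $c = a/2$, $d = d_{22}/2$.

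Assembling the two cases yields the dichotomy. I would present Case 1 in full and note that Case 2 is where one must be slightly careful, handling the nilpotent possibility by the explicit $2\times2$ linear-system argument; everything else is routine rescaling of basis vectors, which preserves the stated normal forms.
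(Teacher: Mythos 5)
Your proof follows the same broad plan as the paper's — first reduce $C$ to a diagonal normal form, then split on invertibility and solve $CD+DC=0$ entrywise — but the step you yourself flag as ``the main obstacle,'' the nilpotent exclusion in Case~2, is wrong, and the error is substantive. You assert that no nonzero $D$ anti-commutes with a rank-one nilpotent $C$ on a qubit. This is false: with $C$ the standard nilpotent, the entrywise system $CD+DC=0$ reduces to $d_{21}=0$ and $d_{11}+d_{22}=0$, with $d_{12}$ free, giving the two-parameter family
\[
C=\begin{pmatrix}0&1\\0&0\end{pmatrix},\qquad D=\begin{pmatrix}a&b\\0&-a\end{pmatrix},
\]
almost all of which are nonzero; in particular $D=Z$ anti-commutes with this $C$. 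So the nilpotent subcase is not vacuous.

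Worse, that example is a counterexample to the Lemma as literally stated: $C$ is not unitarily diagonalizable and $D$ is invertible, so the pair falls under neither alternative of the dichotomy. The Lemma is only correct under an extra hypothesis — e.g.\ that $C$ and $D$ are Hermitian (or at least normal, so that each has an orthonormal eigenbasis). The paper's own proof relies on this silently as well: it opens with ``choose a diagonalizing basis for $C$,'' and in its Case~1 it infers ``$C$ is traceless and nonzero, hence invertible,'' both of which fail for nilpotent $C$. You should either add the Hermiticity hypothesis explicitly or flag the counterexample. Once $C,D$ are Hermitian your nilpotent subcase truly is vacuous (a Hermitian nilpotent is zero), and the rest of your argument goes through and essentially matches the paper's — with the further caveat that the ``rescale the two basis vectors to get $d_1=d_2$'' step in your Case~1 also needs $|d_1|=|d_2|$, which holds when $D$ is Hermitian but not for a general invertible anti-diagonal $D$.
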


\begin{proof}

First choose a diagonalizing basis for $C$ so we have $C\in \langle Z \rangle$.
\textbf{Case 1:}
If $D$ is invertible we may write $DCD^{-1}=-C$. Hence $C$ is traceless and since $C\ne 0$ it follows that $C$ is invertible as well. Hence $CDC^{-1}=-D$ and so $D$ is also traceless. Thus up to multiplication by non-zero scalars (namely $c,d$) we have that $C$ and $D$ have eigenvalue values 1,-1. Therefore, up to a choice of basis $C=Z$. To show that we may choose the basis such that not only $C=Z$ but also $D=X$ we need to find a unitary $U$ such that $U^{\dagger}CU=U^{\dagger}ZU=Z$ and such that $U^{\dagger}DU=X$. Write $D$ in the Pauli
basis: $D=d_{x}X+d_{y}Y$ with $d_{x}^{2}+d_{y}^{2}=1$ (it does
not have an $I$ component since it is traceless and it does not have a $Z$ component since $\left\{ Z,D\right\} =0$) and choose
$U=\left|0\right\rangle \left\langle 0\right|+\mu\left|1\right\rangle \left\langle 1\right|$
where $\mu=d_{x}+id_{y}$ . Clearly $U$ commutes
with $Z$ so it is left show that $U^{\dagger}DU=X$, indeed:
\begin{equation}
U^{\dagger}DU\left|0\right\rangle =U^{\dagger}\left(d_{x}X+d_{y}Y\right)\left|0\right\rangle = U^{\dagger}\left(d_{x}\left|1\right\rangle +id_{y}\left|1\right\rangle \right)= \mu U^\dagger \ket 1 = \ket 1
\end{equation}
and:
\begin{equation}
U^{\dagger}DU\left|1\right\rangle =\mu U^{\dagger}\left(d_{x}X+d_{y}Y\right)\left|1\right\rangle =\mu U^{\dagger}\left(d_{x}\left|0\right\rangle -id_{y}\left|0\right\rangle \right)= \mu \cdot \mu^{-1} U^\dagger \ket 0 = \ket 0
\end{equation}
Therefore $U^{\dagger}DU=X$.
\textbf{Case 2:}
Otherwise, $D$ is not invertible, and therefore $C$ is not invertible as well (by the above). Since they are both non-zero it follows that have rank 1, i.e they are orthogonal projectors on a 1-dimensional subspace, and in particular $CD$ is of rank at most 1. The fact that $CD=-DC$ implies that $CD$ is traceless and so it follows that $CD$ cannot be of rank 1 meaning that $CD=0$, in particular $C$,$D$ commute. It follows that in their diagonalizing basis $C$ and $D$ are complementary orthogonal projectors in the computational basis (up to multiplication by a non-zero scalar) and so we may choose a basis for which $C=I+Z$, $D=I-Z$.

\end{proof}

\begin{lem}[unitary equivalence of Pauli group representations]
\label{lem:unitaryequiv}
Let $C$,$D$ be (any) two operators on a single qubit. Then for some choice of basis $C\in Sp\{I,Z\}$ and $D\in Sp\{I,X,Z\}$
\end{lem}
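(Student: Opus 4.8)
The statement (Lemma \ref{lem:unitaryequiv}) asserts that for any two single-qubit operators $C, D$, there is an orthonormal basis in which $C$ lies in $\mathrm{Sp}\{I,Z\}$ and $D$ lies in $\mathrm{Sp}\{I,X,Z\}$. The first move is to reduce to the traceless case: writing $C = \tfrac12\Tr(C)\cdot I + C_0$ and $D = \tfrac12\Tr(D)\cdot I + D_0$ with $C_0, D_0$ traceless, it suffices to put $C_0$ into $\mathrm{Sp}\{Z\}$ and $D_0$ into $\mathrm{Sp}\{X,Z\}$, since the identity component survives any basis change. So from now on assume $C, D$ traceless.

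**Main argument.** If $C_0 = 0$ the claim is trivial (any basis works, and then further rotate about $Z$ to kill the $Y$-component of $D_0$ as below), so assume $C_0 \neq 0$. A traceless operator on $\mathbb{C}^2$ need not be Hermitian, but I would first diagonalize it: $C_0$ has eigenvalues $\pm\lambda$ for some $\lambda \neq 0$ (its characteristic polynomial is $t^2 + \det C_0$, and $\det C_0 \neq 0$ would give two distinct eigenvalues; if $\det C_0 = 0$ then $C_0$ is nilpotent of rank $1$). Split into two cases. If $C_0$ is diagonalizable (the generic case, including all $\det C_0 \neq 0$), pick an eigenbasis; then $C_0 = \lambda Z$ after normalizing, so $C_0 \in \mathrm{Sp}\{Z\}$. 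In this basis write $D_0 = d_x X + d_y Y + d_z Z$ (no $I$ term since $D_0$ is traceless). Now conjugate by $U = \mathrm{diag}(1, e^{i\phi})$, which commutes with $Z$ (hence fixes $C_0$) and rotates the $X$–$Y$ plane: $U^\dagger(d_x X + d_y Y)U$ traces out all of $\mathrm{Sp}\{X,Y\}$ with the same "length," so for a suitable $\phi$ the $Y$-component is eliminated and $U^\dagger D_0 U \in \mathrm{Sp}\{X, Z\}$. This is exactly the trick already used in the proof of Lemma \ref{lem:anti-commute}, Case 1, so I would cite that computation rather than redo it. If instead $C_0$ is nilpotent (rank $1$, not diagonalizable), then $C_0 = c\,\ketbra{0}{1}$ in a suitable basis; but $\ketbra{0}{1} = \tfrac12(X + iY)$, which is not in $\mathrm{Sp}\{Z\}$ — so here I'd instead observe that $C_0$, being rank-one, can be written (in an appropriately chosen orthonormal basis, using the singular value decomposition) so that... hmm, actually the cleanest route is: every $2\times2$ matrix is unitarily similar to an upper-triangular matrix (Schur), and a traceless one then has the form $\begin{pmatrix} a & b \\ 0 & -a\end{pmatrix}$; conjugating by a diagonal unitary scales $b$ to $|b|$, giving $a Z + \tfrac{|b|}{2}(X + iY)$, still not good enough. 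Let me reconsider.

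**Reconsidered main obstacle.** The real subtlety is the non-Hermitian case, and the fix is to not diagonalize $C$ first. Instead: consider the Hermitian operator $C + C^\dagger$ (or $i(C - C^\dagger)$ if the former vanishes; at least one is a nonzero traceless Hermitian operator unless $C$ is itself a scalar, already handled). Diagonalize a nonzero traceless Hermitian combination to fix $Z$; this does not immediately put $C$ in $\mathrm{Sp}\{Z\}$ though. So the genuinely correct approach is a dimension count: the set of bases is parametrized by $PU(2) \cong SO(3)$ acting on the $3$-dimensional space $\mathrm{Sp}\{X,Y,Z\}$ of traceless operators; forcing $C_0$ into the $1$-dimensional $\mathrm{Sp}\{Z\}$ is $2$ conditions, forcing $D_0$ into the $2$-dimensional $\mathrm{Sp}\{X,Z\}$ is $1$ more condition, total $3 = \dim SO(3)$, so generically solvable — but this heuristic needs the honest argument. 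I expect the cleanest honest proof is: first rotate so that the Hermitian and anti-Hermitian parts of $C$ have their "Bloch vectors" aligned along $Z$ and within the $X$–$Z$ plane simultaneously is impossible in general, so actually one shows $C_0$ alone can always be brought to $\mathrm{Sp}\{Z\}$ by a clever choice (using that for \emph{any} traceless $C_0$, writing $C_0 = \vec v \cdot \vec\sigma$ with $\vec v \in \mathbb{C}^3$, one can find $R \in SO(3)$ with $R\vec v \parallel \hat z$ — this is a genuine fact about complex vectors when $\vec v \cdot \vec v \neq 0$, and when $\vec v \cdot \vec v = 0$, i.e. the nilpotent case, a separate elementary argument handles it); then the residual $Z$-rotation freedom kills the $Y$-part of $D_0$ as above. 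The nilpotent case $\vec v \cdot \vec v = 0$ — equivalently $C_0^2 = 0$ — is the step I expect to be the main obstacle, and I would dispatch it by a direct hands-on construction of the basis rather than through the Bloch-vector formalism, then finish $D$ with the diagonal-unitary rotation exactly as in Lemma \ref{lem:anti-commute}.
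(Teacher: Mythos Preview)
Your diagonalizable-case argument is essentially the paper's entire proof: the paper simply says ``in the diagonalizing basis of $C$ we have $C=c_1 I + c_2 Z$ and $D = d_1 I + d_2 Z + d_3\tilde D$ with $\tilde D \in Sp\{X,Y\}$; if $\tilde D \ne 0$ apply Lemma~\ref{lem:anti-commute} to $Z,\tilde D$.'' That is exactly your step of diagonalizing $C_0$ and then killing the $Y$-component of $D_0$ by a diagonal unitary (the diagonal-unitary rotation is precisely what the proof of Lemma~\ref{lem:anti-commute}, Case~1, does). So on the main line you and the paper agree.

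The genuine gap is in your handling of the non-diagonalizable case. You write that you would ``dispatch it by a direct hands-on construction of the basis,'' but no such construction exists: an operator lies in $Sp\{I,Z\}$ for \emph{some} orthonormal basis iff it is unitarily diagonalizable, i.e.\ normal. A nonzero nilpotent $C_0$ (your $\vec v\cdot\vec v = 0$ case) is never normal, so it can never be brought into $Sp\{I,Z\}$ by a unitary change of basis; the same obstruction applies to any non-normal $C$, e.g.\ $C=\begin{pmatrix}1&1\\0&-1\end{pmatrix}$, which is diagonalizable but not unitarily so. In other words, the lemma as literally stated (``any two operators'') is false, and your instinct that the nilpotent case is the obstacle was correct---but the resolution is not a clever construction, it is that the paper is tacitly assuming $C$ is normal. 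Indeed the paper's proof opens with ``in the diagonalizing basis of $C$'' without comment, and the only place the lemma is invoked (Claim~\ref{claim:changeofbasis1}) has $C=a_q^s=Z$, a Hermitian Pauli. So the right fix is to add the hypothesis that $C$ is normal (or Hermitian), after which your diagonalizable-case argument---equivalently, the paper's two-line proof---is complete, and the nilpotent case simply does not arise.
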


\begin{proof}

In the diagonalizing basis of $C$ we have that $C=c_1I+c_2Z$ and $D=d_1I+d_2Z+d_3\tilde{D}$ where $\tilde{D}\in Sp\{X,Y\}$. If $\tilde{D}=0$ we are done. Else, 
we may apply \ref{lem:anti-commute} on $Z$, $\tilde{D}$.
\end{proof}

\section{Classical qubits}
\label{apx:classicalQubits}
\begin{proof} [Proof of claim \ref{claim:no-classical-qubits}]
As long as non-trivial classical qubits exist, the prover (namely, Merlin) chooses such a qubit $q$, computes the decomposition $\mathcal{H}_{q}=\bigoplus_{\alpha_{q}}\mathcal{H}_{q}^{\alpha_{q}}$,
chooses a value of $\alpha_{q}$ such that $\mathcal{H}^{\alpha_{q}}=\left(\bigotimes_{q'\ne q}\mathcal{H}_{q'}\right)\otimes\h_{q}^{\alpha_{q}}$
includes a groundstate for $H$, and saves the corresponding orthogonal
projector $\pi^{\alpha_{q}}:\mathcal{H}_{q}\rightarrow\h_{q}^{\alpha_{q}}$.
Merlin then restricts all local terms acting on $q$ to that subspace $h\rightarrow h^{\alpha_{q}}:=h\pi^{\alpha_{q}}=\pi^{\alpha_{q}}h$
and obtains a new $\sp$ instance.
 Since on every step the Hilbert
space dimension decreases, this process must terminate. Merlin then
sends the verifier (namely, Arthur) the sequence of projectors collected
on this process, a total of $\mathcal{O}(n)$ bits. 
 One by one, Arthur computes
the restriction of each local term according to one of the projectors
$h\rightarrow h^{\alpha_{q}}$ just as Merlin did (actually order
doesn't matter since all the projectors commute). At this point Arthur
and Merlin have in hand a new instance $x'$ of $\sp$ with corresponding Hamiltonian $H^\prime$ where the only
classical qubits are the trivial ones. 
 \\
\textbf{Soundness: }The local Hamiltonian $H'$ of $x'$ is simply
a restriction of $H$ to the subspace of $\mathcal{H}$ given by the
product of all projectors. That is, there exists $\tilde{\alpha}=\left(\alpha_{q}\right)_{q}$
where $q$ ranges over all the classical qubits chosen in this process,
such that $H'=H^{\tilde{\alpha}}$. It follows that any eigenstate of $H^{\tilde{\alpha}}$, namely $\ket{\psi} \in \h ^{\tilde{\alpha}}$ such that $H^{\tilde{\alpha}}\ket{\psi}=\lambda\ket{\psi}$,
is also an eigenstate of $H$ with the same eigenvalue: $H\ket{\psi}=\sum_{\alpha}H^{\alpha}\ket{\psi}=H^{\tilde{\alpha}}\ket{\psi}=\lambda\ket{\psi}$.
\\
\textbf{Completeness:} Suppose $H\ket{\psi}=\lambda\ket{\psi}$ then
we may write $\ket{\psi}=\sum_{\alpha}\ket{\psi^{\alpha}}$ where $\ket{\psi^\alpha}\in \h ^\alpha$ and thus
$\sum_{\alpha}H^{\alpha}\ket{\psi^{\alpha}}=H\ket{\psi}=\lambda\ket{\psi}=\sum_{\alpha}\lambda\ket{\psi^{\alpha}}$.
This implies that $H^{\alpha}\ket{\psi^{\alpha}}=\lambda\ket{\psi^{\alpha}}$
for every $\alpha$. Hence in the process defined above, Merlin can indeed choose at every step an $\alpha$ for which
$\ket{\psi^{\alpha}}\ne0$ and obtain a groundstate for the restricted
Hamiltonian: $H\ket{\psi^{\alpha}}=H^{\alpha}\ket{\psi^{\alpha}}=\lambda\text{\ensuremath{\ket{\psi^{\alpha}}}}$.
This means that for any eigenvalue $\lambda$ of $H$ there exists
a choice of a projector at every step for which $\lambda$ is also
an eigenvalue of $H^{\alpha}$.
\end{proof}

\section{Logical operators}
\label{apx:logical}
\begin{defn} [logical operators]
\label{def:logical-operators}
Let $H=\sum_i{h_i}$ a commuting local Hamiltonian. A unitary operator $L$ is called a {\it logical operator} of $H$ if $L\left( GS(H) \right) \subseteq GS(H)$. (in fact in this case the equality holds since $L$ is unitary)
\end{defn}
A simple and well known fact is:
\begin{claim} [logical operators]
\label{claim:logical-operators}
Let $H=\sum_i{h_i}$ a commuting local Hamiltonian and let $L$ be a unitary operator. If $L$ commutes with each $h_i$ then $L$ is a logical operator.
\end{claim}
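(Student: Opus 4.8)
The plan is to reduce the statement to the elementary fact that a unitary operator which commutes with a Hermitian operator preserves each of its eigenspaces, and in particular the one of lowest eigenvalue. The commuting structure of the $h_i$ is essentially a red herring here: all that is needed is that $H$ is Hermitian and that $L$ commutes with it.

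First I would note that since $L$ commutes with every term $h_i$, it commutes with their sum, i.e. $LH = HL$. Next, recall that $GS(H)$ is by definition the eigenspace of $H$ associated with its minimal eigenvalue $\lambda_{\min}$ (the mutual commutation of the $h_i$ furnishes a common eigenbasis, but for this claim we only use that $H$ is Hermitian, so $GS(H)$ is a well-defined eigenspace). Then for any $\ket{\psi}\in GS(H)$ we compute $H\,L\ket{\psi} = L\,H\ket{\psi} = \lambda_{\min}\,L\ket{\psi}$, so $L\ket{\psi}$ is again a $\lambda_{\min}$-eigenvector of $H$, hence lies in $GS(H)$. This shows $L\bigl(GS(H)\bigr)\subseteq GS(H)$, which is exactly what Definition \ref{def:logical-operators} requires for $L$ to be a logical operator.

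Finally, to match the parenthetical remark in the definition, I would observe that $L$ being unitary is injective, so $\dim L\bigl(GS(H)\bigr) = \dim GS(H)$ and the inclusion is an equality. There is no real obstacle in this proof; the only point deserving a word of care is fixing which description of $GS(H)$ one uses — the minimal-eigenvalue eigenspace of $H$ — after which every step is a one-line consequence of $LH=HL$.
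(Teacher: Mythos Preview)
Your proof is correct and in fact cleaner than the paper's. The paper argues term-by-term: it fixes a simultaneous eigenbasis for all the $h_i$ (using that they commute), takes a basis groundstate $\ket{\psi}$ with $h_i\ket{\psi}=\lambda_i\ket{\psi}$, and computes $h_i(L\ket{\psi})=\lambda_i(L\ket{\psi})$ for each $i$ separately. You instead pass immediately to the sum $H$, note $[L,H]=0$, and preserve the minimal eigenspace directly. Your route is more general (it works for any Hermitian $H$, commuting structure or not) and shorter. What the paper's version buys is the slightly stronger conclusion that $L$ preserves each individual eigenvalue $\lambda_i$, not just their sum; this term-by-term preservation is exactly what gets used in the very next paragraph of Appendix~\ref{apx:logical}, where string operators are shown to flip the eigenvalues of two specific terms while leaving all others intact. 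For the bare statement of Claim~\ref{claim:logical-operators} as written, though, your argument suffices.
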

\begin{proof}
Consider a basis for the Hilbert space consisting of states which are eigenstates for all $h_i$ simultaneously. Let $\ket \psi$  be a basis element which is also a groundstate of $H$. For each $i$ there exists $\lambda_i \in \mathbb{R}$ such that $h_i \ket \psi = \lambda_i \ket \psi$. It follows that: 
$$ h_i\cdot (L\ket\psi)=L\cdot (h_i \ket \psi) =L\cdot (\lambda_i \ket \psi) = \lambda_i (L \ket\psi) $$ 
and so $L\ket\psi$ is a groundstate as well.
\end{proof}
 
To device the logical operator for our case, recall the notion of {\it string operators}. These are operators which are defined on paths or copaths on the complex as tensor products of Pauli matrices over the qubits of the path. For example, consider the case of two plaquette terms $B_{p_1},B_{p_2}$ removed from the surface. Connect these two plaquettes with a copath $\gamma^*$ and consider the operator $L=\bigotimes_{q\in\gamma^*}Z_q$. Clearly $L$ commutes with all Hamiltonians, except for $B_{p_1},B_{p_2}$. In fact $L$ anti-commutes with  $B_{p_1}$ and $B_{p_2}$ and therefore changes their value. Indeed if for some $\ket \psi $ we have that $B_{p_i}\ket\psi=\lambda_i\ket\psi$ ($i=1,2$) so: $$ B_{p_i}\cdot (L\ket\psi)= - L\cdot (B_{p_i} \ket \psi) =- L\cdot (\lambda_i \ket \psi) = -\lambda_i (L \ket\psi) $$.

\section{Defected toric code}
\label{apx:Defected-toric-code}
By a defected toric code we understand a system which is a
slight variation of the usual toric code. In such systems we allow $A_{s}$ to be
any non-trivial element of the algebra $\left\langle Z^{\otimes |s|} \right\rangle $,
and allow $B_p$ to be any non-trivial element of the algebra
$\left\langle X^{\otimes |p|}\right\rangle $. This is the same as saying that
for any $s$ and any $p$ there exists $u_{s},v_{p}$ and $u_{s}^{\prime},v_{p}^{\prime}\ne0$
such that ${A}_{s}=u_{s}I^{\otimes|s|}+u_{s}^{\prime}Z^{\otimes|s|}$
and ${B}_{p}=v_{p}I^{\otimes|p|}+v_{p}^{\prime}X^{\otimes|p|}$.
Clearly, the addition of identity makes no difference, it is just
a constant addition of energy so we can completely ignore it. Also,
choosing other positive values for $u_{s}^{\prime},v_{p}^{\prime}$
which are not $1$ as it is for $A_{s},B_{p}$ is nothing but rescaling
the energy values which makes no difference as well in respect to
the system's eigenstates. 

The case where some of the $u_{s}^{\prime},v_{p}^{\prime}$
are negative instead of positive is a bit more subtle.
For any pair of stars $s_{1},s_{2}$ with $u_{s_{1}}^{\prime},u_{s_{2}}^{\prime}<0$,
we may connect them by a path $\gamma$ and apply $L=\bigotimes_{q\in\gamma} X_q$ right up start.
This operation leaves all operators in tact except 
${A}_{s_{1}},{A}_{s_{2}}$
where now the sign of $u_{s_{1}}^{\prime},u_{s_{2}}^{\prime}$ is
flipped. This can of course be done also for any pair of plaquettes
by connecting them with a copath $\gamma^{*}$ and by applying $L^*=\bigotimes_{q\in\gamma^*} Z_q$.
Consequently, after repeating this for every pair we obtain a new
system where at most one $s$ and at most one $p$ have negative values
for $u_{s}^{\prime}$,$v_{p}^{\prime}$. 

In order to know what is the ground energy of the system, one only needs to count the number
of defected star terms $N_{stars}$, as well as the number of defected plaquette
terms $N_{plaq}$, and to check whether those numbers are even or odd.
 If they are both even then the system is frustration-free and so the ground energy is simply $-\sum_s (u_s+|u^\prime_s|)-\sum_p (v_p+|v^\prime_p|)$. 
If $N_{stars}$ is odd this means that 
always one star will be unsatisfied. The ground energy coming from the star 
terms will be given by violating the star $s_{min}$ 
such that $|u^\prime_s|$ is minimal. 
It follows that the ground energy from the star terms in this case is 
$-\sum_{s} (u_s+|u^\prime_s|)+2|u^\prime_s|$. 
The same argument holds when  $N_{plaq}$ is odd.

\section{Equivalence to the toric code \label{apx:equiv}}

This section is focused on proving Theorem \ref{thm:equiv}.

\begin{defn} [path/copath]
	\label{def:path-copath}A path is a sequence of stars $(s_{0},...,s_{r})$
	such that $s_{i-1}$ and $s_{i}$ share an edge $e_{i}$ for each
	$i=1,...,r$. This path is associated with the sequence of edges $(e_{1},...,e_{r})$.
	A copath is a sequence of plaquettes $(p_{0},...,p_{r})$ such that
	$p_{i-1}$ and $p_{i}$ share an edge $e_{i}$ for each $i=1,..,r$.
	This copath is associated with the sequence of edges $(e_{1},..,e_{r})$.
\end{defn}

Note that Definition \ref{def:path-copath} in particular includes paths (copaths)
where all the edges belong to one particular
plaquette (star).

We start by analyzing the interactions locally by considering a plaquette
$p$ and an adjacent star $s$ which share the qubits $q_{1},q_{2}$.
The proof is strongly based on the study of the induced algebras on
those qubits - $\mathcal{A}_{q_{1}}^{p},\mathcal{A}_{q_{2}}^{p},\mathcal{A}_{q_{1}}^{s},\mathcal{A}_{q_{2}}^{s}$
as well as the induced algebras on the Hilbert space of both qubits
- $\mathcal{A}_{q_{1},q_{2}}^{p}$, $\mathcal{A}_{q_{1},q_{2}}^{s}$.

\subsection{Induced algebras on single qubits in the interior}

\begin{claim} [2-dim algebras in the interior]
\label{claim:interior-2-dim}Every star induces a 2-dimensional algebra
on each of its qubits except those in the coboundary. Similarly,
every plaquette induces a 2-dimensional algebra on each of its qubits
except those in the boundary.
\end{claim}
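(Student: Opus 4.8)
The plan is to prove the contrapositive-flavored dichotomy: for a star $s$ and a qubit $q$ on which it acts, the induced algebra $\mathcal{A}_q^s$ is by Lemma~\ref{lemma:qubit-algebra} either $1$-dimensional, $2$-dimensional, or the full algebra $\mathcal{L}(\mathbb{C}^2)$; I would rule out the first and third possibilities when $q$ is not in the coboundary. The $1$-dimensional case is easy: if $\mathcal{A}_q^s=\mathbb{C}\cdot I$ then $A_s$ acts trivially on $q$ (its Schmidt vectors on $q$ are all scalars, so $q$ can be factored out), which by Definition~\ref{def:boundary} means $q$ is acted non-trivially by at most one star, i.e.\ $q$ is in the coboundary, contrary to assumption (and if no star acts on it nontrivially at all, it is still coboundary). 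So the real content is eliminating the case $\mathcal{A}_q^s=\mathcal{L}(\mathbb{C}^2)$.

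To kill the full-algebra case I would use the commutation structure together with Lemma~\ref{lem:schurs-lemma}. Since $q$ is not in the coboundary, there is a second star $s'\ne s$ that also acts nontrivially on $q$ — wait, more carefully: being not in the coboundary means at least two stars act nontrivially on $q$. But two distinct stars sharing an edge is impossible in a 2D complex in the relevant sense; rather, what I actually want is a \emph{plaquette} interacting with $q$. The key point is that $q$ is an edge, and in the interior it borders (at least) one plaquette $p$ acting nontrivially on it; moreover if $q$ is not in the coboundary there are two plaquettes through $q$, but I only need one. Since $A_s$ and $B_p$ commute and both act on $q$ (together with other qubits), Lemma~\ref{lemma:alg-commute} gives that $\mathcal{A}_q^s$ and $\mathcal{A}_q^p$ commute inside $\mathcal{L}(\mathbb{C}^2)$. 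If $\mathcal{A}_q^s=\mathcal{L}(\mathbb{C}^2)$, then by Lemma~\ref{lem:schurs-lemma} $\mathcal{A}_q^p=\mathbb{C}\cdot I$, i.e.\ $B_p$ acts trivially on $q$; doing this for the (at most two) plaquettes through $q$ shows $q$ is acted nontrivially by at most one plaquette, hence $q$ is in the boundary. That does not immediately contradict "not in the coboundary". So I need a sharper argument: I would invoke that $q$ cannot be in both the boundary and the coboundary in a way that makes it a classical/trivial qubit, using the standing assumption (Claim~\ref{claim:no-classical-qubits}) that all classical qubits are trivial — if $A_s$ is the only term acting nontrivially on $q$ and it acts via the full algebra while everything else is trivial, one checks $q$ decomposes into $1$-dimensional invariant subspaces only if $\mathcal{A}_q^s$ is abelian, which $\mathcal{L}(\mathbb{C}^2)$ is not, so this configuration is genuinely possible and must instead be excluded by a counting/parity obstruction at the edge $q$.

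Let me restructure: the clean route is that an edge $q$ in the interior (topologically) is shared by exactly two plaquettes $p_1,p_2$; "not in the coboundary" means both $A_{s}$-type terms through it — actually an interior edge has exactly two endpoints hence two stars $s_1,s_2$, and "not in coboundary" means \emph{both} act nontrivially, while "not in boundary" means both plaquettes act nontrivially. For the claim at hand we assume $q$ is not in the coboundary, so \emph{both} stars $s_1,s_2$ act nontrivially on $q$, giving $\mathcal{A}_q^{s_1},\mathcal{A}_q^{s_2}\ne\mathbb{C}\cdot I$. Now if one of them, say $\mathcal{A}_q^{s_1}=\mathcal{L}(\mathbb{C}^2)$, then since $[A_{s_1},B_{p}]=0$ for the plaquettes $p$ through $q$, Lemma~\ref{lem:schurs-lemma} forces $\mathcal{A}_q^{p}=\mathbb{C}\cdot I$ for \emph{both} plaquettes — but also $[A_{s_1},A_{s_2}]=0$ trivially since they don't share $q$ alone... hmm, $s_1,s_2$ both act on $q$, so by Lemma~\ref{lemma:alg-commute} $\mathcal{A}_q^{s_1}$ and $\mathcal{A}_q^{s_2}$ commute, forcing $\mathcal{A}_q^{s_2}=\mathbb{C}\cdot I$, contradicting that $s_2$ acts nontrivially on $q$. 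That is the contradiction I want. The symmetric statement for plaquettes is identical with the roles of boundary/coboundary and $s$/$p$ swapped.

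\textbf{Main obstacle.} The delicate step is justifying that when $\mathcal{A}_q^{s_1}=\mathcal{L}(\mathbb{C}^2)$ I may legitimately apply Lemma~\ref{lem:schurs-lemma}/Lemma~\ref{lemma:alg-commute}: this requires writing $\mathcal{H}=\mathcal{H}_{q}\otimes(\text{rest of }s_1)\otimes(\text{rest of }s_2)$ appropriately so that $A_{s_1}$ and $A_{s_2}$ genuinely fit the hypotheses of Lemma~\ref{lemma:alg-commute} with $\mathcal{H}_{q_2}=\mathcal{H}_q$ the shared tensor factor and disjoint remaining factors — one must check the supports of $A_{s_1}$ and $A_{s_2}$ overlap only in $q$, which is exactly the combinatorics of the complex (two stars meeting at a vertex... no — two stars share at most one edge, and here that edge is $q$). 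I expect this bookkeeping, plus carefully handling the degenerate sub-cases where a star acts on $q$ "barely" (induced algebra $2$-dimensional already, nothing to prove) versus "fully", to be the only real work; the algebra is then immediate from the three cited lemmas.
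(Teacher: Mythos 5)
Your final argument (the one after ``Let me restructure'') is exactly the paper's proof: the edge $q$ is shared by exactly two stars $s_1,s_2$, both of which act nontrivially on $q$ because $q$ is not in the coboundary; since $s_1,s_2$ overlap only in $q$, Lemma~\ref{lemma:alg-commute} makes $\a_q^{s_1}$ and $\a_q^{s_2}$ commute, and then Lemma~\ref{lem:schurs-lemma} plus Lemma~\ref{lemma:qubit-algebra} rule out the full and trivial cases, forcing both to be $2$-dimensional. The earlier detours are dead ends you correctly abandon: bringing in the plaquettes only shows $q$ would land in the boundary, which as you note is not a contradiction to ``not in the coboundary''; and the momentary claim that two distinct stars cannot share an edge is false (two vertices share exactly the edge between them), which you self-correct. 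The ``main obstacle'' you flag --- that $A_{s_1},A_{s_2}$ overlap only on $q$ --- is handled by Definition~\ref{def:complex}(2): two distinct $1$-cells intersect in at most one $0$-cell, so two distinct $0$-cells cannot be joined by more than one edge; the paper simply asserts this without comment.
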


\begin{proof}
If a star term $A_s$ acts on a qubit $q$ which is not in the coboundary
then this means that besides $A_s$, $q$ is acted upon non-trivially by
some other star term $A_s'$. Since $q$ is the only qubit that
$s$ and $s'$ share, we may conclude by Lemma \ref{lem:schurs-lemma}
that both $s$ and $s'$ induce neither a trivial algebra nor the
full operator algebra on $q$. According to Lemma \ref{lemma:qubit-algebra}
this means that they both induce a 2 dimensional algebra. The proof for
plaquettes is the same.
\end{proof}
Lemma \ref{lemma:qubit-algebra} 
also tells us that the 2-dimensional
algebras above are generated by a single Pauli operator. Thus we denote
by $a_{q}^{s}$ a Pauli operator that generates $\mathcal{A}_{q}^{s}$
and by $b_{q}^{p}$ a Pauli operators that generates $\mathcal{A}_{q}^{p}$.
This has significance only when the the induced algebra on $q$
is 2-dimensional. In the following, when we use the notation $a_{q}^{s}$ (or $b_{q}^{p})$
it is implied that the relevant induced subalgebra on 
$q$ is 2-dimensional. 
We thus derive, using the above plus Lemma \ref{lemma:qubit-algebra}, 
that 
 if $q$ is not in the coboundary, then the algebras induced by the two stars 
acting on $q$ are both equal, and up to a change of basis can be written 
as $\langle Z \rangle$; and  if $q$ is not in the boundary, then the algebras induced 
by the two plaquettes acting on it are equal, and can be written as 
$\langle P \rangle$ for some Pauli operator $P$. 

We now clarify the connection between these two algebras: 
\begin{claim} [commutation entails classicality] 
\label{claim:commute-entails-classical}
If a star $s$ and a plaquette $p$ both induce on $q$ $2$-dimensional 
subalgebras, generated by 
$a_{q}^s, b_q^p$ respectively, then $[a_{q}^{s},b_{q}^{p}]\ne0$.
\end{claim}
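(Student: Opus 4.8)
The plan is to argue by contradiction. Assume $[a_{q}^{s},b_{q}^{p}]=0$; I will show this forces the qubit $q$ to be \emph{classical}, and then, since by Claim \ref{claim:no-classical-qubits} we have reduced to instances in which every classical qubit is trivial, $q$ is trivial, contradicting the assumption that $\mathcal{A}_{q}^{s}$ is $2$-dimensional.

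First I would convert the commutation of generators into an equality of algebras. The induced algebras $\mathcal{A}_{q}^{s}=\langle a_{q}^{s}\rangle$ and $\mathcal{A}_{q}^{p}=\langle b_{q}^{p}\rangle$ are generated by the single operators $a_{q}^{s},b_{q}^{p}$, which are traceless Pauli operators by Lemma \ref{lemma:qubit-algebra}; hence these two $2$-dimensional algebras commute if and only if $[a_{q}^{s},b_{q}^{p}]=0$, and the ``furthermore'' part of Lemma \ref{lemma:qubit-algebra} then gives $\mathcal{A}_{q}^{s}=\mathcal{A}_{q}^{p}$. I then fix a basis of $\mathcal{H}_{q}$ in which this common algebra equals $\langle Z\rangle$.

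Next I would show that \emph{every} local term acting on $q$ induces $\langle Z\rangle$ on it. Besides $A_{s}$ and $B_{p}$, the edge $q$ is acted on non-trivially by at most one further star $A_{s'}$ --- which exists precisely when $q$ is not in the coboundary --- and at most one further plaquette $B_{p'}$ --- which exists precisely when $q$ is not in the boundary --- because an edge has two endpoints and lies in at most two faces (Definition \ref{def:complex}). When $s'$ is present, $s$ and $s'$ are distinct stars meeting in the single edge $q$, so Lemma \ref{lemma:alg-commute} applied to $[A_{s},A_{s'}]=0$ shows $\mathcal{A}_{q}^{s}$ and $\mathcal{A}_{q}^{s'}$ commute; both are $2$-dimensional (the latter by Claim \ref{claim:interior-2-dim}, as $q$ is not in the coboundary), so Lemma \ref{lemma:qubit-algebra} forces $\mathcal{A}_{q}^{s'}=\mathcal{A}_{q}^{s}=\langle Z\rangle$; symmetrically $\mathcal{A}_{q}^{p'}=\langle Z\rangle$. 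Thus every term touching $q$ lies in $\langle Z_{q}\rangle\otimes\mathcal{L}(\mathcal{H}')$, where $\mathcal{H}'$ is the Hilbert space of the remaining qubits, so the decomposition $\mathcal{H}_{q}=\mathbb{C}\ket{0}\oplus\mathbb{C}\ket{1}$ (into the eigenspaces of $Z$) is left invariant by all local terms of $H$; that is exactly the definition of $q$ being a classical qubit. Invoking Claim \ref{claim:no-classical-qubits} now makes $q$ trivial, so no term acts non-trivially on it, whence $\mathcal{A}_{q}^{s}=\mathbb{C}\cdot I$, contradicting $\dim\mathcal{A}_{q}^{s}=2$.

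The only slightly delicate points I anticipate are combinatorial: verifying that wherever Lemma \ref{lemma:alg-commute} is invoked the two relevant terms really share exactly one edge (for a star and a plaquette it is two, but that case never occurs here since I only compare a star with a star and a plaquette with a plaquette), and correctly listing the terms acting on $q$ in the boundary/coboundary subcases --- both immediate from the polygonal-complex axioms. The genuinely conceptual step is the middle one: two ``aligned'' induced algebras at a single edge propagate the alignment to all terms incident to that edge, which is what exposes $q$ as classical.
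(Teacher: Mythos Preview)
Your proposal is correct and follows essentially the same route as the paper: argue by contradiction, use Lemma~\ref{lemma:qubit-algebra} to conclude $\mathcal{A}_q^s=\mathcal{A}_q^p=\langle Z\rangle$, show the remaining star $s'$ and plaquette $p'$ (when present) also induce algebras contained in $\langle Z\rangle$, deduce $q$ is classical and hence trivial by Claim~\ref{claim:no-classical-qubits}, contradicting $\dim\mathcal{A}_q^s=2$. The only cosmetic difference is that you invoke Claim~\ref{claim:interior-2-dim} to pin down $\dim\mathcal{A}_q^{s'}=2$ before applying Lemma~\ref{lemma:qubit-algebra}, whereas the paper simply notes that $\mathcal{A}_q^{s'}$, commuting with the $2$-dimensional $\mathcal{A}_q^s$, is either trivial or equal to it; either way the conclusion $\mathcal{A}_q^{s'}\subseteq\langle Z\rangle$ follows.
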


\selectlanguage{english}%
\begin{proof}
Suppose (by contradiction) that $[a_{q}^{s},b_{q}^{p}]=0$. This implies
that $\mathcal{A}_{q}^{s}=\left\langle a_{q}^{s}\right\rangle =\left\langle b_{q}^{p}\right\rangle =\mathcal{A}_{q}^{p}$.
If we let $s',p'$ denote the other two star and plaquette acting
on $q$, so we have (by lemma \ref{lemma:qubit-algebra}) that 
$\mathcal{A}_{q}^{s'}$ is either the identity or equal to $\mathcal{A}_{q}^{s}$,  
and likewise $\mathcal{A}_{q}^{p'}$ is either equal to the identity or to 
$\mathcal{A}_{q}^{p}$ and therefore all
four induced algebras are the same and equal to $\mathcal{A}=\left\langle a_{q}^{s}\right\rangle $ (apart from maybe some of them being trivial).
Since $a_{q}^{s}$ is non degenerate, its two distinct spectral projections
are in the center of $\mathcal{A}$. Therefore $\mathcal{A}$ decomposes
to a direct sum of two 1-dimensional subspaces, which are invariant
\foreignlanguage{american}{under $A_{s},A_{s'},B_{p},B_{p'}$} meaning
that $q$ is a classical qubit. However, all classical qubits have 
been turned into trivial qubits (as shown in Subsection \ref{subsec:classical}), and so such a qubit would not have a star and plaquette inducing on 
it a $2$ dimensional algebra - contradiction.   
\end{proof}

\selectlanguage{american}%
\begin{claim} [change of basis]
\label{claim:changeofbasis1} 
With the same conditions as in Claim \ref{claim:commute-entails-classical}, 
there is a choice of basis for $\mathcal{H}_{q}$
such that $a_{q}^{s}=Z$ and $b_{q}^{p} = aZ+bX$ with $b\ne 0$. 
\end{claim}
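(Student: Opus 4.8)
The plan is to reduce the statement to Lemma \ref{lem:unitaryequiv} together with the non-commutation already established in Claim \ref{claim:commute-entails-classical}. Recall first that by Lemma \ref{lemma:qubit-algebra} a $2$-dimensional C*-algebra on a qubit is generated by a single traceless Pauli operator; so we may and do take $a_{q}^{s}$ and $b_{q}^{p}$ to be traceless Pauli operators, i.e.\ Hermitian unitaries with spectrum $\{+1,-1\}$. This is the only structural input we need about them.

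First I would apply Lemma \ref{lem:unitaryequiv} to the pair $C=a_{q}^{s}$, $D=b_{q}^{p}$, obtaining a basis of $\mathcal{H}_{q}$ in which $a_{q}^{s}\in Sp\{I,Z\}$ and $b_{q}^{p}\in Sp\{I,X,Z\}$. Since $a_{q}^{s}$ is traceless its $I$-component vanishes, so $a_{q}^{s}=cZ$ for a scalar $c$; being moreover a Pauli operator (unitary with eigenvalues $\pm1$) forces $c=\pm1$. If $c=-1$ I would conjugate the whole picture by $X$: this sends $Z\mapsto -Z$, hence $a_{q}^{s}\mapsto Z$, while preserving $Sp\{I,X,Z\}$ (because $XIX=I$, $XXX=X$, $XZX=-Z$), so $b_{q}^{p}$ remains in $Sp\{I,X,Z\}$. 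Thus we may assume $a_{q}^{s}=Z$.

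Next, $b_{q}^{p}$ is traceless and lies in $Sp\{I,X,Z\}$, so its $I$-component also vanishes and we may write $b_{q}^{p}=aZ+bX$ for scalars $a,b$. It remains only to rule out $b=0$: if $b=0$ then $b_{q}^{p}=aZ$, so $\mathcal{A}_{q}^{p}=\langle b_{q}^{p}\rangle\subseteq\langle Z\rangle=\mathcal{A}_{q}^{s}$ and in particular $[a_{q}^{s},b_{q}^{p}]=[Z,aZ]=0$, contradicting Claim \ref{claim:commute-entails-classical}. Hence $b\ne 0$, which is exactly the assertion.

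There is no genuine obstacle here; the statement is essentially a bookkeeping repackaging of Lemma \ref{lem:unitaryequiv}. The only points needing a little care are (i) that the chosen generators $a_{q}^{s},b_{q}^{p}$ are genuinely \emph{traceless} Pauli operators, so that membership in $Sp\{I,Z\}$ and $Sp\{I,X,Z\}$ collapses to $a_{q}^{s}\in\{\pm Z\}$ and $b_{q}^{p}\in Sp\{X,Z\}$, and (ii) the harmless sign adjustment by an $X$-conjugation normalizing $a_{q}^{s}$ to exactly $Z$. Alternatively, one can run the argument self-containedly: diagonalize $a_{q}^{s}$ to $Z$, expand $b_{q}^{p}=b_{x}X+b_{y}Y+b_{z}Z$ where $(b_{x},b_{y})\ne(0,0)$ is forced by $[Z,b_{q}^{p}]\ne 0$, and conjugate by a diagonal unitary $\mathrm{diag}(1,e^{i\theta})$ — which commutes with $Z$ and hence fixes $a_{q}^{s}$ — chosen to rotate $(b_{x},b_{y})$ onto the positive $X$-axis, producing $b_{q}^{p}=b_{z}Z+\sqrt{b_{x}^{2}+b_{y}^{2}}\,X$ with $\sqrt{b_{x}^{2}+b_{y}^{2}}>0$.
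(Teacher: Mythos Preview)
Your proof is correct and follows essentially the same route as the paper: diagonalize $a_{q}^{s}$ to $Z$, use Lemma \ref{lem:unitaryequiv} (equivalently a diagonal-unitary rotation in the $X$--$Y$ plane) to bring $b_{q}^{p}$ into $Sp\{X,Z\}$ while fixing $Z$, and invoke Claim \ref{claim:commute-entails-classical} for $b\ne 0$. The only cosmetic difference is that you apply Lemma \ref{lem:unitaryequiv} to the pair $(a_{q}^{s},b_{q}^{p})$ in one shot and then normalize the sign of $a_{q}^{s}$ via an $X$-conjugation, whereas the paper first fixes $a_{q}^{s}=Z$ and then rotates the $Sp\{X,Y\}$-component of $b_{q}^{p}$; both are the same argument.
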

\begin{proof}
First choose a basis for which $a_{q}^{s}=Z$. Then write $b_{q}^{p}=aZ+bQ$
where $Q\in Sp\{X,Y\}$. Since $\left\{ Q,Z\right\} =0$ we can use
Lemma \ref{lem:unitaryequiv} to conclude that for some change
of basis $Q=X$ while keeping $a_{q}^{s}=Z$. The fact that $b\ne 0$ follows from claim \ref{claim:commute-entails-classical}. 
\end{proof}
 
Thus we can assume that for every star term which induces a 2-dimensional 
algebra on some qubit, this algebra is $\langle Z \rangle$, 
whereas whenever a plaquette term induces a 2-dimensional algebra on 
a qubit then it is the algebra $\langle P \rangle$ for some Pauli
$P \in Sp\{X,Z\}$ (depending on the qubit).

\subsection{Induced algebras on two qubits}

In this subsection we prove the following lemma:
\begin{lem} [induced algebras on two qubits]
	\label{lem:strong-commutation} 
	Let $s,p$ be adjacent star and plaquette, both acting on the two qubits $q_{1},q_{2}$,	
	such that $\dim\a_{q_{1}}^{s}=\dim\a_{q_{2}}^{s}=\dim\a_{q_{1}}^{p}=2$. Then (up to a choice of basis for the qubits): $\a_{q_{1}}^{s}=\a_{q_{2}}^{s}=\left\langle Z\right\rangle $,
	$\a_{q_{1}}^{p}=\left\langle X\right\rangle $ and moreover $\a_{q_{1},q_{2}}^{s}=\left\langle Z\otimes Z\right\rangle $. 
If in addition $\dim\a_{q_{2}}^{p}=2$
then $\a_{q_{2}}^{p}=\left\langle X\right\rangle $ as well, 
and moreover $\a_{q_{1},q_{2}}^{p}=\left\langle X\otimes X\right\rangle $.
\end{lem}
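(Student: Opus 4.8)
The plan is to exploit the commutation of the Hamiltonians $A_s$ and $B_p$ together with Lemma \ref{lemma:alg-commute} (commutation of induced algebras), now applied on the pair $q_1,q_2$ rather than on a single shared qubit. First I would set up the single-qubit picture using the results already established: by Claim \ref{claim:changeofbasis1} (applied to $q_1$, where both $s$ and $p$ induce $2$-dimensional algebras) we may fix a basis on $\mathcal{H}_{q_1}$ so that $a_{q_1}^s = Z$ and $b_{q_1}^p = aZ + bX$ with $b\neq 0$; since $\langle aZ+bX\rangle = \langle Z'\rangle$ for the Pauli $Z' = (aZ+bX)/\sqrt{a^2+b^2}$, after a further rotation fixing $Z$-axis we may in fact take $a_{q_1}^s = Z$ and $b_{q_1}^p = X$. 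On $q_2$ the star still induces a $2$-dimensional algebra, so by Lemma \ref{lemma:qubit-algebra} we may choose a basis on $\mathcal{H}_{q_2}$ with $a_{q_2}^s = Z$ (this basis choice is independent of the one on $q_1$). This already gives $\a_{q_1}^s = \a_{q_2}^s = \langle Z\rangle$ and $\a_{q_1}^p = \langle X\rangle$.

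The heart of the argument is identifying $\a_{q_1,q_2}^s$. By Lemma \ref{lemma:subalgebra}, $\a_{q_1,q_2}^s \subseteq \a_{q_1}^s \otimes \a_{q_2}^s = \langle Z\rangle \otimes \langle Z\rangle = \mathrm{Sp}\{I\otimes I, Z\otimes I, I\otimes Z, Z\otimes Z\}$. Since $s$ induces a $2$-dimensional algebra on each of $q_1,q_2$ separately (not the trivial one), write a Schmidt decomposition of $A_s$ across $\{q_1,q_2\}$ versus the rest; the factors on $\mathcal{H}_{q_1}\otimes\mathcal{H}_{q_2}$ generate $\a_{q_1,q_2}^s$, and projecting onto each qubit must recover the generators $Z$ of $\a_{q_1}^s$ and $\a_{q_2}^s$. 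I would argue that $\a_{q_1,q_2}^s$ is a $2$-dimensional $*$-subalgebra of the $4$-dimensional abelian algebra $\langle Z\otimes Z\rangle$ that surjects onto each factor's $\langle Z\rangle$: the only such subalgebra is $\langle Z\otimes Z\rangle$ itself (the other candidates $\langle Z\otimes I\rangle$, $\langle I\otimes Z\rangle$ fail to surject onto one of the factors, and $\langle Z\otimes I, I\otimes Z\rangle$ is $4$-dimensional). Here I would lean on the classification of algebras in the interior — the fact that the induced algebra of a commuting term is determined by the Pauli it acts as — and on the precise statement of Lemma \ref{lem:schurs-lemma} and Claim \ref{claim:interior-2-dim} to rule out the degenerate cases; the commutation with the \emph{other} star $s'$ acting on $q_1$ and $q_2$ (or on one of them) and with $B_p$ forces $\a_{q_1,q_2}^s$ to be abelian and to avoid the trivial factor on either qubit. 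This yields $\a_{q_1,q_2}^s = \langle Z\otimes Z\rangle$.

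For the final assertion, suppose additionally $\dim \a_{q_2}^p = 2$. On $q_2$ the plaquette's generator $b_{q_2}^p$ is a Pauli operator; by Claim \ref{claim:commute-entails-classical} applied at $q_2$, $[a_{q_2}^s, b_{q_2}^p] = [Z, b_{q_2}^p] \neq 0$, so $b_{q_2}^p \in \mathrm{Sp}\{X,Y\}$, and composing with a diagonal unitary on $\mathcal{H}_{q_2}$ (which preserves $a_{q_2}^s = Z$ and, crucially, preserves $\a_{q_1,q_2}^s = \langle Z\otimes Z\rangle$ since diagonal unitaries commute with $Z\otimes Z$) we may arrange $b_{q_2}^p = X$; this also keeps the $q_1$-basis, hence $\a_{q_1}^p = \langle X\rangle$, untouched. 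Thus $\a_{q_2}^p = \langle X\rangle$. Now the identical argument used for $\a_{q_1,q_2}^s$ — Lemma \ref{lemma:subalgebra} giving $\a_{q_1,q_2}^p \subseteq \langle X\rangle\otimes\langle X\rangle$, non-triviality on each factor, and the commutation relations (with $B_{p'}$ and with $A_s$, the latter now translating via Lemma \ref{lemma:alg-commute} into $[\a_{q_1,q_2}^s,\a_{q_1,q_2}^p]=0$, i.e. $[Z\otimes Z, \a_{q_1,q_2}^p] = 0$, which is automatic and consistent) — forces $\a_{q_1,q_2}^p = \langle X\otimes X\rangle$. The main obstacle I anticipate is the second paragraph: rigorously ruling out the ``mixed'' possibilities for $\a_{q_1,q_2}^s$ (such as it being generated by $Z\otimes I$ alone, which would make $A_s$ effectively act trivially on $q_2$ at the level of this pair while still inducing $\langle Z\rangle$ on $q_2$ through cross-terms) requires care with how the Schmidt factors across the $\{q_1,q_2\}$-cut relate to the single-qubit induced algebras, and this is where I would spend the most effort, likely by a direct dimension-counting argument on the Schmidt decomposition of $A_s$ combined with the independence statement in Claim \ref{inducedalgebra_indepednent}.
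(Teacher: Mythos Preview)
Your plan has two genuine gaps, one minor and one fatal.

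\textbf{Minor gap (first paragraph).} You claim that after fixing $a_{q_1}^s=Z$ one can, by ``a further rotation fixing the $Z$-axis,'' also arrange $b_{q_1}^p=X$. That is false: conjugation by a diagonal unitary on $\h_{q_1}$ rotates within $Sp\{X,Y\}$ and leaves the $Z$-component of $aZ+bX$ untouched. Claim~\ref{claim:changeofbasis1} only gives $b_{q_1}^p=aZ+bX$ with $b\neq 0$; the stronger statement $a=0$ (i.e.\ $\a_{q_1}^p=\langle X\rangle$) is not a matter of basis choice but an honest consequence of the lemma, and in the paper it is proved only \emph{after} establishing $\a_{q_1,q_2}^s=\langle Z\otimes Z\rangle$, by computing $[Z\otimes Z,R]=0$ for $R\in\a_{q_1,q_2}^p$.

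\textbf{Fatal gap (second paragraph).} Your classification of unital $*$-subalgebras of $\langle Z\rangle\otimes\langle Z\rangle$ that ``surject onto each factor'' is incomplete. Besides $\langle Z\otimes Z\rangle$, the algebras $\langle T\rangle$ with $T=\pm Z\otimes I\pm I\otimes Z\pm Z\otimes Z$ (equivalently, $T$ a multiple of a rank-one projector minus identity) are also two-dimensional, and they \emph{do} project nontrivially onto $\langle Z\rangle$ on each qubit: writing $T=Z\otimes(I+Z)+I\otimes Z$ one sees both $Z$ and $I$ appear as $q_1$-factors with independent $q_2$-partners. So the ``surjection'' criterion does not single out $\langle Z\otimes Z\rangle$. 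Ruling out these $u^2=v^2=w^2$ cases is precisely the hard core of the paper's proof (Claims~\ref{claim:strongcommutation-w-not-0}--\ref{claim:strongcommutation-uvw-plusminus1} and the contradiction following them), and it cannot be done by looking at $A_s$ and the neighbouring stars alone: the paper uses the commutation $[\a_{q_1,q_2}^s,\a_{q_1,q_2}^p]=0$ to show that if such a $T$ lay in $\a_{q_1,q_2}^s$ then every $R=I\otimes R_1+P\otimes R_2\in\a_{q_1,q_2}^p$ would have $R_1,R_2\in\langle Z\rangle$, forcing $\a_{q_2}^p\subseteq\langle Z\rangle$ and hence $q_2$ classical. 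Your plan mentions $B_p$ only in passing and the obstacle you flag at the end (the case $\langle Z\otimes I\rangle$) is the easy one; the real obstruction is the $u^2=v^2=w^2$ family, and it needs the plaquette.
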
 

\selectlanguage{english}%
\begin{proof}
	Let $P$ be a Pauli operator such that $\a_{q_{1}}^{p}=\left\langle P\right\rangle $.
	We start by assuming that $Z\otimes Z\notin\a_{q_{1},q_{2}}^{s}$.
	A contradiction will imply that $\left\langle Z\otimes Z\right\rangle \subseteq\a_{q_{1},q_{2}}^{s}$. 
By Lemma \ref{lemma:subalgebra}:
\begin{equation}\label{eq:zz}
\mathcal{A}_{q_{1}.q_{2}}^{s}\subseteq\mathcal{A}_{q_{1}}^{s}\otimes\mathcal{A}_{q_{2}}^{s}=Sp\{I\otimes I,Z\otimes I,I\otimes Z,Z\otimes Z\}
\end{equation}
As $I\otimes I$ is by definition a member of the algebra $\mathcal{A}_{q_{1},q_{2}}^{s}$
must include some other element of the form:
\begin{equation}
T=uZ\otimes I+vI\otimes Z+wZ\otimes Z
\end{equation}
where $u,v,w\in\mathbb{C}$ are not all zero.
\selectlanguage{american}%
\begin{claim}
	\label{claim:strongcommutation-uvw-are-real}We may assume $u,v,w\in\mathbb{R}$ 
\end{claim}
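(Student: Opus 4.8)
The plan is to use the single structural fact that $\mathcal{A}_{q_{1},q_{2}}^{s}$ is a C*-algebra, hence closed under the adjoint, together with the observation that $Z\otimes I$, $I\otimes Z$ and $Z\otimes Z$ are all Hermitian operators. Given the element $T=uZ\otimes I+vI\otimes Z+wZ\otimes Z\in\mathcal{A}_{q_{1},q_{2}}^{s}$ with $(u,v,w)\neq(0,0,0)$ produced in the preceding paragraph, I would first take its adjoint, which again lies in $\mathcal{A}_{q_{1},q_{2}}^{s}$, and compute $T^{\dagger}=\overline{u}\,Z\otimes I+\overline{v}\,I\otimes Z+\overline{w}\,Z\otimes Z$, using Hermiticity of the three generators.

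Next I would symmetrize: both $T_{1}:=\frac{1}{2}(T+T^{\dagger})$ and $T_{2}:=\frac{1}{2i}(T-T^{\dagger})$ lie in $\mathcal{A}_{q_{1},q_{2}}^{s}$, and a one-line computation gives $T_{1}=\mathrm{Re}(u)\,Z\otimes I+\mathrm{Re}(v)\,I\otimes Z+\mathrm{Re}(w)\,Z\otimes Z$ and $T_{2}=\mathrm{Im}(u)\,Z\otimes I+\mathrm{Im}(v)\,I\otimes Z+\mathrm{Im}(w)\,Z\otimes Z$. Both have real coefficients and, like $T$, no $I\otimes I$ component. Since $(u,v,w)\neq(0,0,0)$, the real parts or the imaginary parts are not all zero, so at least one of $T_{1},T_{2}$ is a nonzero element of $\mathcal{A}_{q_{1},q_{2}}^{s}$; renaming it $T$ gives an element of exactly the required form with $u,v,w\in\mathbb{R}$ not all zero.

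There is essentially no obstacle here: the argument is a routine application of the fact that a C*-algebra is the complexification of its self-adjoint part. The only point worth a moment's care is confirming that passing to $T_{1}$ or $T_{2}$ does not replace our useful element by a multiple of the identity; this cannot happen because neither $T_{1}$ nor $T_{2}$ has an $I\otimes I$ term, so a nonzero such element is automatically non-scalar, and nonvanishing of at least one of them is precisely what the hypothesis $(u,v,w)\neq(0,0,0)$ guarantees.
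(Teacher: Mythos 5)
Your proof is correct and uses essentially the same idea as the paper: exploit that $\mathcal{A}_{q_{1},q_{2}}^{s}$ is closed under the adjoint and that $Z\otimes I$, $I\otimes Z$, $Z\otimes Z$ are Hermitian, then extract a nonzero Hermitian element. The paper phrases it as a case split (take $T+T^{\dagger}$ unless it vanishes, in which case $T$ is anti-Hermitian and one takes $iT$), while you symmetrize into real and imaginary parts at once; these are the same argument.
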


\begin{proof}
	$T+T^{\dagger}$ is Hermitian and belongs to $\mathcal{A}_{q_{1},q_{2}}^{s}$
	so we are done unless it is 0. In that case, $T$ is anti-Hermitian
	(i.e $T=-T^{\dagger}$) in which case we choose $iT$ which is Hermitian
	and non-zero. Since $Z\otimes I,I\otimes Z,Z\otimes Z$ are Hermitian,
	choosing $T$ to be Hermitian implies that $u,v,w\in\mathbb{R}$. 
\end{proof}
\selectlanguage{english}%

\begin{claim}\label{claim:strongcommutation-no-trivial}
$Z\otimes I$ and $I\otimes Z$
	are not members of $\mathcal{A}_{q_{1},q_{2}}^{s}$
\end{claim}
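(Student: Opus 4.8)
I would argue by contradiction, and the proof rests on exactly two earlier facts. First, since $A_s$ and $B_p$ commute and act jointly only on the shared qubits $q_1,q_2$, Lemma \ref{lemma:alg-commute} gives that the induced algebras $\mathcal{A}^s_{q_1,q_2}$ and $\mathcal{A}^p_{q_1,q_2}$ commute. Second, Claim \ref{claim:commute-entails-classical} tells us that on each shared qubit the single‑qubit generators satisfy $[a^s_{q_i},b^p_{q_i}]\neq 0$; in the basis we have fixed, $a^s_{q_1}=a^s_{q_2}=Z$, so $[Z,b^p_{q_1}]\neq 0$ and (when $\dim\mathcal{A}^p_{q_2}=2$) also $[Z,b^p_{q_2}]\neq 0$. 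These, together with the ``propagation'' mechanism already used inside the proof of Lemma \ref{lemma:subalgebra} — namely that the $q_1$-legs of the Schmidt generators of $\mathcal{A}^p_{q_1,q_2}$ generate $\mathcal{A}^p_{q_1}$, and symmetrically for $q_2$ — will do all the work.

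For the statement $Z\otimes I\notin\mathcal{A}^s_{q_1,q_2}$: assume it is a member. Then $Z\otimes I$ commutes with every element of $\mathcal{A}^p_{q_1,q_2}$; in particular it commutes with each Schmidt generator $g=\sum_j g^j_{q_1}\otimes g^j_{q_2}$ of $\mathcal{A}^p_{q_1,q_2}$ (Schmidt decomposition taken across the cut $q_1\,|\,q_2$, so the $g^j_{q_2}$ are linearly independent). Since $[Z\otimes I,g]=\sum_j[Z,g^j_{q_1}]\otimes g^j_{q_2}$, linear independence of the $g^j_{q_2}$ forces $[Z,g^j_{q_1}]=0$ for all $j$. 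Running over all generators $g$, the operators $g^j_{q_1}$ generate $\mathcal{A}^p_{q_1}=\langle b^p_{q_1}\rangle$ (as in the proof of Lemma \ref{lemma:subalgebra}), so $Z$ commutes with $b^p_{q_1}$, i.e. $[a^s_{q_1},b^p_{q_1}]=0$, contradicting Claim \ref{claim:commute-entails-classical}.

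For $I\otimes Z\notin\mathcal{A}^s_{q_1,q_2}$ I would run the mirror‑image argument with $q_1$ and $q_2$ interchanged: membership would force $Z$ (on $q_2$) to commute with $\mathcal{A}^p_{q_2}$, which contradicts Claim \ref{claim:commute-entails-classical} \emph{provided} $\dim\mathcal{A}^p_{q_2}=2$. The main obstacle is precisely that the hypotheses of Lemma \ref{lem:strong-commutation} do not list $\dim\mathcal{A}^p_{q_2}=2$, so I must first exclude the degenerate values. If $\dim\mathcal{A}^p_{q_2}=1$ then $\mathcal{A}^p_{q_1,q_2}\subseteq\langle b^p_{q_1}\rangle\otimes\mathbb{C}I$ and, being non‑trivial, equals it; commuting with $b^p_{q_1}\otimes I$ then forces every element of $\mathcal{A}^s_{q_1,q_2}\subseteq Sp\{I,Z\}\otimes Sp\{I,Z\}$ to have scalar $q_1$-leg (since $[Z,b^p_{q_1}]\neq 0$ leaves only scalars in $Sp\{I,Z\}$ commuting with $b^p_{q_1}$), whence $\dim\mathcal{A}^s_{q_1}=1$ — contradicting the hypothesis. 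If $\dim\mathcal{A}^p_{q_2}=4$ then $\mathcal{A}^p_{q_1,q_2}$ contains elements whose $q_2$-legs span $\mathcal{L}(\mathbb{C}^2)$, and the same Schmidt argument forces the $q_2$-legs of $\mathcal{A}^s_{q_1,q_2}$ to be scalar, i.e. $\dim\mathcal{A}^s_{q_2}=1$, again a contradiction. Hence $\dim\mathcal{A}^p_{q_2}=2$ and the mirror argument closes the proof. (If one prefers, in this last step one can instead exploit the standing assumption $Z\otimes Z\notin\mathcal{A}^s_{q_1,q_2}$ of the surrounding proof, combining it with the element $T=uZ\otimes I+vI\otimes Z+wZ\otimes Z$ of the previous claim to derive $Z\otimes Z\in\mathcal{A}^s_{q_1,q_2}$ directly; but the commutation route above seems cleaner and more self‑contained.)
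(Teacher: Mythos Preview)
Your argument for $Z\otimes I\notin\mathcal{A}^s_{q_1,q_2}$ is correct and is exactly the paper's argument (the paper just phrases the Schmidt step as a second application of Lemma~\ref{lemma:alg-commute}).

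For $I\otimes Z$, however, your detour through the three cases for $\dim\mathcal{A}^p_{q_2}$ is both unnecessary and, in the case $\dim\mathcal{A}^p_{q_2}=4$, incorrect as written. The ``same Schmidt argument'' does \emph{not} force the $q_2$-legs of $\mathcal{A}^s_{q_1,q_2}$ to be scalar: take $\mathcal{A}^s_{q_1,q_2}=\langle Z\otimes Z\rangle$ and $\mathcal{A}^p_{q_1,q_2}=\langle X\otimes X,\,I\otimes Z\rangle$. These commute, all three single-qubit dimension hypotheses of Lemma~\ref{lem:strong-commutation} are met, and $\mathcal{A}^p_{q_2}=\mathcal{L}(\mathbb{C}^2)$ while $\mathcal{A}^s_{q_2}=\langle Z\rangle$ is still $2$-dimensional. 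The flaw is that $\mathcal{A}^p_{q_2}$ being full only says the $q_2$-legs \emph{generate} the full algebra, not that $I\otimes M\in\mathcal{A}^p_{q_1,q_2}$ for every $M$, which is what your Schmidt step would need. (Your proof is easily salvaged, though: you already noted that membership of $I\otimes Z$ forces $\mathcal{A}^p_{q_2}$ to commute with $Z$; this immediately gives $\mathcal{A}^p_{q_2}\subseteq\langle Z\rangle$, so the case $\dim=4$ never arises and your separate treatment of it is moot.)

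The paper avoids all of this. Having obtained $\mathcal{A}^p_{q_2}\subseteq\langle Z\rangle$, it does not try to feed this into Claim~\ref{claim:commute-entails-classical}; instead it observes directly that now \emph{all} induced algebras on $q_2$ lie in $\langle Z\rangle$ (those of $s$ and $s'$ by hypothesis and Lemma~\ref{lemma:qubit-algebra}, that of $p$ as just shown, and that of $p'$ by commutation with $\mathcal{A}^p_{q_2}$), so $q_2$ is classical and hence trivial by the standing reduction of Claim~\ref{claim:no-classical-qubits}, contradicting $\dim\mathcal{A}^s_{q_2}=2$. No case analysis on $\dim\mathcal{A}^p_{q_2}$ is needed.
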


\begin{proof}
	If by contradiction $\mathcal{A}_{q_{1},q_{2}}^{s}$ includes $Z\otimes I$
	so $B_{p}$ and $Z\otimes I$ commute (lemma \ref{lemma:alg-commute}).
	Therefore, the algebra $\mathcal{A}_{q_{1}}^{p}$ commutes with $Z$
	(again by lemma \ref{lemma:alg-commute}) and so $[Z,P]=0$ which
	cannot be according to Claim \ref{claim:commute-entails-classical} applied to $q_1$. 

	Similarly, if by contradiction $I\otimes Z\in\a_{q_{1},q_{2}}^{s}$
	then $\a_{q_{2}}^{p}$ commute with $Z$. This implies that $\a_{q_{2}}^{p}\subseteq\left\langle Z\right\rangle $
	and so $q_{2}$ is classical because all induced algebras on it are contained in $\langle Z \rangle$ - and hence it must be trivial following 
Claim \ref{claim:no-classical-qubits} - contradiction.
\end{proof}

Observe that since $I\otimes I\in\mathcal{A}_{q_{1},q_{2}}^{s}$ so
$h\in\mathcal{A}_{q_{1},q_{2}}^{s}$ whenever $h+\lambda I\otimes I\in\mathcal{A}_{q_{1},q_{2}}^{s}$
($\lambda\in\mathbb{C})$. Also, $h\in\mathcal{A}_{q_{1},q_{2}}^{s}$
whenever $\lambda h\in\mathcal{A}_{q_{1},q_{2}}^{s}$. Therefore in
all of the following calculations, we will ignore the identity element (freely subtract from it $\lambda I\otimes I$)
and rescale the operator (multiply it by a scalar) without mentioning
it. 
\begin{claim}
	\label{claim:strongcommutation-w-not-0} $w\ne0$
\end{claim}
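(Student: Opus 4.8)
The plan is a short proof by contradiction. Suppose $w=0$, so that $T=uZ\otimes I+vI\otimes Z$, where $u,v$ are real (by Claim \ref{claim:strongcommutation-uvw-are-real}) and not both zero (since $u,v,w$ are not all zero and $w=0$). Recall that $T$ lies in the C*-algebra $\mathcal{A}_{q_{1},q_{2}}^{s}$, hence so does every polynomial in $T$; in particular $T^{2}\in\mathcal{A}_{q_{1},q_{2}}^{s}$.

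First I would dispose of the two degenerate cases. If $u=0$ then $T=vI\otimes Z$ with $v\neq0$, so after rescaling $I\otimes Z\in\mathcal{A}_{q_{1},q_{2}}^{s}$, contradicting Claim \ref{claim:strongcommutation-no-trivial}; symmetrically, if $v=0$ then $Z\otimes I\in\mathcal{A}_{q_{1},q_{2}}^{s}$, again contradicting Claim \ref{claim:strongcommutation-no-trivial}. So it remains to handle $u,v\neq0$. Here the key step is simply to square $T$: since $Z\otimes I$ and $I\otimes Z$ commute, each squares to $I\otimes I$, and their product is $Z\otimes Z$, so
\[
T^{2}=(u^{2}+v^{2})\,I\otimes I+2uv\,Z\otimes Z .
\]
Since $T^{2}\in\mathcal{A}_{q_{1},q_{2}}^{s}$ and $I\otimes I\in\mathcal{A}_{q_{1},q_{2}}^{s}$, subtracting the identity component and rescaling by $1/(2uv)$ (legitimate because $uv\neq0$) yields $Z\otimes Z\in\mathcal{A}_{q_{1},q_{2}}^{s}$, contradicting the standing assumption of the proof of Lemma \ref{lem:strong-commutation} that $Z\otimes Z\notin\mathcal{A}_{q_{1},q_{2}}^{s}$. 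Hence $w\neq0$.

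I do not expect any real obstacle in this claim — it is essentially a one-line computation once the element $T$ is on hand. The only point that needs a little care is to peel off the cases $u=0$ and $v=0$ before squaring, because the identity for $T^{2}$ only extracts a \emph{nonzero} multiple of $Z\otimes Z$ when $uv\neq0$; in the two degenerate cases one instead lands directly in the forbidden membership ruled out by Claim \ref{claim:strongcommutation-no-trivial}.
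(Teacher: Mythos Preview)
Your proof is correct and essentially identical to the paper's: both assume $w=0$, use Claim \ref{claim:strongcommutation-no-trivial} to force $u,v\neq0$, square $T$ to obtain a nonzero multiple of $Z\otimes Z$ (modulo the identity), and contradict the standing assumption $Z\otimes Z\notin\mathcal{A}_{q_{1},q_{2}}^{s}$. The only difference is cosmetic---you spell out the degenerate cases $u=0$ and $v=0$ explicitly, whereas the paper folds them into a parenthetical citation of Claim \ref{claim:strongcommutation-no-trivial}.
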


\begin{proof}
	Note that:
	\begin{equation}
	T^{2}=vwZ\otimes I+uwI\otimes Z+uvZ\otimes Z
	\end{equation}
	is also in $\mathcal{A}_{q_{1},q_{2}}^{s}$. By looking at $T^{2}$
	we see that if $w=0$ we get $Z\otimes Z \in\mathcal{A}_{q_{1},q_{2}}^{s}$ (because $u,v\ne 0$ by Claim \ref{claim:strongcommutation-no-trivial})
which is a contradiction to our assumption. 
\end{proof}
\begin{claim}
	\label{claim:strongcommutation-uvw-are-not-0} $u,v,w$ are all non-zero.
\end{claim}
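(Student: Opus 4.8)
The plan is to finish the lemma's proof by a short case analysis that exploits two facts already in hand: $\mathcal{A}_{q_1,q_2}^s$ is a $*$-algebra, hence closed under multiplication, and (modulo adding multiples of $I\otimes I$ and rescaling, per the convention fixed just before Claim~\ref{claim:strongcommutation-w-not-0}) it contains none of $Z\otimes I$, $I\otimes Z$ (Claim~\ref{claim:strongcommutation-no-trivial}) nor $Z\otimes Z$ (the standing assumption of the proof). Since $w\neq 0$ is already established in Claim~\ref{claim:strongcommutation-w-not-0}, only $u\neq 0$ and $v\neq 0$ remain.

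First I would square $T=uZ\otimes I+vI\otimes Z+wZ\otimes Z$, using $Z^2=I$, to get
\[
T^{2}=(u^{2}+v^{2}+w^{2})\,I\otimes I+2vw\,Z\otimes I+2uw\,I\otimes Z+2uv\,Z\otimes Z ,
\]
which again lies in $\mathcal{A}_{q_1,q_2}^s$; so, modulo the identity, $2vw\,Z\otimes I+2uw\,I\otimes Z+2uv\,Z\otimes Z\in\mathcal{A}_{q_1,q_2}^s$.

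Next, suppose for contradiction that $u=0$. Then the element above reduces to $2vw\,Z\otimes I$. If $v\neq 0$ this would place $Z\otimes I$ in $\mathcal{A}_{q_1,q_2}^s$, contradicting Claim~\ref{claim:strongcommutation-no-trivial}; if $v=0$ then $T=wZ\otimes Z$ with $w\neq 0$, so $Z\otimes Z\in\mathcal{A}_{q_1,q_2}^s$, contradicting the standing assumption. Hence $u\neq 0$. The argument for $v\neq 0$ is the mirror image, interchanging the two qubits $q_1,q_2$ (equivalently, the roles of $u$ and $v$): assuming $v=0$ reduces the element to $2uw\,I\otimes Z$, and since $u,w\neq 0$ this forces $I\otimes Z\in\mathcal{A}_{q_1,q_2}^s$, again contradicting Claim~\ref{claim:strongcommutation-no-trivial}.

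I do not expect a genuine obstacle here; the one thing to be careful about is to invoke the ``modulo $I\otimes I$ and up to a scalar'' convention correctly, so that, e.g., ``$2vw\,Z\otimes I\in\mathcal{A}_{q_1,q_2}^s$ with $vw\neq 0$'' legitimately yields $Z\otimes I\in\mathcal{A}_{q_1,q_2}^s$. Beyond that, the whole claim is just the bookkeeping of this single $2\times 2$ computation together with the two already-proved exclusions.
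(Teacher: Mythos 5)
Your proof is correct and follows essentially the same route as the paper: both compute $T^2$, invoke $w\ne 0$ from the previous claim, and derive contradictions via Claim~\ref{claim:strongcommutation-no-trivial} and the standing assumption $Z\otimes Z\notin\mathcal{A}^s_{q_1,q_2}$. The only difference is cosmetic bookkeeping in how the cases are split (you treat $u=0$ and $v=0$ separately, whereas the paper splits into ``both zero'' versus ``exactly one zero''), which amounts to the same argument.
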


\begin{proof}
	By claim \ref{claim:strongcommutation-w-not-0} $w\ne0$. In addition,
	$u,v$ cannot be both zero since then $Z\otimes Z\in\a_{q_{1},q_{2}}^{s}$
	(as can readily be seen by the definition of $T$). Thus suppose one of
	$u,v$ is zero and the other isn't, but then since $T^2$ 
is in $\mathcal{A}_{q_{1},q_{2}}^{s}$, 
we derive a contradiction to Claim \ref{claim:strongcommutation-no-trivial}. 
\end{proof}
\begin{claim}
	\label{claim:strongcommutation-uvw-plusminus1}
 $u^{2}=v^{2}=w^{2}$
\end{claim}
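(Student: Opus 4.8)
The plan is to pin down the algebra $\mathcal{A}^s_{q_1,q_2}$ exactly and then read the relation $u^2=v^2=w^2$ off a single polynomial identity. First I would show that $\mathcal{A}^s_{q_1,q_2}$ is \emph{two}-dimensional, equal to $Sp\{I\otimes I,\,T\}$; then, since $T^2$ also lies in this algebra, $T^2$ must be a linear combination of $I\otimes I$ and $T$, and comparing coefficients in the expansion of $T^2$ will force $u^2=v^2=w^2$.

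For the first step, recall from eq.\ \ref{eq:zz} that $\mathcal{A}^s_{q_1,q_2}$ sits inside the commutative four-dimensional algebra $\mathcal{B}:=Sp\{I\otimes I,\,Z\otimes I,\,I\otimes Z,\,Z\otimes Z\}$; it contains $I\otimes I$ and $T$, and $T$ is not a multiple of $I\otimes I$ since $u,v,w$ are all nonzero by Claim \ref{claim:strongcommutation-uvw-are-not-0}, so $\dim\mathcal{A}^s_{q_1,q_2}\in\{2,3,4\}$. To exclude $3$ and $4$, I would identify $\mathcal{B}$ with the algebra of functions on the four joint eigenvalue patterns $(\epsilon,\delta)\in\{\pm1\}^2$ of $(Z\otimes I,\,I\otimes Z)$, under which $Z\otimes I$, $I\otimes Z$, $Z\otimes Z$ become the functions $(\epsilon,\delta)\mapsto\epsilon$, $(\epsilon,\delta)\mapsto\delta$, $(\epsilon,\delta)\mapsto\epsilon\delta$. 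A unital subalgebra of $\mathcal{B}$ is exactly the set of functions that are constant on each block of some partition of these four points, and its dimension equals the number of blocks; a $4$-dimensional one is all of $\mathcal{B}$, and a $3$-dimensional one merges exactly one pair of distinct points $a,b$. In the first case $Z\otimes I\in\mathcal{A}^s_{q_1,q_2}$, contradicting Claim \ref{claim:strongcommutation-no-trivial}. In the second case, since $a\neq b$ in $\{\pm1\}^2$, either $a,b$ agree in the first coordinate (then $Z\otimes I$ is constant on $\{a,b\}$, hence in $\mathcal{A}^s_{q_1,q_2}$), or in the second coordinate (then $I\otimes Z$ is), or they disagree in both, i.e.\ $a=-b$ (then $\epsilon\delta$ agrees on $a$ and $b$, so $Z\otimes Z$ is) --- each contradicting Claim \ref{claim:strongcommutation-no-trivial} or the running assumption $Z\otimes Z\notin\mathcal{A}^s_{q_1,q_2}$. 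Hence $\dim\mathcal{A}^s_{q_1,q_2}=2$ and $\mathcal{A}^s_{q_1,q_2}=Sp\{I\otimes I,\,T\}$.

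For the final step I would expand $T^2=(u^2+v^2+w^2)\,I\otimes I+2vw\,Z\otimes I+2uw\,I\otimes Z+2uv\,Z\otimes Z$ and write $T^2=\alpha\,I\otimes I+\beta\,T$, which is legitimate since $T^2\in Sp\{I\otimes I,\,T\}$. Comparing the coefficients of $Z\otimes I$, $I\otimes Z$, $Z\otimes Z$ gives $2vw=\beta u$, $2uw=\beta v$, $2uv=\beta w$. Multiplying the first by $v$ and the second by $u$ gives $2v^2w=\beta uv=2u^2w$, so $u^2=v^2$ since $w\neq0$ (Claim \ref{claim:strongcommutation-w-not-0}); multiplying the second by $w$ and the third by $v$ gives $2uw^2=\beta vw=2uv^2$, so $v^2=w^2$ since $u\neq0$. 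Thus $u^2=v^2=w^2$. The only step that is more than bookkeeping is the structural classification of subalgebras of $\mathcal{B}$, but because $\mathcal{B}\cong\mathbb{C}^4$ this reduces to the elementary partition argument above, so I do not anticipate a genuine obstacle here.
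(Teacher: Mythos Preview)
Your proof is correct, but it takes a genuinely different route from the paper. The paper never pins down $\dim\mathcal{A}^s_{q_1,q_2}$; instead it forms the specific combination $Q=uvT-wT^2$, which by the expansion of $T^2$ has vanishing $Z\otimes Z$-coefficient, and then computes $Q^2=uv(u^2-w^2)(v^2-w^2)\,Z\otimes Z$ (up to the identity term). If both factors $(u^2-w^2)$ and $(v^2-w^2)$ were nonzero this would put $Z\otimes Z$ in the algebra, contradicting the standing assumption; if exactly one vanished, $Q$ itself would be a nonzero multiple of $Z\otimes I$ or $I\otimes Z$, contradicting Claim~\ref{claim:strongcommutation-no-trivial}. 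So $u^2=v^2=w^2$.

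Your approach instead proves the stronger intermediate statement $\mathcal{A}^s_{q_1,q_2}=Sp\{I\otimes I,T\}$ via the partition classification of unital subalgebras of $\mathbb{C}^4$, and then reads off the claim from the single relation $T^2\in Sp\{I\otimes I,T\}$. This is more structural and makes the subsequent coefficient-matching completely mechanical, at the cost of the extra (though elementary) case analysis on partitions. The paper's argument is shorter and purely computational but relies on spotting the right linear combination $uvT-wT^2$. Both are valid; yours has the side benefit of explicitly identifying the algebra as two-dimensional, which the paper does not state at this point.
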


\begin{proof}
	Write:
	
	\begin{equation}
	Q=uvT-wT^{2}=v\left(u^{2}-w^{2}\right)Z\otimes I+u\left(v^{2}-w^{2}\right)I\otimes Z
	\end{equation}
	\begin{equation}
	Q^{2}=uv\left(u^{2}-w^{2}\right)\left(v^{2}-w^{2}\right)Z\otimes Z
	\end{equation}
	By looking at $Q^{2}$ it can be seen that if $w^{2}\ne u^{2}$ and
	$w^{2}\ne v^{2}$ then $Z\otimes Z\in\a_{q_{1},q_{2}}^{s}$, using
	claim \ref{claim:strongcommutation-uvw-are-not-0}. It cannot be that
	$w^{2}=u^{2}$ but $w^{2}\ne v^{2}$ (or the other way around) since
	then $Q$ cannot be in $\mathcal{A}_{q_{1},q_{2}}^{s}$ by claim \ref{claim:strongcommutation-no-trivial}.
	It follows that $u^{2}=v^{2}=w^{2}$.
\end{proof}
We are thus left with the case where $u^{2}=v^{2}=w^{2}$ and by rescaling
$T$ we may assume each of $u,v,w$ is either 1 or -1. We show that
this leads to a contradiction which in turn implies that indeed $Z\otimes Z\in\a_{q_{1},q_{2}}^{s}$.
For simplicity, assume that $u=v=w=1$: the proof where some of them
are $-1$ is exactly the same. 
Let $R\in\a_{q_{1},q_{2}}^{p}$. According
to lemma \ref{lemma:subalgebra} $\a_{q_{1},q_{2}}^{p}\subseteq\a_{q_{1}}^{p}\otimes\a_{q_{2}}^{p}\subseteq\left\langle P\right\rangle \otimes\l(\h_{q_{2}})$
and so $R$ can be written as: 
\begin{equation}
R=I\otimes R_{1}+P\otimes R_{2}
\end{equation}
where $R_{1},R_{2}\in\mathcal{L}(\h_{q_{2}})$ are arbitrary. By lemma
\ref{lem:anti-commute} we may choose a basis for $\h_{q_{1}}$
and for $\h_{q_{2}}$ for which $P=aX+bZ$ (with $a\ne0$ because
$\left[Z,P\right]\ne0$) and $R_{1}=\lambda I+rX+tZ$ (while keeping
$Z$ in tact). Since $\a_{q_{1},q_{2}}^{s}$ and $\a_{q_{1},q_{2}}^{p}$
commute so:
\[
\left[T,R\right]=\left[Z\otimes I+I\otimes Z+Z\otimes Z,rI\otimes X+tI\otimes Z+aX\otimes R_{2}+bZ\otimes R_{2}\right]=
\]
\begin{equation}
2iaY\otimes R_{2}+2irI\otimes Y+aX\otimes\left[Z,R_{2}\right]+bZ\otimes[Z,R_{2}]+2irZ\otimes Y+2iaY\otimes\left\{ Z,R_{2}\right\} +bI\otimes\left[Z,R_{2}\right]=0\label{eq:=00005BT,R=00005D}
\end{equation}
Due to linear independence in eq. \ref{eq:=00005BT,R=00005D} we may
conclude that $aX\otimes[Z,R_{2}]=0$ and so $[Z,R_{2}]=0$ because
$a\ne0$. Therefore we have:
\begin{equation}
2iaY\otimes R_{2}+2irI\otimes Y+2irZ\otimes Y+2iaY\otimes\left\{ Z,R_{2}\right\} =0\label{eq:=00005BT,R=00005D2}
\end{equation}
So $r=0$ which implies $[Z,R_{1}]=[Z,\lambda I+rX+tZ]=0$. Since
$R$ was arbitrary we conclude that $\a_{q_{1},q_{2}}^{p}$ consist
only of elements $R=I\otimes R_{1}+P\otimes R_{2}$ where $R_{1},R_{2}\in\left\langle Z\right\rangle $.
This implies that $\a_{q_{2}}^{p}\subseteq\left\langle Z\right\rangle $
and so $q_{2}$ is classical which is a contradiction. 

We have thus proved that $\left\langle Z\otimes Z\right\rangle \subseteq
\mathcal{A}_{q_{1},q_{2}}^{s}$. From Equation \ref{eq:zz} we can deduce 
$\mathcal{A}_{q_{1},q_{2}}^{s}=\langle Z \otimes Z \rangle$. 
We now further assume that $\dim\a_{q_{2}}^{p}=2$  and complete 
the proof of the lemma:  

\selectlanguage{american}%
\begin{claim}
	$\a_{q_{1}}^{p}=\left\langle X\right\rangle $
\end{claim}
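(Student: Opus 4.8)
The plan is to leverage what has already been established: at this point in the argument we know $\a_{q_1,q_2}^s = \langle Z \otimes Z \rangle$, that $\a_{q_1}^s = \a_{q_2}^s = \langle Z \rangle$, that $\a_{q_1}^p = \langle P \rangle$ for some Pauli $P$, and (from Claim \ref{claim:commute-entails-classical} applied at $q_1$) that $[Z, P] \ne 0$. First I would recall that by Claim \ref{claim:changeofbasis1} we may pick a basis in which $P = aZ + bX$ with $b \ne 0$; rescaling and subtracting the (irrelevant) identity component does not change $\langle P \rangle$, so the only thing to rule out is a nonzero $Z$-component, i.e. I must show $a = 0$, equivalently that $P$ is a pure Pauli anti-commuting with $Z$, which up to the residual basis freedom fixing $Z$ is exactly $X$.

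The key step is to exploit the commutation $[\a_{q_1,q_2}^s, \a_{q_1,q_2}^p] = 0$ more carefully, now with the extra hypothesis $\dim \a_{q_2}^p = 2$ that was just invoked. Since $Z\otimes Z \in \a_{q_1,q_2}^s$ and $B_p$ commutes with $A_s$, Lemma \ref{lemma:alg-commute} (or directly Lemma \ref{lemma:subalgebra}) tells us every $R \in \a_{q_1,q_2}^p$ commutes with $Z \otimes Z$. Writing $R = I \otimes R_1 + P \otimes R_2$ as in the displayed computation above (using $\a_{q_1,q_2}^p \subseteq \langle P\rangle \otimes \l(\h_{q_2})$), I would compute $[Z\otimes Z,\, I\otimes R_1 + P\otimes R_2]$. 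The first summand gives $I \otimes [Z,R_1]$; the second gives $[Z,P]\otimes ZR_2 + PZ \otimes Z R_2 - \dots$ — more cleanly, $[Z\otimes Z, P\otimes R_2] = \tfrac12\{Z,P\}\otimes [Z,R_2] + \tfrac12[Z,P]\otimes\{Z,R_2\}$. Writing $P = aZ + bX$ we have $\{Z,P\} = 2aI$ and $[Z,P] = 2b(ZX) = 2ibY$, so this term is $aI\otimes[Z,R_2] + ibY\otimes\{Z,R_2\}$. Linear independence of $I, Y$ on $\h_{q_1}$ then forces $b\,\{Z,R_2\} = 0$, hence (since $b\ne 0$) $\{Z,R_2\}=0$ for every such $R$, and also $[Z,R_1] + a[Z,R_2] = 0$. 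Since $\dim\a_{q_2}^p = 2$, some $R$ in $\a_{q_1,q_2}^p$ must have $R_2 \ne 0$ generating a $2$-dimensional algebra on $q_2$ together with the projection of $R_1$ — otherwise $\a_{q_2}^p$ collapses. Then $\{Z, R_2\} = 0$ pins $R_2$ to be (a scalar times) a Pauli in $Sp\{X,Y\}$, and feeding this back, together with $[Z,R_1] = -a[Z,R_2]$, constrains $a$.

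The cleanest route to killing $a$: consider $R$ and $R^2 \in \a_{q_1,q_2}^p$. With $R_2$ anticommuting with $Z$, a short computation of $R^2 = I\otimes R_1^2 + P\otimes(R_1R_2 + R_2R_1) + P^2 \otimes R_2^2$ and using $P^2 = (a^2+b^2)I$, $\{R_1,R_2\}$-terms, one finds that unless $a=0$ the algebra $\a_{q_1,q_2}^p$ would contain an element witnessing $Z \in \a_{q_1}^p$ (forcing $q_1$ classical, contradiction), or an element violating Claim \ref{claim:strongcommutation-no-trivial}'s analogue for $p$. I expect the \textbf{main obstacle} to be bookkeeping: isolating exactly which linear combination forces $a=0$ without accidentally using a case where $R_2$ is degenerate, and handling the residual sign/basis freedom so that ``anticommutes with $Z$ and traceless'' genuinely yields $X$ rather than some other representative. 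Once $a=0$, $P = bX$ and $\langle P \rangle = \langle X\rangle$, proving the claim; the subsequent assertions $\a_{q_2}^p = \langle X\rangle$ and $\a_{q_1,q_2}^p = \langle X \otimes X\rangle$ then follow by the mirror image of the $Z\otimes Z$ argument already carried out, with the roles of stars/plaquettes and $Z$/$X$ interchanged.
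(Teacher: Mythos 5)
Your central computation is the right one and is the same as the paper's: expand a general $R=I\otimes R_1+P\otimes R_2\in\a^p_{q_1,q_2}$, impose $[Z\otimes Z,R]=0$, and read off constraints from linear independence. But you have a computational slip that derails the rest: $[Z\otimes Z,\,I\otimes R_1]=Z\otimes[Z,R_1]$, \emph{not} $I\otimes[Z,R_1]$ (take $A=Z,C=I$ in your own identity: $\tfrac12\{Z,I\}\otimes[Z,R_1]+\tfrac12[Z,I]\otimes\{Z,R_1\}=Z\otimes[Z,R_1]$). Because of this, your equation incorrectly merges the $[Z,R_1]$ term with the $a[Z,R_2]$ term under the common $q_1$-factor $I$. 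Correcting it, the commutator reads
\[
[Z\otimes Z,R]\;=\;Z\otimes[Z,R_1]\;+\;a\,I\otimes[Z,R_2]\;+\;ib\,Y\otimes\{Z,R_2\}\;=\;0,
\]
and linear independence of $Z,I,Y$ on $\h_{q_1}$ now gives three \emph{separate} conditions: $[Z,R_1]=0$, $a[Z,R_2]=0$, and $\{Z,R_2\}=0$.

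Once the constraints are separated correctly, the finish is immediate and the whole $R^2$ bookkeeping you were worried about is unnecessary. If $a\ne 0$, then $[Z,R_2]=0$ and $\{Z,R_2\}=0$ together give $ZR_2=0$, hence $R_2=0$ for every $R\in\a^p_{q_1,q_2}$; but then every $(q_1,q_2)$-Schmidt factor of $B_p$ is of the form $I_{q_1}\otimes(\cdot)$, so $B_p$ acts trivially on $q_1$, contradicting $\dim\a^p_{q_1}=2$. Thus $a=0$ and $P=bX$ with $b\ne0$, i.e. $\a^p_{q_1}=\langle X\rangle$. The paper finishes slightly differently: from $r=0$ it deduces $[Z,R_1]=0$ always, then uses that $q_2$ is not classical (so some $R$ must have $[Z,R_2]\ne0$) to kill the $Z$-coefficient of $P$. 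Both finishes are valid; yours, once corrected, is arguably a touch cleaner since it only uses $\dim\a^p_{q_1}=2$. Note also that $\a^p_{q_1}=\langle X\rangle$ holds without the hypothesis $\dim\a^p_{q_2}=2$ --- that extra assumption is only used for the subsequent assertions $\a^p_{q_2}=\langle X\rangle$ and $\a^p_{q_1,q_2}=\langle X\otimes X\rangle$, so you should not lean on it here.
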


\begin{proof}
	The fact that $Z\otimes Z\in\a_{q_{1},q_{2}}^{s}$ implies that
		for any $R=I\otimes R_{1}+P\otimes R_{2}\in\a_{q_{1},q_{2}}^{p}$:
	\begin{equation}
	\left[Z\otimes Z,R\right]=\left[Z\otimes Z,rI\otimes X+tI\otimes Z+aX\otimes R_{2}+bZ\otimes R_{2}\right]=2irZ\otimes Y+2iaY\otimes\left\{ Z,R_{2}\right\} +bI\otimes\left[Z,R_{2}\right]=0\label{eq:=00005BZZ,R=00005D}
	\end{equation}
	Due to linear independence in eq \ref{eq:=00005BZZ,R=00005D}, $r=0$
	which implies $[Z,R_{1}]=0$. As in the proof of Claim 
  \ref{claim:strongcommutation-uvw-plusminus1}, there must be some
	choice of $R$ for which $R_{1},R_{2}$ do not both commute with $Z$.
	Choosing such an $R$ implies that $[Z,R_{2}]\ne0$ and therefore
	$b=0$ due to the linear independence of eq \ref{eq:=00005BZZ,R=00005D}.
	Thus $P=aX$ with $a\ne0$ and hence $\a_{q_{1}}^{p}=\left\langle P\right\rangle =\left\langle X\right\rangle $
\end{proof}
\begin{claim}
	If $\a_{q_{2}}^{p}$ is 2-dimensional then $\a_{q_{1},q_{2}}^{p}=\left\langle X\otimes X\right\rangle $
\end{claim}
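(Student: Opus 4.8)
The plan is to pin down $\a_{q_{1},q_{2}}^{p}$ exactly by combining three facts already available inside the proof of Lemma \ref{lem:strong-commutation}: the inclusion coming from Lemma \ref{lemma:subalgebra}, the fact that the element $Z\otimes Z$ we have just shown to lie in $\a_{q_{1},q_{2}}^{s}$ must commute with every element of $\a_{q_{1},q_{2}}^{p}$ (Lemma \ref{lemma:alg-commute}, since $A_s$ and $B_p$ commute), and the current hypothesis $\dim\a_{q_{2}}^{p}=2$; a final harmless change of basis on $\h_{q_{2}}$ then produces the stated normal form.

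Concretely, by the previous claim $\a_{q_{1}}^{p}=\left\langle X\right\rangle $, and by Lemma \ref{lemma:qubit-algebra} (using $\dim\a_{q_{2}}^{p}=2$) we have $\a_{q_{2}}^{p}=\left\langle b\right\rangle $ for a traceless Pauli operator $b=c_{x}X+c_{y}Y+c_{z}Z$ on $\h_{q_{2}}$ with $c_{x}^{2}+c_{y}^{2}+c_{z}^{2}=1$. Hence Lemma \ref{lemma:subalgebra} gives $\a_{q_{1},q_{2}}^{p}\subseteq\a_{q_{1}}^{p}\otimes\a_{q_{2}}^{p}=\left\langle X\right\rangle \otimes\left\langle b\right\rangle $, so every $R\in\a_{q_{1},q_{2}}^{p}$ is of the form $R=\lambda I\otimes I+\mu I\otimes b+\nu X\otimes I+\rho X\otimes b$. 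I would then impose $\left[Z\otimes Z,R\right]=0$ and expand term by term, using $[Z,X]=2iY$, $[Z,Y]=-2iX$, $\{Z,X\}=\{Z,Y\}=0$ and $\{Z,Z\}=2I$; the upshot is $2i\left(\mu c_{x}\,Z\otimes Y-\mu c_{y}\,Z\otimes X+\nu\,Y\otimes Z+\rho c_{z}\,Y\otimes I\right)=0$, and linear independence of these four Pauli tensor products forces $\nu=0$, $\mu c_{x}=\mu c_{y}=0$, $\rho c_{z}=0$. Since $[Z,b]=[a_{q_{2}}^{s},b_{q_{2}}^{p}]\neq0$ by Claim \ref{claim:commute-entails-classical} applied at $q_{2}$ (both induced algebras on $q_2$ are two dimensional), we get $(c_{x},c_{y})\neq(0,0)$, hence $\mu=0$, so $R=\lambda I\otimes I+\rho X\otimes b$ with $\rho c_{z}=0$.

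Next I would note that $\a_{q_{1},q_{2}}^{p}\neq\mathbb{C}\cdot I\otimes I$: otherwise $B_{p}$ would act trivially on $q_{1}$, contradicting $\dim\a_{q_{1}}^{p}=2$. Thus there is some $R$ with $\rho\neq0$, which forces $c_{z}=0$, i.e.\ $b$ is a traceless unit Pauli lying in the span of $X$ and $Y$. Then every $R\in\a_{q_{1},q_{2}}^{p}$ lies in $Sp\{I\otimes I,\,X\otimes b\}$, which equals the two-dimensional C*-algebra $\left\langle X\otimes b\right\rangle $ (as $(X\otimes b)^{\dagger}=X\otimes b$ and $(X\otimes b)^{2}=I\otimes I$); combined with $\a_{q_{1},q_{2}}^{p}\neq\mathbb{C}\cdot I\otimes I$ this gives $\a_{q_{1},q_{2}}^{p}=\left\langle X\otimes b\right\rangle $, and in particular $\a_{q_{2}}^{p}=\left\langle b\right\rangle $ with $b$ in the span of $X$ and $Y$. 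Finally I would apply the diagonal change of basis $U=\ketbra 0 0+e^{i\phi}\ketbra 1 1$ on $\h_{q_{2}}$, with $\phi$ chosen so that $U^{\dagger}bU=X$ (such a $\phi$ exists precisely because $b$ is a traceless unit combination of $X$ and $Y$). Since $U^{\dagger}ZU=Z$, this leaves $a_{q_{1}}^{s}=Z$, $b_{q_{1}}^{p}=X$, $a_{q_{2}}^{s}=Z$ and $\a_{q_{1},q_{2}}^{s}=\left\langle Z\otimes Z\right\rangle $ intact, while turning $\a_{q_{2}}^{p}$ into $\left\langle X\right\rangle $ and $\a_{q_{1},q_{2}}^{p}$ into $\left\langle X\otimes X\right\rangle $; this also records the remaining assertion $\a_{q_{2}}^{p}=\left\langle X\right\rangle $ of Lemma \ref{lem:strong-commutation}.

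I expect the only delicate point to be this last step: one must verify that after all the earlier normalizations the surviving basis freedom on $\h_{q_{2}}$ is exactly the diagonal unitaries (those commuting with $Z$), which both preserve everything already fixed on the two qubits and are rich enough to rotate any traceless unit element of the span of $X$ and $Y$ onto $X$. Everything else is the routine commutator bookkeeping above together with the elementary non-triviality remark $\a_{q_{1},q_{2}}^{p}\neq\mathbb{C}\cdot I\otimes I$.
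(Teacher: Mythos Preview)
Your argument is correct. Both your proof and the paper's hinge on the same two ingredients: the inclusion $\a_{q_{1},q_{2}}^{p}\subseteq\a_{q_{1}}^{p}\otimes\a_{q_{2}}^{p}$ from Lemma~\ref{lemma:subalgebra} and the commutation of every element of $\a_{q_{1},q_{2}}^{p}$ with $Z\otimes Z\in\a_{q_{1},q_{2}}^{s}$. The difference is only in how $\a_{q_{2}}^{p}=\langle X\rangle$ is obtained. The paper observes that once $\dim\a_{q_{2}}^{p}=2$ the hypotheses of Lemma~\ref{lem:strong-commutation} are symmetric in $q_{1}$ and $q_{2}$, so re-running the previous claim with roles swapped yields $\a_{q_{2}}^{p}=\langle X\rangle$ directly, after which the commutator computation is a one-liner on $Sp\{I\otimes I,\,X\otimes I,\,I\otimes X,\,X\otimes X\}$. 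You instead keep $b=c_{x}X+c_{y}Y+c_{z}Z$ general, force $c_{z}=0$ from the commutator and non-triviality, and then spend the residual diagonal basis freedom on $\h_{q_{2}}$ to rotate $b$ to $X$. Your route is a bit longer but more self-contained: it does not ask the reader to mentally re-run the previous claim with swapped indices, and it makes explicit that the only basis freedom left on $q_{2}$ (unitaries commuting with $Z$) is exactly what is needed and does not disturb anything already fixed on $q_{1}$ or $\a_{q_{1},q_{2}}^{s}$. The paper's route is shorter but relies on the reader checking that the symmetry argument is compatible with the basis choices already made.
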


\begin{proof}
	In $\a_{q_{2}}^{p}$ is 2-dimensional then the roles of $q_{1}$ and
	$q_{2}$ are symmetrical, and so by changing their roles we conclude
	that $\a_{q_{2}}^{p}=\left\langle X\right\rangle $ just as we proved
	that $\a_{q_{1}}^{p}=\left\langle X\right\rangle $. 
The fact that
	$\a_{q_{1},q_{2}}^{p}=\left\langle X\otimes X\right\rangle $ is simply
	because the only operators in $Sp\{I\otimes I,X\otimes I,I\otimes X,X\otimes X\}$
	which commute with $Z\otimes Z$ are those in $Sp\{I\otimes I,X\otimes X\}=\left\langle X\otimes X\right\rangle $
	(indeed $\left[Z\otimes Z,\lambda'I\otimes I+u'X\otimes I+v'I\otimes X+w'X\otimes X\right]=2iu'Y\otimes Z+2iv'Z\otimes Y=0$
	implies that $u'=v'=0$).
\end{proof}
This concludes the proof of lemma \ref{lem:strong-commutation}. 
\end{proof}

\subsection{Proof of Theorem \ref{thm:equiv}}

\begin{proof} (of Theorem \ref{thm:equiv}). 
The proof is by induction on $r$ which appears in definition \ref{def:equiv}.
Lemma \ref{lem:strong-commutation} shows the correctness for $r=2$. 
Let $s$ be a star and $(q_{1},...,q_{r})$ a copath
of qubits of $s$ which are not in the coboundary, and such that there aren't two qubits in a row in this copath which are in the boundary. 
Thus we can apply lemma \ref{lem:strong-commutation} to each pair $q_i,q_{i+1}$ 
in this copath
to conclude that $\a^s_{q_i,q_{i+1}}=\langle Z\otimes Z\rangle$. Observe
that by lemma \ref{lemma:subalgebra}, $\mathcal{A}_{q_{1},...,q_{r+1}}^{s}\subseteq\mathcal{A}_{q_{1},...,q_{r}}^{s}\otimes\mathcal{A}_{q_{r+1}}^{s}$
and also $\mathcal{A}_{q_{1},...,q_{r+1}}^{s}\subseteq\mathcal{A}_{q_{1}}^{s}\otimes\mathcal{A}_{q_{2},...,q_{r+1}}^{s}$,
where induction hypothesis tells us that $\mathcal{A}_{q_{1},...,q_{r}}^{s}=\mathcal{A}_{q_{2},...,q_{r+1}}^{s}=\left\langle Z^{\otimes r}\right\rangle .$
Therefore, any $a\in\mathcal{A}_{q_{1},...,q_{r+1}}^{s}$ can on the
one hand be written as:
\begin{equation}
\label{eq:Zinduction1}
a=u_{1}I^{\otimes(r+1)}+u_{2}I\otimes\left(Z^{\otimes r}\right)+u_{3}Z\otimes\left(I^{\otimes r}\right)+u_{4}Z^{\otimes r+1}
\end{equation}
 and on the other hand by:
\begin{equation}
\label{eq:Zinduction2}
a=w_{1}I^{\otimes(r+1)}+w_{2}\left(I^{\otimes r}\right)\otimes Z+w_{3}\left(Z^{\otimes r}\right)\otimes I+w_{4}Z^{\otimes r+1}
\end{equation}
By joining equations \ref{eq:Zinduction1} and \ref{eq:Zinduction2}, we conclude with a linear independence argument (using $r\ge 2$) 
that $u_{2}=u_{3}=w_{2}=w_{3}=0$
which means that $a\in\left\langle Z^{\otimes r+1}\right\rangle $.
Thus $\mathcal{A}_{q_{1},...,q_{r+1}}^{s}\subseteq\left\langle Z^{\otimes r+1}\right\rangle =Sp\{I^{\otimes r+1},Z^{\otimes r+1}\}$
but since $\mathbb{C}\cdot I\subsetneq\mathcal{A}_{q_{1},...,q_{r+1}}^{s}$
so indeed $\mathcal{A}_{q_{1},...,q_{r+1}}^{s}=\left\langle Z^{\otimes r+1}\right\rangle $.
By lemma \ref{lem:strong-commutation} the algebra induced by any plaquette on each pair of adjacent qubits which are both not in the boundary is $\langle X\otimes X \rangle$ provided that at least one of them is not in the coboundary. Therefore the induction above holds for plaquettes just as well. 
\end{proof}

 \section{Proof of main lemma: Lemma \ref{lem:either-path-or-copath}}
\label{apx:proof_of_lem:either-path-or-copath}

\begin{defn} [ribbon]
\label{def:ribbon}
A ribbon is a sequence of edges $\left(e_{0},...,e_{m}\right)$ such
that each pair $e_{i-1},e_{i}$ (with $i=1,...,m$) is shared by a star $s_{i}$ and a
plaquette $p_{i}$. (see Figure \ref{fig:ribbon})
\end{defn}

The idea behind using ribbons is that a ribbon contains both a path
and a copath. This allows us to maintain (at the moment) some ambiguity
regarding the question of whether we shall later choose a star and connect to the
coboundary via a path, or choose an adjacent plaquette and connect it to the boundary via a copath. Indeed if we let $s_i$,$p_i$ denote the stars and plaquettes of the ribbon as in definition \ref{def:ribbon} and let $s_0$,$p_0$ be the other star and plaquette, besides $s_1,p_1$ which act on $e_0$ then we may consider  the path $\gamma=(s_{0},...,s_{m})$ and the copath $\gamma^{*}=\left(p_{0},...,p_{m}\right)$. To be precise, these sequences may be redundant
in the sense that e.g $\gamma$ may include several stars in a row which are
in fact identical, yet it is clear how by ignoring such multiplicities
we can think of $\gamma$ and $\gamma^{*}$ as a path and copath in a well defined sense. Note that the edges of $\gamma$ as well as the edges
of $\gamma^{*}$ belong to the ribbon. (see figure \ref{fig:ribbon})

Conversely, any path can be completed to a ribbon. Indeed let $\gamma=\left(v_{0},...,v_{m}\right)$ be a path
with corresponding edges $(\tilde{e}_{1},...,\tilde{e}_{m})$ . For
each $1\leq i\leq m-1$, connect $\tilde{e}_{i}$ and $\tilde{e}_{i+1}$
via a copath ($e_{i}^{0},...,e_{i}^{r_{i}})$ where $e_{i}^{0}=\tilde{e}_{i}$,
$e_{i}^{r_{i}}=\tilde{e}_{i+1}$ and where each $e_{i}^{j}$ has $v_{i}$
as one of its endpoint. This can be done due to the fact that $\s$ is topologically a surface and hence every point, in particular any vertex of $\k$, has a neighborhood which is either homeomorphic to the plane $\mathbb{R}^2$ or else to the upper plane (see Def \ref{def:complex}). In any event, in the plaquettes which intersect this neighborhood we can form a copath between the two edges. Thus by taking 
$\left(\tilde{e}_{1},...,\tilde{e}_{m}\right)$
and adding between $\tilde{e}_{i},\tilde{e}_{i+1}$ the 
edges $e_{i}^{1},...,e_{i}^{r_{i}-1}$
we obtain a ribbon. 

So far we haven't assumed that $\k$ (or $\s$) are connected. 
However clearly
this can be assumed since otherwise we consider each connected component
separately because there is no interaction at all between them. Note that assuming
$\k$ is connected implies not only that any two edges can be connected
by a path but also by a ribbon since any path can be completed to
a ribbon. 

We are now ready to start the construction and proof for Lemma 
$\ref{lem:either-path-or-copath}$. Figure \ref{fig:ribbon} is the main figure of the construction and is helpful in visualizing the proof. 
Let $s_{0},p_{0}$ adjacent star and plaquette in the interior, with
some shared edge $e_{0}$. As we can assume $\k$ is
connected, there exists a
path starting from $e_{0}$ and ending at a boundary/coboundary edge.
Choose such a path which is not self intersecting (each vertex is
connected to a most 2 edges of the path) and complete it to a ribbon.
It could possibly be that this ribbon includes multiple edges which
are in the boundary/coboundary. So find the first edge in the ribbon
which is in the boundary/coboundary and remove all edges following
that edge. Thus we obtain a ribbon $\beta=(e_{0},...,e_{m})$, corresponding
to the qubits $(q_{0},...,q_{m})$ with all qubits being in the interior
except $q_{m}$ which is either in the boundary or in the coboundary (or both).

Let $s_{i}$ and $p_{i}$ be the star and plaquette which share the
qubits $q_{i-1}$ and $q_{i}$ ($i=1,..,m$) (again, 
it could be that $s_{i}=s_{i+1}$ or $p_{i}=p_{i+1}$). Let $s_{m+1}$
be the other star (besides $s_{m}$ which acts on $q_{i}$ and let
$p_{m+1}$ be the other plaquette (besides $p_{m}$) which acts on
$q_{m}$. Actually, $p_{m+1}$ may not exist because $q_{m}$ may
be in the boundary. If that happens, we add a 2-simplex 
to the polygonal complex with one of its faces being $q_m$ and assign 
the identity operator to it. Clearly adding such a term to the 
instance doesn't have any effect on the groundstate. 
This makes $(p_0,...,p_{m+1})$ being a well defined copath 
with corresponding qubits $(q_0,...,q_m)$.

Define $\gamma=(s_0,...,s_{m+1})$, $\gamma^{*}=(p_0,...,p_{m+1})$, and $L=\bigotimes_{q\in\gamma}X_{q}$, $L^{*}=\bigotimes_{q\in\gamma^{*}}Z_{q}$.
The choice between the two will depend on the induced algebras on
$q_{m}$. It is almost true to say that this choice depends on whether
$q_{m}$ is in the coboundary (in which case we choose $L$) or in
the boundary (in which case we choose $L^{*}$). It is true except
for the case where $q_{m}$ is both in the coboundary and in the boundary.
We will show that  this case is not problematic due to the restrictions
that it entails on the induced algebras on $q_{m}$.

\begin{figure}
\includegraphics[scale=0.3]{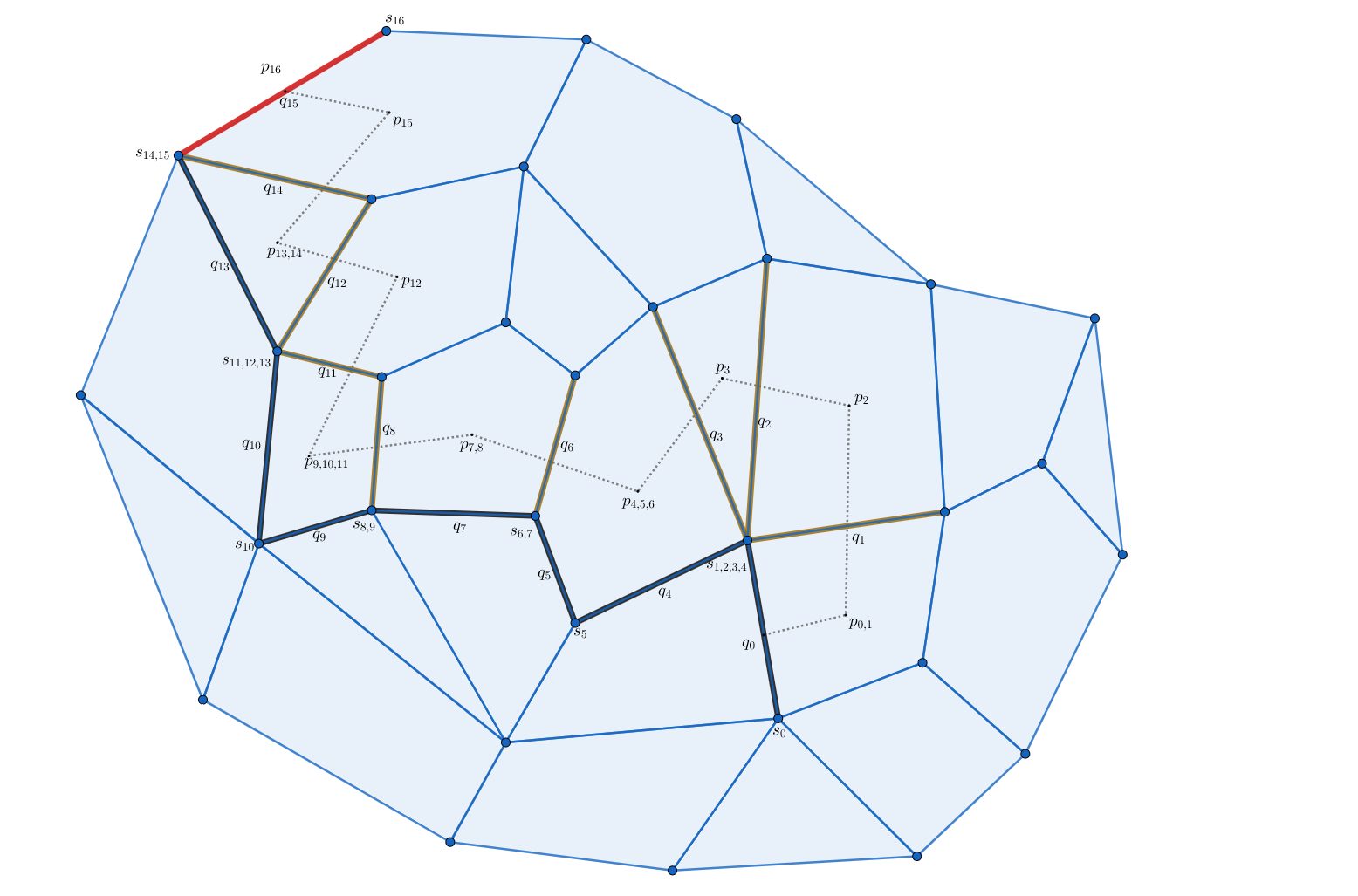}
\caption{\label{fig:ribbon} A ribbon to the boundary/coboundary: the labeled edges in bold form a ribbon. The stars and plaquettes which are shared between two adjacent edges in the ribbon are labeled as well. Ribbons include in them both a path which can easily be seen and a copath which is drawn as a dotted line. The indexes for stars/plaquettes indicate upon the possibly multiple pairs of edges in the ribbon which are acted by that star/plaquette. $q_{15}$ (or generally $q_m$) is in the boundary of the system and so the copath in this ribbon starts at $q_0$ and ends at the boundary.}
\end{figure}
 
\begin{claim}
\label{claim:properboundary1}$A_{s_{m}}$ and $B_{p_{m}}$ do not
act trivially on $q_{m}$. Therefore if $q_{m}$ is in the coboundary
then $A_{s_{m+1}}$ acts trivially on $q_{m}$ and if $q_{m}$ is
in the boundary then $B_{p_{m+1}}$ act trivially on $q_{m}$.

\end{claim}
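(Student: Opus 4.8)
The plan is to first prove the two assertions in the first sentence, and then deduce the second sentence directly from Definition \ref{def:boundary}. The structural input about the ribbon that I will use is that, by the way $\beta=(e_0,\dots,e_m)$ was produced (we truncated at the \emph{first} boundary/coboundary edge, and $s_0,p_0$ were chosen in the interior of the system), every one of the qubits $q_0,\dots,q_{m-1}$ lies in the interior of the system; only $q_m$ is boundary/coboundary. In particular $q_{m-1}$ is interior, so Claim \ref{claim:interior-2-dim} gives $\dim\mathcal{A}_{q_{m-1}}^{s_m}=\dim\mathcal{A}_{q_{m-1}}^{p_m}=2$, and then Claim \ref{claim:commute-entails-classical} tells us that the generators $a_{q_{m-1}}^{s_m}$ and $b_{q_{m-1}}^{p_m}$ do not commute; hence the algebras $\mathcal{A}_{q_{m-1}}^{s_m}=\langle a_{q_{m-1}}^{s_m}\rangle$ and $\mathcal{A}_{q_{m-1}}^{p_m}=\langle b_{q_{m-1}}^{p_m}\rangle$ do not commute.

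Next I would argue by contradiction that $A_{s_m}$ acts non-trivially on $q_m$. Since $s_m$ and $p_m$ are adjacent they share exactly the two edges $q_{m-1},q_m$, so every edge of $s_m$ other than $q_{m-1},q_m$ is disjoint from every edge of $p_m$. Suppose $A_{s_m}$ acted trivially on $q_m$. Write $\mathcal{H}=\mathcal{H}_1\otimes\mathcal{H}_{q_{m-1}}\otimes\mathcal{H}_3$, where $\mathcal{H}_1$ collects the edges of $s_m$ other than $q_{m-1},q_m$ and $\mathcal{H}_3$ collects all remaining qubits (in particular $q_m$ and all other edges of $p_m$ land in $\mathcal{H}_3$). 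Then $A_{s_m}$ acts only on $\mathcal{H}_1\otimes\mathcal{H}_{q_{m-1}}$ and $B_{p_m}$ acts only on $\mathcal{H}_{q_{m-1}}\otimes\mathcal{H}_3$, so Lemma \ref{lemma:alg-commute} applies; since $[A_{s_m},B_{p_m}]=0$ it forces $\mathcal{A}_{q_{m-1}}^{s_m}$ and $\mathcal{A}_{q_{m-1}}^{p_m}$ to commute, contradicting the previous paragraph. Hence $A_{s_m}$ acts non-trivially on $q_m$. Running the identical argument with the roles of $A_{s_m}$ and $B_{p_m}$ interchanged (after the swap, $q_{m-1}$ is again the unique qubit on which both terms act non-trivially once $B_{p_m}$ is assumed trivial on $q_m$) shows that $B_{p_m}$ also acts non-trivially on $q_m$.

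The second sentence is now immediate: the only stars acting on $q_m$ are $s_m$ and $s_{m+1}$, and the only plaquettes acting on $q_m$ are $p_m$ and $p_{m+1}$. If $q_m$ is in the coboundary then at most one star acts non-trivially on it, and since $A_{s_m}$ does, $A_{s_{m+1}}$ must act trivially on $q_m$. Symmetrically, if $q_m$ is in the boundary then $B_{p_{m+1}}$ acts trivially on $q_m$.

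I expect the only delicate point to be the bookkeeping in the second paragraph: one must check that the three-fold tensor factorization is legitimate so that Lemma \ref{lemma:alg-commute} applies verbatim. Everything hinges on the fact, recorded in the notation section, that a star and an adjacent plaquette meet in \emph{exactly} two edges; once the shared edge $q_m$ is removed from consideration (using that $A_{s_m}$ is assumed trivial there), the remaining supports of $A_{s_m}$ and $B_{p_m}$ overlap only in the single interior qubit $q_{m-1}$, which is exactly where Claims \ref{claim:interior-2-dim} and \ref{claim:commute-entails-classical} can be invoked.
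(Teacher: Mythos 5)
Your proposal is correct and follows essentially the same route as the paper: argue by contradiction that triviality of $A_{s_m}$ (or $B_{p_m}$) on $q_m$ forces the induced algebras on the single shared interior qubit $q_{m-1}$ to commute, which contradicts Claim \ref{claim:commute-entails-classical} since $q_{m-1}$, being interior, cannot be classical; the second sentence then drops out of Definition \ref{def:boundary}. The only difference is presentational: you make explicit the three-fold tensor factorization needed for Lemma \ref{lemma:alg-commute} and the intermediate invocation of Claim \ref{claim:interior-2-dim}, both of which the paper leaves implicit.
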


\begin{proof}
The proof follows from the fact that $q_m$ was the first qubit in the ribbon 
to be in the boundary/coboundary. More precisely, 
note that the qubits which $A_{s_{m}}$ and $B_{p_{m}}$ share are $q_{m-1}$
and $q_{m}$. Thus if by contradiction $A_{s_{m}}$ or $B_{p_{m}}$
act trivially on $q_{m}$ then the fact that $A_{s_{m}}$ and $B_{p_{m}}$
commute implies that they commute on $q_{m-1}$ alone i.e that the
algebras $\a_{q_{m-1}}^{s_{m}}$ and $\a_{q_{m-1}}^{p_{m}}$ commute.
But $q_{m-1}$ is in the interior and so the latter implies that $q_{m-1}$
is classical by Claim \ref{claim:commute-entails-classical} which is a contradiction since we have replaced all classical qubits with trivial ones, 
but $q_{m-1}$ was assumed to be in the interior (and thus not trivial). 
Consequently, if $q_{m}$ is
in the coboundary then it must be $A_{s_{m+1}}$ which acts trivially
on $q_{m}$ and if it is in the boundary then it must be $B_{p_{m+1}}$
which acts trivially on $q_{m}$.
\end{proof}
\begin{lem}
\label{lem:properboundary2}
$\a_{q_{m}}^{s_{m}}$, $\a_{q_{m}}^{p_{m}}$ cannot be both full operator
algebras.
\end{lem}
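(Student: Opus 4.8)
The plan is to argue by contradiction, and the key realization is that one should look not only at $q_m$ but also at the \emph{other} qubit $q_{m-1}$ shared by $s_m$ and $p_m$: by construction $q_{m-1}$ is interior, so Claim~\ref{claim:interior-2-dim} gives $\dim\mathcal{A}^{s_m}_{q_{m-1}}=\dim\mathcal{A}^{p_m}_{q_{m-1}}=2$, and by Claims~\ref{claim:commute-entails-classical} and~\ref{claim:changeofbasis1} we may fix a basis of $\mathcal{H}_{q_{m-1}}$ with $\mathcal{A}^{s_m}_{q_{m-1}}=\langle Z\rangle$ and $\mathcal{A}^{p_m}_{q_{m-1}}=\langle P\rangle$ for a Pauli operator $P$ with $[Z,P]\neq0$. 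Since $A_{s_m},B_{p_m}$ are Hermitian, this lets us write $A_{s_m}=I_{q_{m-1}}\otimes G_0+Z_{q_{m-1}}\otimes G_1$ and $B_{p_m}=I_{q_{m-1}}\otimes H_0+P_{q_{m-1}}\otimes H_1$ with $G_0,G_1,H_0,H_1$ Hermitian and $G_1,H_1\neq0$, where $G_0,G_1$ act on $\mathcal{H}_{q_m}$ and the remaining edges of $s_m$ while $H_0,H_1$ act on $\mathcal{H}_{q_m}$ and the remaining edges of $p_m$; crucially these two sets of remaining edges are disjoint, since an adjacent star and plaquette meet only in $q_{m-1},q_m$. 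Expanding $[A_{s_m},B_{p_m}]=0$ in the Pauli basis of the $\mathcal{H}_{q_{m-1}}$ tensor factor then yields a handful of relations, the essential one being the \emph{anti}commutation $\{G_1,H_1\}=0$, the remaining ones being commutations among the $G$'s and $H$'s.

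Next I would push these relations down to induced algebras on the single qubit $q_m$. Using Claim~\ref{inducedalgebra_indepednent} with the $I/Z$ and $I/P$ structure gives $\mathcal{A}^{s_m}_{q_m}=\langle\mathcal{A}^{G_0}_{q_m}\cup\mathcal{A}^{G_1}_{q_m}\rangle$ and $\mathcal{A}^{p_m}_{q_m}=\langle\mathcal{A}^{H_0}_{q_m}\cup\mathcal{A}^{H_1}_{q_m}\rangle$, and by Lemma~\ref{lemma:alg-commute} (with $q_m$ as the shared tensor factor and the disjoint remaining edges as the outer factors) each commutation among the $G$'s and $H$'s passes to the corresponding induced algebras on $q_m$. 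Now suppose for contradiction $\mathcal{A}^{s_m}_{q_m}=\mathcal{A}^{p_m}_{q_m}=\mathcal{L}(\mathcal{H}_{q_m})$. Then Lemma~\ref{lem:schurs-lemma} (a full operator algebra has trivial commutant) and Lemma~\ref{lemma:qubit-algebra} force several of the $\mathcal{A}^{G_i}_{q_m},\mathcal{A}^{H_j}_{q_m}$ to collapse; running through the handful of cases — and, when $P$ is not conjugate to $X$, using the extra relation from the identity component, which equates $[G_0,H_0]$ to a scalar multiple of $[G_1,H_1]$ — reduces everything to $\{G_1,H_1\}=0$ with both $\mathcal{A}^{G_1}_{q_m}$ and $\mathcal{A}^{H_1}_{q_m}$ full (in one sub-case one instead gets $\gamma_i\eta_j=0$, which already contradicts fullness of $\langle\gamma_i\rangle$ since all $\gamma_i$ would then share a kernel vector). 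Schmidt-decomposing $G_1=\sum_i\gamma_i\otimes g_i$ and $H_1=\sum_j\eta_j\otimes h_j$ over $q_m$ versus the disjoint extra edges, linear independence upgrades $\{G_1,H_1\}=0$ to $\{\gamma_i,\eta_j\}=0$ for all $i,j$, with $\langle\gamma_i\rangle=\langle\eta_j\rangle=\mathcal{L}(\mathbb{C}^2)$.

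The remaining task — ruling out ``$\{\gamma_i,\eta_j\}=0$ with both families generating $\mathcal{L}(\mathbb{C}^2)$'' — is the step I expect to be the real obstacle: here the fact that $s_m$ and $p_m$ overlap in \emph{two} qubits, not one, is exactly what defeats the clean one-qubit argument (with a single shared qubit, Lemma~\ref{lem:schurs-lemma} alone would finish), and one is pushed into an explicit computation on a qubit that must use fullness of \emph{both} $\langle\gamma_i\rangle$ and $\langle\eta_j\rangle$. Concretely: fix a nonzero Hermitian $\eta_1$; its anticommutant in $\mathcal{L}(\mathbb{C}^2)$ has dimension $\le1$ unless $\eta_1$ is a scalar multiple of a traceless Pauli operator, so in the first case $\langle\gamma_i\rangle$ cannot be full. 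In the remaining case, normalizing $\eta_1=Z$ forces every $\gamma_i\in\mathrm{Sp}\{X,Y\}$; picking a nonzero $\gamma_1$ there (again a scalar times a traceless Pauli) and applying the same dichotomy with the roles of $\gamma$ and $\eta$ exchanged puts every $\eta_j$ in the two-dimensional anticommutant of $\gamma_1$, which — the $\eta_j$ being Hermitian and generating $\mathcal{L}(\mathbb{C}^2)$ — they must span; but then every $\gamma_i$ anticommutes with that whole two-dimensional space, hence is a scalar multiple of $\gamma_1$, so $\langle\gamma_i\rangle$ is at most two-dimensional, contradicting fullness. This contradiction establishes the lemma.
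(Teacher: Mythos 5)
Your plan shares with the paper the pivot to the shared interior qubit $q_{m-1}$ and the use of $[A_{s_m},B_{p_m}]=0$ there, but the arguments diverge and yours has a genuine gap. The missing ingredient is that $\mathcal{A}^{p_m}_{q_{m-1}}$ must be pinned to $\langle X\rangle$, not merely $\langle P\rangle$ for some Pauli $P$ with $[Z,P]\ne 0$. The paper obtains this by applying Lemma~\ref{lem:strong-commutation} to the interior pair $q_{m-2},q_{m-1}$ (note one cannot apply it to $q_{m-1},q_m$, since it requires a two-dimensional induced algebra on both qubits, whereas here $\mathcal{A}^{s_m}_{q_m}$ is assumed full). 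Your cited Claims~\ref{claim:interior-2-dim}, \ref{claim:commute-entails-classical}, \ref{claim:changeofbasis1} only give $P=aZ+bX$ with $b\ne 0$ and leave $a$ free. With $a\ne 0$, the $I_{q_{m-1}}$-component of your expansion is only $[G_0,H_0]=-a[G_1,H_1]$, and this does \emph{not} rule out the sub-case you need it to. Concretely, take $\mathcal{A}^{G_0}_{q_m}=\mathcal{A}^{H_1}_{q_m}=\langle Z\rangle$, $\mathcal{A}^{G_1}_{q_m}=\mathcal{A}^{H_0}_{q_m}=\langle X\rangle$, i.e.\ over $q_m$ versus the (disjoint) remaining edges write $G_0=I\otimes g_0+Z\otimes g_1$, $G_1=X\otimes g$, $H_0=I\otimes h_0+X\otimes h_1$, $H_1=Z\otimes h$. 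Then $[G_0,H_1]=[G_1,H_0]=0$ and $\{G_1,H_1\}=\{X,Z\}\otimes g\otimes h=0$ hold automatically, and the $I$-component reduces to $g_1\otimes h_1=a\,g\otimes h$, which is satisfiable whenever $a\ne 0$ (e.g.\ $g_1=g$, $h_1=a\,h$). Here $\mathcal{A}^{s_m}_{q_m}=\langle Z,X\rangle$ and $\mathcal{A}^{p_m}_{q_m}=\langle X,Z\rangle$ are both full, yet $\mathcal{A}^{G_1}_{q_m}$ and $\mathcal{A}^{H_1}_{q_m}$ are only two-dimensional, so your final anticommutation step never engages. Only with $a=0$ (i.e.\ $P=X$) does the $I$-component become $[G_0,H_0]=0$, forcing $g_1\otimes h_1=0$ and hence one of $\mathcal{A}^{G_0}_{q_m},\mathcal{A}^{H_0}_{q_m}$ trivial, which does kill this sub-case. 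So you cannot avoid invoking Lemma~\ref{lem:strong-commutation} on $q_{m-2},q_{m-1}$ as the paper does.

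Beyond this gap, it is worth noting that the paper's route is also different in spirit and avoids both the case analysis and the anticommutation relation entirely: it uses fullness of $\mathcal{A}^{s_m}_{q_m}$ once to produce a single element $T=I\otimes Z+Z\otimes T_2\in\mathcal{A}^{s_m}_{q_{m-1},q_m}$ with non-scalar $q_m$-part, and from $[T,R]=0$ for $R=I\otimes R_1+X\otimes R_2\in\mathcal{A}^{p_m}_{q_{m-1},q_m}$ reads off only the $I$- and $X$-components $[Z,R_1]=[Z,R_2]=0$, concluding $\mathcal{A}^{p_m}_{q_m}\subseteq\langle Z\rangle$ directly. Two smaller points on your last paragraph, if you do pursue your route once the gap is fixed: you implicitly take the Schmidt factors $\gamma_i,\eta_j$ to be Hermitian, which is legitimate (a Hermitian operator on a bipartite space admits a Schmidt decomposition with Hermitian factors) but deserves a sentence since the ``traceless Pauli'' dichotomy relies on it; and the parenthetical about all $\gamma_i$ sharing a kernel vector is not the right justification — what one actually gets in that sub-case is that all $\gamma_i$ lie in a one-dimensional space, so $\langle\gamma_i\rangle$ is (at most) two-dimensional, which already contradicts fullness.
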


\begin{proof}
Assume without loss of generality that $A_{q_{m}}^{s_{m}}$ is full. 
We show that this implies that $\a_{q_{m}}^{p_{m}}$ is 2-dimensional.
By how $q_{m}$ is defined, the qubit $q_{m-1}$, which by definition
is also shared by $s_{m},p_{m}$, is in the interior and hence $\a_{q_{m-1}}^{s_{m}}=\left\langle Z\right\rangle ,\a_{q_{m-1}}^{p_{m}}=\left\langle X\right\rangle $ (by applying lemma $\ref{lem:strong-commutation}$ to $q_{m-2},q_{m-1}$).
Therefore, any non zero operator in $\a_{q_{m-1},q_{m}}^{s_{m}}\subseteq\a_{q_{m-1}}^{s_{m}}\otimes\a_{q_{m}}^{s_{m}}=\left\langle Z\right\rangle \otimes\mathcal{L}(\mathcal{H}_{q_{m}})$
is of the form:
\[
T=I\otimes T_{1}+Z\otimes T_{2}
\]
 for some $T_{1},T_{2}\in\l(\h_{q_{m}})$ not both zero. We may choose
$T$ such that $T_{1}\ne0$ since otherwise we may replace $T$ with
$T^{\prime}=T^{2}+T=I\otimes\left(T_{2}\right)^{2}+Z\otimes T_{2}$
which is also in $\a_{q_{m-1},q_{m}}^{s_{m}}$ 

We may actually choose $T$ such that $T_{1}\notin\mathbb{C}\cdot I$:
assume by contradiction that $T_{1}\in\mathbb{C}\cdot I$ for any
choice of $T$. So let $T=\lambda_{1}I\otimes I+Z\otimes T_2, T'=\lambda_{2}I\otimes I+Z\otimes T^\prime_2\in\a_{q_{m-1},q_m}^{s_{m}}$
where $T,T'$
do not commute. There must exist such $T_{2},T^\prime_2$ for otherwise
we would obtain that $\a_{q_{m}}^{s_{m}}$  at most two dimensional 
(choose a basis for which $T_2,T^\prime_2\in \langle Z \rangle$ 
and conclude that $\a_{q_{m}}^{s_{m}}\subseteq \langle Z \rangle$).
Now since $I\otimes I$ is in the algebra so we obtain that also $Z\otimes T_{2}$ and $\,Z\otimes T^\prime_2$
are in the algebra and hence their product too: $I\otimes T_{2}T^\prime_{2}$.
It cannot be that $T_{2}T^\prime_{2}\in\mathbb{C}\cdot I$ because $T,T^\prime$
do not commute (indeed, if $T_{2}\cdot T^\prime_{2}=0$ so $T_{2},T^\prime_{2}$
commute, and if $T_{2}\cdot T^\prime_{2}=\lambda I$ for some $\lambda\ne0$
so $T_{2}^{-1}=\lambda^{-1}T^\prime_{2}$ and clearly any invertible operator
commutes with its inverse). 

Furthermore, we may even choose $T$ such that $T_1$ (and therefore also $T$) is
traceless: first choose $T=I\otimes T_1+Z\otimes T_2$ such that $T_1\notin\mathbb{C}\cdot I$.
Then replace $T_1$ by $T_1-tr(T_1)$. This operator is of course traceless
and non-zero. We may choose a basis
for $\h_{q_{m}}$ for which $T_1=\lambda Z$ with $\lambda\ne0$. Thus
we may choose $T$ such that $T_1=Z$ because we can divide by $\lambda$,
so we have in hand $T=I\otimes Z+Z\otimes T_{2}\in\a_{q_{m-1},q_{m}}^{s_{m}}$.

Now, by lemma \ref{lem:strong-commutation} $\a_{q_{m-1}}=\langle X\rangle$, and so every element of $\a_{q_{m-1},q_{m}}^{p_{m}}$ is of the form:
\[
R=I\otimes R_1+X\otimes R_2
\]
So
\[
\left[T,R\right]=I\otimes\left[Z,R_1\right]+X\otimes[Z,R_2]+Z\otimes[T_2,R_1]+Y\otimes\{T_2,R_2\}=0
\]
due to linear independence we conclude that $\left[Z,R_1\right]=\left[Z,R_2\right]=0$.
Since $R$ is arbitrary we conclude that $\a_{q_{m-1},q_{m}}^{p_{m}}$
consists only of elements of the form $I\otimes R_1+X\otimes R_2$ with
$R_1,R_2\in\left\langle Z\right\rangle $. This implies that $\a_{q_{m}}^{p_{m}}\subseteq\left\langle Z\right\rangle $.
By claim \ref{claim:properboundary1} $\a_{q_{m}}^{p_{m}}$ is not
trivial and so the equality holds. 
\end{proof}

\begin{cor}
\label{cor:properboundary2}
At least one of the algebras $\a_{q_{m}}^{s_{m}},\a_{q_{m}}^{p_{m}}$ is 2-dimensional.
If $\a_{q_{m}}^{p_{m}}$ is 2-dimensional, then 
 $\a_{q_{m-1,}q_{m}}^{p_{m}}=\left\langle X\otimes X\right\rangle $
If $\a_{q_{m}}^{s_{m}}$ is 2-dimensional, then $\a_{q_{m-1},q_{m}}^{s_{m}}=\left\langle Z\otimes Z\right\rangle $. 
\end{cor}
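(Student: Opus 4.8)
The plan is to obtain the first assertion by a dimension count, and the two ``moreover'' statements by a case split on $\dim\a^{s_m}_{q_m}$, reducing almost everything to Lemma \ref{lem:strong-commutation}.

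For the first assertion I would note that, by Claim \ref{claim:properboundary1}, both $A_{s_m}$ and $B_{p_m}$ act non-trivially on $q_m$, so neither $\a^{s_m}_{q_m}$ nor $\a^{p_m}_{q_m}$ is the trivial algebra; by Lemma \ref{lemma:qubit-algebra} each is then $2$-dimensional or the full algebra $\mathcal{L}(\h_{q_m})$, and Lemma \ref{lem:properboundary2} says they are not both full, so at least one is $2$-dimensional. For the ``moreover'' statements, recall that $q_{m-1}$ is an interior qubit (in particular it exists, i.e.\ $m\ge1$), so Theorem \ref{thm:equiv} gives $\a^{s_m}_{q_{m-1}}=\langle Z\rangle$ and $\a^{p_m}_{q_{m-1}}=\langle X\rangle$ in the global basis, both $2$-dimensional. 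If $\a^{s_m}_{q_m}$ is $2$-dimensional, then the hypotheses of Lemma \ref{lem:strong-commutation} hold for the pair $(q_{m-1},q_m)$ with $s_m,p_m$, giving $\a^{s_m}_{q_{m-1},q_m}=\langle Z\otimes Z\rangle$; and if in addition $\dim\a^{p_m}_{q_m}=2$, the ``in addition'' clause of that lemma gives $\a^{p_m}_{q_{m-1},q_m}=\langle X\otimes X\rangle$. That disposes of every case except $\a^{s_m}_{q_m}$ full -- where by Lemma \ref{lem:properboundary2} the algebra $\a^{p_m}_{q_m}$ is $2$-dimensional and it remains to prove $\a^{p_m}_{q_{m-1},q_m}=\langle X\otimes X\rangle$.

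To handle that last case I would argue as follows. A full $\a^{s_m}_{q_m}$ forces $q_m$ into the coboundary (contrapositive of Claim \ref{claim:interior-2-dim}), so $A_{s_m}$ is the only star acting non-trivially on $q_m$; since $\mathcal{L}(\h_{q_m})$ is basis-independent and $q_m$ is not an interior qubit (hence not pinned down by Theorem \ref{thm:equiv}), I may choose the basis of $\h_{q_m}$ so that $\a^{p_m}_{q_m}=\langle X\rangle$. Then $\a^{p_m}_{q_{m-1},q_m}\subseteq Sp\{I\otimes I,X\otimes I,I\otimes X,X\otimes X\}$ by Lemma \ref{lemma:subalgebra}, and it is strictly larger than $\mathbb{C}\cdot I$ because $B_{p_m}$ acts non-trivially on both $q_{m-1}$ and $q_m$. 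Since $A_{s_m}$ and $B_{p_m}$ commute and share exactly $q_{m-1},q_m$, the algebras $\a^{s_m}_{q_{m-1},q_m}$ and $\a^{p_m}_{q_{m-1},q_m}$ commute (Lemma \ref{lemma:alg-commute}, with $q_{m-1},q_m$ grouped into one factor); from $\a^{s_m}_{q_{m-1}}=\langle Z\rangle$ one extracts an element $Z\otimes M+I\otimes N\in\a^{s_m}_{q_{m-1},q_m}$ with $M\ne0$, which does not commute with $X\otimes I$, and from the fullness of $\a^{s_m}_{q_m}$ one extracts an element of $\a^{s_m}_{q_{m-1},q_m}$ not commuting with $I\otimes X$. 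Hence $X\otimes I,I\otimes X\notin\a^{p_m}_{q_{m-1},q_m}$, which leaves only $\a^{p_m}_{q_{m-1},q_m}=\langle X\otimes X\rangle$.

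The hard part will be exactly this last case -- the degenerate situation the surrounding text flags, in which $q_m$ lies simultaneously in the boundary and the coboundary and thereby forces a full induced algebra. What makes it go through is the pair of observations that a full star algebra on $q_m$ both constrains the plaquette algebra there (Lemma \ref{lem:properboundary2}) and leaves the basis of $\h_{q_m}$ completely free, so that $\a^{p_m}_{q_m}$ can be normalized to $\langle X\rangle$ and the two-qubit algebra then read off from the commutation with $A_{s_m}$. A last routine check is the geometry $m=1$ (so $q_{m-1}=q_0$): this is harmless, since $q_0$ is the edge shared by the interior terms $s_0,p_0$ and is therefore itself an interior qubit, so Theorem \ref{thm:equiv} applies to it verbatim.
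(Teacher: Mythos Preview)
Your first paragraph is exactly the paper's argument, and the case $\dim\a^{s_m}_{q_m}=2$ is handled correctly via Lemma~\ref{lem:strong-commutation}. The difference lies in the remaining case, $\a^{s_m}_{q_m}$ full. The paper does not give a separate direct argument here; it simply invokes Lemma~\ref{lem:strong-commutation} again, implicitly in its $s\leftrightarrow p$ dual form (hypotheses $\dim\a^{p}_{q_1}=\dim\a^{p}_{q_2}=\dim\a^{s}_{q_1}=2$ imply $\a^{p}_{q_1,q_2}=\langle X\otimes X\rangle$), which holds by the identical proof with $Z$ and $X$ interchanged. With $q_1=q_{m-1}$ interior and $q_2=q_m$, the three dual hypotheses are satisfied whenever $\dim\a^{p_m}_{q_m}=2$, regardless of $\dim\a^{s_m}_{q_m}$. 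This is the same duality the paper already uses at the end of the proof of Theorem~\ref{thm:equiv}.

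Your direct argument for this case has a genuine gap in its last step. You correctly extract elements of $\a^{s_m}_{q_{m-1},q_m}$ that fail to commute with $X\otimes I$ and with $I\otimes X$, and hence conclude $X\otimes I,\,I\otimes X\notin\a^{p_m}_{q_{m-1},q_m}$. But this does \emph{not} force $\a^{p_m}_{q_{m-1},q_m}=\langle X\otimes X\rangle$: for instance the $2$-dimensional C*-subalgebra $Sp\{I,(I+X)\otimes(I+X)\}\subseteq\langle X\rangle\otimes\langle X\rangle$ contains neither $X\otimes I$ nor $I\otimes X$, and there are $3$-dimensional subalgebras such as $Sp\{I,\,X\otimes X,\,X\otimes I-I\otimes X\}$ with the same property. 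To finish along your lines one must exploit the full commutation with $\a^{s_m}_{q_{m-1},q_m}$ on an arbitrary element $aI+bX\otimes I+cI\otimes X+dX\otimes X$ and use the fullness of $\a^{s_m}_{q_m}$ to force $b=c=0$; that computation is essentially the content of (the dual of) Lemma~\ref{lem:strong-commutation}, so you are better off citing that lemma in its dual form rather than partially reproving it.
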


\begin{proof}
By lemma \ref{lem:properboundary2} $\a_{q_{m}}^{s_{m}},\a_{q_{m}}^{p_{m}}$ 
cannot both be full
operator algebras. By Claim \ref{claim:properboundary1} they are
both non-trivial.
It follows that one of them is 2-dimensional while
the other is at least 2-dimensional. Thus by lemma $\ref{lem:strong-commutation}$ we have that $\a_{q_{m-1},q_{m}}^{s_{m}}=\left\langle Z\otimes Z\right\rangle $
if $\a_{q_{m}}^{s_{m}}$ is 2-dimensional, 
and $\a_{q_{m-1,}q_{m}}^{p_{m}}=\left\langle X\otimes X\right\rangle $
if $\a_{q_{m}}^{p_{m}}$ is 2-dimensional. 

\end{proof}

Note that by Corollary \ref{cor:properboundary2}, if $q_m$ is in the coboundary but not in the boundary then it must be that $\a_{q_{m-1,}q_{m}}^{p_{m}}=\left\langle X\otimes X\right\rangle $ and if $q_m$ is in the boundary but not in the coboundary then $\a_{q_{m-1},q_{m}}^{s_{m}}=\left\langle Z\otimes Z\right\rangle $. But even if $q_m$ is both in the boundary and in the coboundary, still we have a demand on at least one of the algebras $\a_{q_{m-1,}q_{m}}^{p_{m}},\a_{q_{m-1},q_{m}}^{s_{m}}$. We now show that either $L=\bigotimes_{q\in\gamma}X_{q}$ 
or $L^{*}=\bigotimes_{q\in\gamma^{*}}Z_{q}$
satisfy the commutation relations which are stated in lemma \ref{lem:either-path-or-copath}
(and thus serve as the desired logical operator) depending on the which of the above algebras is 2 dimensional (if both hold, then any choice will do).

\subsubsection*{\textbf{Case a}: $\a^{p_m}_{q_{m-1},q_m}=\langle X\otimes X \rangle$} In this case we choose $s_0$ and $L=\bigotimes_{q\in\gamma}X_{q}$.

\begin{claim}
\label{claim:plaquettes-commute-with-L}$\left[B_{p},L\right]=0$
for any $p$.
\end{claim}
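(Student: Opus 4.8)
The plan is to reduce the claim to a purely local statement and then feed in the toric-code normalization of the basis. First I would record that the edges associated to the path $\gamma=(s_0,\dots,s_{m+1})$ are exactly the ribbon qubits $q_0,\dots,q_m$ — consecutive stars $s_i,s_{i+1}$ meet in $q_i$ — so that $L=\bigotimes_{i=0}^{m}X_{q_i}$ is a tensor product of Pauli-$X$ operators over pairwise distinct qubits. Since $X$ is Hermitian, $[B_p,L]=0$ is equivalent to $L\,B_p\,L=B_p$, and because the factors of $L$ act on disjoint qubits it suffices to show that conjugation by $X_{q_i}$ fixes $B_p$ for every $i$ for which $B_p$ acts non-trivially on $q_i$.

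The engine of the argument is the per-qubit criterion: if the induced algebra $\mathcal{A}^p_{q}$ is contained in $\langle X\rangle=Sp\{I,X\}$ in the chosen basis, then every Schmidt component of $B_p$ on $q$ lies in $Sp\{I,X\}$, each of which is fixed by conjugation by $X_q$ since $XIX=I$ and $XXX=X$; hence $X_qB_pX_q=B_p$. So I would only need to check $\mathcal{A}^p_{q_i}\subseteq\langle X\rangle$ for the relevant $q_i$. For $i<m$ the qubit $q_i$ is interior, so by Theorem \ref{thm:equiv} (its $r=1$ case; see also Claim \ref{claim:interior-2-dim} and the discussion after Claim \ref{claim:changeofbasis1}) we have $\mathcal{A}^p_{q_i}=\langle X\rangle$ for every plaquette $p$ acting on $q_i$.

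The remaining qubit is $q_m$. If $B_p$ acts trivially on $q_m$ there is nothing to check; otherwise, since $q_m$ lies in at most two plaquettes — one of which is $p_m$, which acts non-trivially on $q_m$ by Claim \ref{claim:properboundary1} — we must have $p\in\{p_m,p_{m+1}\}$. For $p=p_m$: we are in Case a, so $\mathcal{A}^{p_m}_{q_{m-1},q_m}=\langle X\otimes X\rangle=Sp\{I\otimes I,X\otimes X\}$; writing $B_{p_m}=\sum_j D_j\otimes N_j$ with the $N_j$ its Schmidt components on $\{q_{m-1},q_m\}$, each $N_j$ lies in this two-dimensional algebra, hence is a combination of $I\otimes I$ and $X\otimes X$, which exhibits $B_{p_m}$ with only $I_{q_m}$ and $X_{q_m}$ occurring on $q_m$, so $\mathcal{A}^{p_m}_{q_m}\subseteq\langle X\rangle$ by Claim \ref{inducedalgebra_indepednent}. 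For $p=p_{m+1}$: if $q_m$ is in the boundary then $B_{p_{m+1}}$ acts trivially on $q_m$ by Claim \ref{claim:properboundary1}, so nothing is needed; if instead $q_m$ is in the coboundary but not in the boundary, then $\mathcal{A}^{p_{m+1}}_{q_m}$ is two-dimensional (Claim \ref{claim:interior-2-dim}, as $q_m$ is not in the boundary) and, commuting with the two-dimensional $\mathcal{A}^{p_m}_{q_m}$ (Lemma \ref{lemma:alg-commute}, using that $p_m$ and $p_{m+1}$ share only the edge $q_m$), it must equal it by Lemma \ref{lemma:qubit-algebra}, so again $\mathcal{A}^{p_{m+1}}_{q_m}\subseteq\langle X\rangle$.

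The only point requiring care — bookkeeping rather than a genuine obstacle — is this case split at $q_m$: one must keep straight which of ``$q_m$ in the boundary'' and ``$q_m$ in the coboundary only'' is consistent with the Case-a hypothesis and with Claim \ref{claim:properboundary1}, and note that the dummy $2$-simplex possibly attached at $q_m$ (carrying the identity, hence acting trivially there) is automatically harmless. Once $\mathcal{A}^p_{q_i}\subseteq\langle X\rangle$ holds for every relevant $i$, the per-qubit criterion closes the argument.
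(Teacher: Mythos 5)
Your proposal is correct and follows essentially the same route as the paper: reduce $[B_p,L]=0$ to the per-qubit statement that $\mathcal{A}^p_{q}\subseteq\langle X\rangle$ (or that $B_p$ acts trivially on $q$) for each $q\in\gamma$, then supply this for the interior qubits $q_0,\dots,q_{m-1}$ from the toric-code normalization, and for $q_m$ by a case split using Claim \ref{claim:properboundary1} and the Case-a hypothesis. Your write-up is somewhat more explicit than the paper's, which compresses the endpoint analysis: the paper's assertion that $\mathcal{A}^{p_m}_{q_m}=\langle X\rangle$ (``by e.g. Claim \ref{claim:interior-2-dim} and Lemma \ref{lemma:qubit-algebra}'') leaves implicit that the Case-a hypothesis $\mathcal{A}^{p_m}_{q_{m-1},q_m}=\langle X\otimes X\rangle$ is what pins down the Pauli to be $X$; your chain through $\mathcal{A}^{p_m}_{q_{m-1},q_m}=\langle X\otimes X\rangle\Rightarrow\mathcal{A}^{p_m}_{q_m}\subseteq\langle X\rangle$ via Claim \ref{inducedalgebra_indepednent} makes that explicit, which is a small improvement in clarity. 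One tiny nit: the edges of $\gamma$ need not be \emph{exactly} $q_0,\dots,q_m$ — consecutive stars $s_i,s_{i+1}$ may coincide, in which case $q_i$ drops out of $\gamma$ — but since you establish $\mathcal{A}^p_{q_i}\subseteq\langle X\rangle$ for all $i$, this is harmless.
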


\begin{proof}
Aside from $q_{m}$, every qubit in $\gamma$ is not a boundary/coboundary qubit.
Therefore, the induced algebras on any $q_{i}$ ($0\leq i\leq m-1$)
by any one of the plaquettes 
it belongs to is equal to $\left\langle X\right\rangle $
and so this plaquette term of course commutes with $L$. 
Also $\a_{q_{m}}^{p_{m+1}}=\left\langle X\right\rangle $
if $q_{m}$ is not in the boundary because $\a_{q_{m}}^{p_{m}}=\left\langle X\right\rangle$ (by e.g Claim \ref{claim:interior-2-dim} and Lemma \ref{lemma:qubit-algebra}).
But also if $q_{m}$ is in the boundary then 
$B_{p_{m+1}}$ acts trivially on $q_{m}$ according to 
Lemma \ref{claim:properboundary1},
so in any case $\left[B_{p_{m+1}},L\right]=0$. Since we covered all
plaquettes which share an edge with $\gamma$ it follows that 
$\left[B_{p},L\right]=0$ for any $p.$
\end{proof}
\begin{claim}
\label{claim:L-anticommutes}$\left\{ A_{s_{0}},L\right\} =0$
\end{claim}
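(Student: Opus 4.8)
The plan is to pin down exactly where $L$ overlaps the support of $A_{s_0}$, observe that this overlap is a single $X$, and then invoke the $Z^{\otimes|s_0|}$-form of $A_{s_0}$ coming from the equivalence to the toric code. Concretely, $L=\bigotimes_{q\in\gamma}X_{q}$ acts (as $X$) precisely on the qubits $q_{0},\dots,q_{m}$ carried by the path $\gamma=(s_{0},\dots,s_{m+1})$, so everything hinges on determining which of these qubits are incident to the vertex $s_{0}$.

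\textit{Overlap step.} I would show that $s_{0}$ is incident to exactly one of $q_{0},\dots,q_{m}$, namely $q_{0}=e_{0}$. It is incident to $q_{0}$ by construction. For $1\le i\le m$ the edge $q_{i}=e_{i}$ is shared by the two ribbon stars $s_{i},s_{i+1}$, which are therefore its two endpoint vertices, so $s_{0}$ is incident to $q_{i}$ only if $s_{0}\in\{s_{i},s_{i+1}\}$. This is excluded: $s_{0}\ne s_{1}$ since $s_{0}$ was chosen as the \emph{other} star at $e_{0}$; $s_{0}\ne s_{m+1}$ since the vertex $s_{m+1}$ is incident to the boundary/coboundary qubit $q_{m}$ whereas the interior term $s_{0}$ is incident only to interior qubits; and $s_{0}\ne s_{j}$ for $2\le j\le m$ because the path $\gamma$ was chosen non-self-intersecting. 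Hence the restriction of $L$ to the qubits of $s_{0}$ is exactly $X_{q_{0}}$.

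\textit{Form of $A_{s_{0}}$.} Since $s_{0}$ is an interior term, all of its qubits lie in the interior, so Theorem \ref{thm:equiv} (equivalence to the toric code permitting boundaries), applied to the copath consisting of all qubits of $s_{0}$, gives in the fixed basis $\mathcal{A}^{s_{0}}_{q_{1},\dots,q_{|s_{0}|}}=\langle Z^{\otimes|s_{0}|}\rangle$; as this algebra is the one generated by $A_{s_{0}}$ itself, we obtain $A_{s_{0}}\in\langle Z^{\otimes|s_{0}|}\rangle\setminus\mathbb{C}\cdot I$. The scalar multiple of the identity occurring in $A_{s_{0}}$ commutes with every operator and only shifts the ground energy, so (exactly as in Appendix \ref{apx:Defected-toric-code}) we may disregard it and take $A_{s_{0}}=Z^{\otimes|s_{0}|}$.

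\textit{Conclusion.} Writing $L=X_{q_{0}}\cdot L'$ with $L'$ a product of $X$'s supported off the qubits of $s_{0}$, we have $[A_{s_{0}},L']=0$ by disjointness of supports, while $A_{s_{0}}X_{q_{0}}=-X_{q_{0}}A_{s_{0}}$ because the $q_{0}$-tensor-factor of $Z^{\otimes|s_{0}|}$ is $Z$, with $ZX=-XZ$, and all other factors of $A_{s_{0}}$ act trivially on $q_{0}$. Combining these, $A_{s_{0}}L=-LA_{s_{0}}$, that is $\{A_{s_{0}},L\}=0$. The only step that requires genuine care is the overlap step — the combinatorial bookkeeping that $s_{0}$ meets the path only at $q_{0}$; once that and the $Z^{\otimes|s_{0}|}$-form of $A_{s_{0}}$ are in hand, the anticommutation is immediate.
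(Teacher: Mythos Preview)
Your proof is correct and follows the same approach as the paper: use that $s_0$ is in the interior to get $A_{s_0}\in\langle Z^{\otimes|s_0|}\rangle\setminus\mathbb{C}\cdot I$, use that $\gamma$ is non-self-intersecting to ensure $q_0$ is the unique qubit of $s_0$ on the path, and conclude the anticommutation from $\{Z,X\}=0$. Your overlap step is considerably more explicit than the paper's one-line appeal to ``$\gamma$ is not self intersecting,'' but the content is the same (one small imprecision: the edges of $\gamma$ are a \emph{subset} of $\{q_0,\dots,q_m\}$ rather than all of them, since consecutive $s_i$'s may coincide --- but this only strengthens your argument).
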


\begin{proof}
Since $s_{0}$ is in the interior, then $A_{s_{0}}\in\alpha Z^{\otimes|s_{0}|}$
for some $\alpha\ne0$ (we again ignore the possible addition of identity operator since it doesn't change the eigenstates of $H$). Since $\gamma$ is not self intersecting so
$q_{0}$ is the only qubit of $s_{0}$ which $\gamma$ passes through.
Consequently $\left\{ A_{s_{0}},L\right\} =\left\{ Z,X\right\} =0$.
\end{proof}
\begin{claim}
\label{claim:stars-commute-with-L}$\left[A_{s},L\right]=0$ for any
star $s\ne s_{0}.$
\end{claim}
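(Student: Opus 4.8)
The plan is to sort a star $s\ne s_{0}$ according to how many qubits of the path $\gamma=(s_{0},\dots,s_{m+1})$ it is incident to, and to read off the required commutation from the algebraic structure we have already established along the ribbon. First I would note that, since $\gamma$ was chosen non self-intersecting, every star is incident to at most two of the qubits $q_{0},\dots,q_{m}$; moreover the only stars that act non-trivially on any $q_{i}$ are among $s_{0},\dots,s_{m+1}$, because for $i<m$ the qubit $q_{i}$ is interior and hence acted non-trivially by exactly the two ribbon stars $s_{i},s_{i+1}$, while $q_{m}$ is acted only by $s_{m},s_{m+1}$. Consequently, if $A_{s}$ acts non-trivially on some qubit of $\gamma$ then $s=s_{j}$ for some $1\le j\le m+1$, and — again because $\gamma$ is non self-intersecting — either $s$ meets two consecutive ribbon qubits $q_{j-1},q_{j}$ with $1\le j\le m$, or $s=s_{m+1}\ne s_{m}$ meets the single qubit $q_{m}$. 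In the first situation $L$ restricted to the qubits of $s$ equals $X_{q_{j-1}}\otimes X_{q_{j}}$ (the identity on the other qubits of $s$), and in the second it equals $X_{q_{m}}$; a star outside $\{s_{0},\dots,s_{m+1}\}$ is disjoint from $\gamma$ and commutes with $L$ trivially.

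The bulk case is $s=s_{j}$ with $1\le j\le m-1$, so $q_{j-1}$ and $q_{j}$ are both interior. Here I would apply Lemma \ref{lem:strong-commutation} to the adjacent pair $s_{j},p_{j}$ acting on $q_{j-1},q_{j}$: the dimension hypotheses hold because $q_{j-1},q_{j}$ are not in the coboundary and $q_{j-1}$ is not in the boundary, so the relevant single-qubit algebras are $2$-dimensional by Claim \ref{claim:interior-2-dim}. The lemma gives $\a^{s_{j}}_{q_{j-1},q_{j}}=\langle Z\otimes Z\rangle$ in the fixed basis. Taking a Schmidt decomposition of $A_{s_{j}}$ with respect to the splitting of the qubits of $s_j$ into $\{q_{j-1},q_{j}\}$ and the rest, each first-factor operator lies in $\langle Z\otimes Z\rangle$ and therefore commutes with $X\otimes X$, so $[A_{s_{j}},L]=0$.

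The two endpoint cases are exactly where the Case a hypothesis is used. For $s=s_{m}$, which meets the interior qubit $q_{m-1}$ and the boundary/coboundary qubit $q_{m}$, I would use that $A_{s_{m}}$ commutes with $B_{p_{m}}$ and that a star and a plaquette meet in at most two edges, so $s_{m}\cap p_{m}=\{q_{m-1},q_{m}\}$; Lemma \ref{lemma:alg-commute} then shows that $\a^{s_{m}}_{q_{m-1},q_{m}}$ commutes with $\a^{p_{m}}_{q_{m-1},q_{m}}=\langle X\otimes X\rangle$ (the Case a assumption), and the same Schmidt-decomposition bookkeeping as above gives $[A_{s_{m}},L]=0$. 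For $s=s_{m+1}\ne s_{m}$, which meets $\gamma$ only at $q_{m}$, I would invoke that in Case a the endpoint $q_{m}$ is a coboundary edge — this is the configuration in which routing with $L$ to the coboundary is the correct choice; if $q_{m}$ were only in the boundary then $B_{p_{m+1}}$ would act trivially on $q_{m}$ and, by the remark following Corollary \ref{cor:properboundary2}, we would be in Case b and use $L^{*}$ instead — so by Claim \ref{claim:properboundary1} the term $A_{s_{m+1}}$ acts trivially on $q_{m}$, whence $[A_{s_{m+1}},L]=0$.

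I expect the only delicate point to be this last case: one must be sure that $s=s_{m+1}$ meeting a single qubit of $\gamma$ can occur only when $q_{m}$ lies in the coboundary, which is precisely when $L$ (rather than $L^{*}$) is the operator being used; this rests on Corollary \ref{cor:properboundary2} together with Claim \ref{claim:properboundary1}. Everything else is either immediate or a routine consequence of Lemma \ref{lem:strong-commutation} and Lemma \ref{lemma:alg-commute}, with no algebraic input beyond what Section \ref{sec:equiv} and the analysis preceding Corollary \ref{cor:properboundary2} already supply.
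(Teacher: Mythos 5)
Your opening reduction contains a false step that the rest of the argument leans on. You assert that ``since $\gamma$ was chosen non self-intersecting, every star is incident to at most two of the qubits $q_{0},\dots,q_{m}$,'' and consequently that a bulk star meets two \emph{consecutive} ribbon qubits $q_{j-1},q_{j}$. That is not so. The non-self-intersection hypothesis applies to the original path through $\k$; to turn that path into a ribbon one inserts, around each interior vertex $v_{i}$, a whole copath of edges all incident to $v_{i}$. So a single star can perfectly well satisfy $s_{j}=s_{j+1}=\dots=s_{j+k}$ and touch $k+2$ ribbon qubits $q_{j-1},\dots,q_{j+k}$ (this already happens for a straight path on the square lattice). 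The two edges of the \emph{simplified} path $\gamma$ that such a star meets are $q_{j-1}$ and $q_{j+k}$ — not consecutive in the ribbon — and $L$ restricted to the qubits of $s_{j}$ is $X_{q_{j-1}}\otimes X_{q_{j+k}}$, not $X_{q_{j-1}}\otimes X_{q_{j}}$. Lemma \ref{lem:strong-commutation} only controls $\a^{s_{j}}_{q_{j-1},q_{j}}$ and says nothing about $\a^{s_{j}}_{q_{j-1},q_{j+k}}$. This is precisely why the paper invokes Theorem \ref{thm:equiv} (the inductive copath extension of Lemma \ref{lem:strong-commutation}) applied to the full copath $(q_{j-1},\dots,q_{j+k})$ to obtain $\a^{s_{j}}_{q_{j-1},\dots,q_{j+k}}=\langle Z^{\otimes(k+2)}\rangle$ and hence $\a^{s_{j}}_{q_{j-1},q_{j+k}}=\langle Z\otimes Z\rangle$. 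The same issue infects your treatment of $s=s_{m}$: the two $\gamma$-qubits of $s_{m}$ are $q_{m'}$ and $q_{m}$ with possibly $m'<m-1$, and there the paper bridges the gap by multiplying the commutation relations with the intermediate plaquettes $B_{p_{j}}$ to show $A_{s_{m}}$ commutes with $X_{q_{m'}}\otimes X_{q_{m}}$; your ``same Schmidt-decomposition bookkeeping'' on the pair $q_{m-1},q_{m}$ does not reach the pair that $L$ actually touches.

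A secondary but real concern is your justification that in Case a the endpoint $q_{m}$ must be a coboundary edge. You appeal to ``the remark following Corollary \ref{cor:properboundary2}'' to say that if $q_{m}$ were only in the boundary we would be in Case b. That remark, however, only asserts that boundary-but-not-coboundary forces $\a^{s_{m}}_{q_{m-1},q_{m}}=\langle Z\otimes Z\rangle$; it does not exclude $\a^{p_{m}}_{q_{m-1},q_{m}}=\langle X\otimes X\rangle$ holding simultaneously, i.e. it does not exclude Case a. So the implication ``Case a $\Rightarrow q_{m}$ is a coboundary qubit,'' which both your argument and the invocation of Claim \ref{claim:properboundary1} for $s_{m+1}$ depend on, needs a genuinely different justification than the one you give.
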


\begin{proof}
The fact that $\left[A_{s_{m+1}},L\right]=0$ holds simply because
$A_{s_{m+1}}$ acts trivially on $q_{m}$ by claim \ref{claim:properboundary1}
and thus on any other qubit of $\gamma$. Now, given two adjacent
edges of $\gamma$, namely $q_{i},q_{j}$ ($0\leq i<j\leq m-1)$,
consider the copath $\left(q_{i},q_{i+1},...,q_{j-1},q_{j}\right).$
Therefore $s_{i}=s_{i+1}=...=s_{j-1}$ is the star acting on both
$q_{i}$ and $q_{j}$ (it could be that $i=j-1$). Since this
copath consists of qubits which are all in the interior, so by
theorem \ref{thm:equiv} $\a_{q_{i},q_{i+1},...,q_{j-1},q_{j}}^{s_{i}}=\left\langle Z^{\otimes\left(j-i+1\right)}\right\rangle $
and in particular $\a_{q_{i},q_{j}}^{s_{i}}=\langle Z\otimes Z\rangle$. Since $\gamma$ is
simple, it intersects with $s_{i}$ on those two qubits only and so
$[A_{s_{i}},L]=[Z\otimes Z,X\otimes X]=0$. 

It is thus left to prove that $\left[A_{s_{m}},L\right]=0$ since
then we have covered all stars which share an edge with $\gamma$.
This commutation is not as trivial as for the other ones since $A_{s_{m}}$
may act in various ways. However, in order to show that $A_{s_{m}}$
commutes with $L$ all we need to show is that $\a_{q_{m'},q_{m}}^{s_{m}}$
commutes with $\a_{q_{m'},q_{m}}^{L}=\left\langle X\otimes X\right\rangle $, 
or simply that it commutes with $X\otimes X,$ where $q_{m'}$ is
the other edges of $\gamma$ which $s_{m}$ acts on. To show this
we make use of the fact that $A_{s_{m}}$ must commute with all plaquette
terms. Indeed consider the copath $(q_{m'},q_{m'+1},....,q_{m-1},q_{m})$. Each
of those qubits, except $q_{m}$, is in the interior. Thus by lemma \ref{lem:strong-commutation} , $\a_{q_{j},q_{j+1}}^{p_{j}}=\left\langle X\otimes X\right\rangle $
($m'\leq j\leq m-1)$ and so the fact that $A_{s_{m}}$ commutes with
$B_{p_{j}}$ implies that it commutes with $X_{q_{j}}\otimes X_{q_{j+1}}$
(we here added the subindex just clarify the qubit on which the operator
acts on). Since this is true for any $m'\leq j\leq m-1$ then it is
true also for the product $\prod_{j=m'}^{m-1}X_{q_{j}}\otimes X_{q_{j+1}}=X_{q_{m'}}\otimes X_{q_{m}}$.
Since $q_{m'}$ and $q_{m}$ are the only qubits $B_{s_{m}}$ and
$L$ share so it follows that $\left[B_{s_{m}},L\right]=0$.
\end{proof}

\subsubsection*{\textbf{Case b}: $\a^{p_m}_{q_{m-1},q_m}\ne \langle X\otimes X \rangle$} In this case we know by Corollary \ref{cor:properboundary2} that $\a^{s_m}_{q_{m-1},q_m}= \langle Z\otimes Z \rangle$  and we choose $p_0$ and $L^{*}$.

By taking the dual (interchanging boundary with coboundary,
star with plaquette, $\gamma$ with $\gamma^{*}$, and $L$ with $L^{*}$)
we obtain claims analogous to claims \ref{claim:plaquettes-commute-with-L},\ref{claim:L-anticommutes},\ref{claim:stars-commute-with-L}.
Yet for the sake of completeness we review the proof again for this
case.
\begin{claim}
\label{claim:commutation-relation-with-L*}$\left[A_{s},L^{*}\right]=0$
for any star $s$, $\left\{ B_{p_{0}},L^{*}\right\} =0$, and $\left[B_{p},L^{*}\right]=0$
for any $p\ne p_{0}$.
\end{claim}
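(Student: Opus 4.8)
The plan is to deduce Claim~\ref{claim:commutation-relation-with-L*} by dualizing the three Case~a assertions --- Claims~\ref{claim:plaquettes-commute-with-L}, \ref{claim:L-anticommutes} and \ref{claim:stars-commute-with-L} --- under the involution exchanging $A_s\leftrightarrow B_p$, $s_i\leftrightarrow p_i$, boundary $\leftrightarrow$ coboundary, $X\leftrightarrow Z$, $\gamma\leftrightarrow\gamma^{*}$ and $L\leftrightarrow L^{*}$. Every ingredient used in Case~a is invariant under this involution: Theorem~\ref{thm:equiv} produces $\langle Z^{\otimes r}\rangle$ along copaths of interior qubits of a star and $\langle X^{\otimes r}\rangle$ along paths of interior qubits of a plaquette; Claims~\ref{claim:interior-2-dim} and \ref{claim:properboundary1}, Lemma~\ref{lem:properboundary2} and Corollary~\ref{cor:properboundary2} are stated symmetrically in $s_m$ and $p_m$; and the normalization fixed after Claim~\ref{claim:changeofbasis1} writes every $2$-dimensional star algebra as $\langle Z\rangle$ and every $2$-dimensional plaquette algebra as $\langle X\rangle$. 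So each of the three proofs transports verbatim once the mirror of the Case~a hypothesis is supplied: there it was $\a^{p_m}_{q_{m-1},q_m}=\langle X\otimes X\rangle$, whereas here we have $\a^{p_m}_{q_{m-1},q_m}\ne\langle X\otimes X\rangle$, so Corollary~\ref{cor:properboundary2} makes $\a^{s_m}_{q_m}$ $2$-dimensional and $\a^{s_m}_{q_{m-1},q_m}=\langle Z\otimes Z\rangle$, exactly the mirror of the Case~a hypothesis.

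Concretely the three assertions run as follows. For $[A_s,L^{*}]=0$ (mirror of Claim~\ref{claim:plaquettes-commute-with-L}): a star disjoint from $\gamma^{*}$ commutes with $L^{*}$ trivially, and for any other star $s$ the algebra it induces on the qubits of $\gamma^{*}$ it touches is contained in a tensor power of $\langle Z\rangle$ --- on interior qubits by the normalization above, on $q_m$ because $\a^{s_m}_{q_m}=\langle Z\rangle$ (Case~b) and, for $s_{m+1}$, because either $q_m$ is in the coboundary and $A_{s_{m+1}}$ acts trivially on it by Claim~\ref{claim:properboundary1}, or it is not and then $\a^{s_{m+1}}_{q_m}=\a^{s_m}_{q_m}=\langle Z\rangle$ by Claim~\ref{claim:interior-2-dim} and Lemma~\ref{lemma:qubit-algebra} --- and any diagonal algebra commutes with the diagonal operator $L^{*}$. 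For $\{B_{p_0},L^{*}\}=0$ (mirror of Claim~\ref{claim:L-anticommutes}): $p_0$ is interior, so $B_{p_0}\in\mathbb{C}\cdot X^{\otimes|p_0|}$ up to an additive multiple of the identity, and since $\gamma^{*}$ meets $p_0$ only at $q_0$ we get $\{B_{p_0},L^{*}\}=\{X,Z\}=0$. For $[B_p,L^{*}]=0$ with $p\ne p_0$ (mirror of Claim~\ref{claim:stars-commute-with-L}): a plaquette disjoint from $\gamma^{*}$ commutes trivially; when $p$ touches only interior qubits of $\gamma^{*}$, Theorem~\ref{thm:equiv} gives $\langle X^{\otimes r}\rangle$ on each run of consecutive such qubits, and since $X^{\otimes2}$ commutes with $Z^{\otimes2}$, $B_p$ commutes with the corresponding power of $Z$ in $L^{*}$; for $p_{m+1}$ one uses that in Case~b the qubit $q_m$ is in the boundary (proved below), so $B_{p_{m+1}}$ acts trivially on $q_m$ by Claim~\ref{claim:properboundary1}; and for $p_m$ one argues as in Claim~\ref{claim:stars-commute-with-L}, propagating along the path of stars inside the ribbon: $B_{p_m}$ commutes with every star term, the intermediate steps use $\a^{s_i}_{q_{i-1},q_i}=\langle Z\otimes Z\rangle$ and the last step uses $\a^{s_m}_{q_{m-1},q_m}=\langle Z\otimes Z\rangle$, and the telescoping product of these $Z\otimes Z$ factors equals the restriction of $L^{*}$ to the qubits $p_m$ shares with $\gamma^{*}$.

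The one step that is not a pure mirror of Case~a, and which I expect to be the main point, is the claim that in Case~b the endpoint qubit $q_m$ lies in the boundary; this is the dual of Lemma~\ref{lem:properboundary2}. I would prove it by contradiction: if $q_m$ were not in the boundary then a second plaquette acts non-trivially on it, so $\a^{p_m}_{q_m}$ is $2$-dimensional by Claim~\ref{claim:interior-2-dim}; together with $\dim\a^{s_m}_{q_{m-1}}=\dim\a^{p_m}_{q_{m-1}}=2$ (since $q_{m-1}$ is interior) and $\dim\a^{s_m}_{q_m}=2$ (from Corollary~\ref{cor:properboundary2}), Lemma~\ref{lem:strong-commutation} would force $\a^{p_m}_{q_{m-1},q_m}=\langle X\otimes X\rangle$, contradicting the defining hypothesis of Case~b. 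Everything else is routine bookkeeping along the ribbon, using Theorem~\ref{thm:equiv} and the commutation of the relevant induced algebras, and is obtained from the Case~a proofs by the substitution above.
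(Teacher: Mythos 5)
Your proposal is correct and mirrors the paper's own proof, which is precisely the dualization of Claims~\ref{claim:plaquettes-commute-with-L}--\ref{claim:stars-commute-with-L} under the star$\leftrightarrow$plaquette, boundary$\leftrightarrow$coboundary, $X\leftrightarrow Z$, $\gamma\leftrightarrow\gamma^{*}$, $L\leftrightarrow L^{*}$ involution (the paper then spells the argument out once more ``for the sake of completeness''). You usefully make explicit a fact the paper's proof uses implicitly, namely that in Case~b the terminal qubit $q_m$ is in the boundary of the system, which is what lets one conclude that $B_{p_{m+1}}$ acts trivially on $q_m$. One small imprecision in that step: $\dim\a^{s_m}_{q_m}=2$ does not actually follow from Corollary~\ref{cor:properboundary2} once you have assumed $\a^{p_m}_{q_m}$ is $2$-dimensional (the corollary only asserts that at least one of the two single-qubit algebras is $2$-dimensional), so the detour through Lemma~\ref{lem:strong-commutation} as written is not justified; fortunately it is also unnecessary, since the second clause of Corollary~\ref{cor:properboundary2} already states directly that $\a^{p_m}_{q_{m-1},q_m}=\left\langle X\otimes X\right\rangle$ whenever $\a^{p_m}_{q_m}$ is $2$-dimensional, which is exactly the contradiction you need.
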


\begin{proof}
Aside from $q_{m}$, each qubit of $\gamma^{*}$ is in the interior.
Therefore the induced algebras on it by any of the two stars it belongs
to is $\left\langle Z\right\rangle $ and $L^{*}$ commutes with those
terms. The only star left to consider is $s_{m+1}$. If $q_{m}$ is
not in the coboundary then also $\a_{q_{m}}^{s_{m+1}}=\left\langle Z\right\rangle $.
Otherwise $A_{s_{m+1}}$ acts trivially on $q_{m}$ according to claim
\ref{claim:properboundary1}. In any event, $\left[A_{s_{m+1}},L\right]=0$.
The fact that $\left\{ B_{p_{0}},L^{*}\right\} =0$ follows from the
fact that $B_{p_{0}}$ is in the interior and thus up to a constant
it is equal to $X^{\otimes|p_{0}|}$. The fact that $\left[B_{p_{m+1}},L^{*}\right]=0$
is simply because $B_{p_{m+1}}$ acts trivially on $q_{m}$. Now given
two adjacent edges of $\gamma^{*}$, $q_{i},q_{j}$, we have the path
$\left(q_{i},q_{i+1},...,q_{j-1},q_{j}\right)$ of qubits which are
all in the interior and so in particular $\a_{q_{i},q_{j}}^{p_{i}}=\left\langle X\otimes X\right\rangle $.
Regarding $B_{p_{m}},$ let $q_{m'}$ be the edge coming before $q_{m}$
on $\gamma^{*}$. So $\left(q_{m'},...,q_{m}\right)$ is a path of
edges of $B_{p_{m}}$ which are all in the interior except $q_{m}$.
It follows that $\a_{q_{j},q_{j+1}}^{s_{j}}=\left\langle X_{q_{j}}\otimes X_{q_{j+1}}\right\rangle $
and so $\left[B_{p_{m}},X_{q_{j}}\otimes X_{q_{j+1}}\right]=0$ for
any $m'\leq j\leq m-1$. This implies that $\left[B_{p_{m}},L^{*}\right]=\left[B_{p_{m}},X_{q_{m'}}\otimes X_{q_{m}}\right]=0$.
\end{proof}

\section{Puncturing the surface to obtain 2-locality}
\label{apx:puncturing_2local}

\subsection{Proof of Claim \ref{claim:triangle-center}}
Choose some ball $B$ of radius $2k$ contained in $T$ and 
let $p$ be a plaquette in $H$ 
which includes the vertex which is the center of $B$, 
and let $s$ be a star in $H$ 
adjacent to $p$, namely sharing with it 
$2$ edges. Due to the fact that the radius of $B$ is $2k$, $B$ 
contains all edges of $s$ and $p$, as they are all within distance 
less than $2k$ from the center of $B$.  
Now, we claim that at least one of $s$ and $p$ are not in the 
interior (namely, act trivially on one of their qubits), 
or in other words, if they are both in the interior, at least one of them  
is in $\mathcal{W}$ and therefore was removed from $\tilde{H}$; this is because 
by Lemma \ref{lem:either-path-or-copath} at least one 
of them has access to the boundary/coboundary, and thus was  
removed from $H$ in the construction of $\tilde{H}$. 
We let $h$ be that term.

\subsection{Proof of Claim \ref{claim:super-particle}}

We first show a constant upper bound on the number of 
qubits that each triangle contains. 
For this we recall that each triangle in contained in a 
ball of radius $R$ in the $1$-skeleton of $\k$.  
The following fact can be found in \cite{Graph2}.

\begin{fact}[Moore's bound]
\label{fact:moore}
Consider a graph with maximal degree at most $k$ and diameter $R$ (diam between points with no path is $\infty$). Then if $V$ is the set of 
vertices in the graph, we have $|V|\leq N_{k,R}$ 
where $N_{k,R}$ is the Moore bound:
$$ N_{k,R}=1+k\sum_{i=0}^{R-1}(k-1)^{i}$$
\end{fact}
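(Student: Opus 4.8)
The plan is to prove Moore's bound by the standard breadth-first layering argument around an arbitrary vertex. First observe that if the graph is disconnected then its diameter is $\infty$ by the convention in the statement, so the inequality holds vacuously; hence we may assume the graph is connected and that $R$, the diameter, is finite. (The degenerate values $k=0$ and $k=1$ are also immediate once one adopts the convention $0^0=1$, so I would mention them in passing and otherwise assume $k\geq 2$.)

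Fix any vertex $v$. For $i\geq 0$ let $S_i$ be the set of vertices at graph-distance exactly $i$ from $v$ (the $i$-th BFS layer), so $S_0=\{v\}$. Since every vertex lies within distance $R$ of $v$, we get the disjoint decomposition $V=\bigcup_{i=0}^{R}S_i$, hence $|V|=\sum_{i=0}^{R}|S_i|$. The core step is to bound each layer: clearly $|S_0|=1$ and $|S_1|\leq k$, since the vertices of $S_1$ are neighbors of $v$ and $\deg(v)\leq k$. For $i\geq 1$, every $w\in S_{i+1}$ has at least one neighbor in $S_i$ (take the penultimate vertex on a shortest path from $v$ to $w$); conversely, every $u\in S_i$ already has a neighbor in $S_{i-1}$ (the penultimate vertex on a shortest path from $v$ to $u$), so at most $k-1$ of its neighbors lie in $S_{i+1}$. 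Counting the edges between $S_i$ and $S_{i+1}$ from the $S_i$ side then gives $|S_{i+1}|\leq (k-1)\,|S_i|$ for all $i\geq 1$, and by induction $|S_i|\leq k(k-1)^{i-1}$ for every $i\geq 1$.

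Summing the layer bounds finishes the proof:
\[
|V|=\sum_{i=0}^{R}|S_i|\;\leq\;1+\sum_{i=1}^{R}k(k-1)^{i-1}\;=\;1+k\sum_{j=0}^{R-1}(k-1)^{j}\;=\;N_{k,R}.
\]
There is no genuine obstacle here; this is a routine counting argument, and the only points that need any care are the reduction to the connected case (so that the stated $R$ is the actual diameter) and the bookkeeping for the small degrees $k\in\{0,1\}$, both of which are trivial.
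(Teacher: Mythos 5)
Your proof is correct and is the standard BFS-layering argument. The paper itself does not prove this Fact \ref{fact:moore}; it simply cites it to \cite{Graph2} (Gross--Yellen), so there is no proof in the paper to compare against. Your handling of the disconnected case (diameter $\infty$, statement vacuous), the small degrees $k\in\{0,1\}$, and the key bound $|S_{i+1}|\leq(k-1)|S_i|$ for $i\geq 1$ (since each $u\in S_i$ already spends one incident edge on a neighbor in $S_{i-1}$) are all as in the textbook argument, and the final telescoped sum matches $N_{k,R}$ exactly.
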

\begin{cor}
Under the same condition on the graph as in Moore's bound, 
the number of edges in the graph satisfies
 $$ |E|\leq \frac{1}{2}k\cdot N_{k,R}. $$
\end{cor}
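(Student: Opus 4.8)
The plan is to derive the edge bound from Moore's bound (Fact \ref{fact:moore}) together with the handshaking lemma. First I would invoke Fact \ref{fact:moore} to get $|V| \le N_{k,R}$ for the vertex set $V$ of the graph. Next, since the maximal degree is at most $k$, I would bound the degree sum by $\sum_{v \in V} \deg(v) \le k \, |V| \le k \, N_{k,R}$. Finally, applying the elementary identity $\sum_{v \in V} \deg(v) = 2|E|$ yields $2|E| \le k \, N_{k,R}$, and hence $|E| \le \frac{1}{2} k \, N_{k,R}$, as claimed.

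There is essentially no obstacle here: beyond Moore's bound, the only ingredient is the handshaking identity, which holds for any finite graph (a loopless multigraph would still be fine, since a loop would contribute $2$ to both sides, and in our setting the $1$-skeleton is a simple graph anyway). I would also remark that the inequality need not be tight — equality would simultaneously require a $k$-regular graph and a graph attaining Moore's bound — but tightness plays no role in the application: all that matters for Claim \ref{claim:super-particle} is that each quasi-Euclidean triangle, being contained in a ball of radius $R$ in the $1$-skeleton of $\mathcal{K}$ (which has bounded degree), therefore contains only $\mathcal{O}(1)$ edges, so that each super-particle carries a constant number of qubits.
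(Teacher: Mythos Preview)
Your proof is correct and follows essentially the same approach as the paper: the paper also bounds $|E|$ by $\frac{1}{2}\Sigma$ (the half-sum of degrees), then uses $\Sigma \le k|V|$ together with Moore's bound $|V|\le N_{k,R}$. Your additional remarks on tightness and the application to Claim~\ref{claim:super-particle} are accurate but not part of the paper's (one-line) argument.
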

\begin{proof}
Let $\Sigma$ denote the sum of all degree of vertices of $G$. It follows that: 
$$ |E|\leq \frac{1}{2}\Sigma \leq \frac{1}{2} k\cdot |V|\leq \frac{1}{2}k\cdot N_{k,R}  $$
\end{proof}

\begin{proof}(of Claim \ref{claim:super-particle})
Let $F$ be a face of the dual graph $G$ 
constructed above, drawn on $\s$ as described. 
Since the degree of $\mathcal{T}$ is at most $D$,  
$F$ intersects with at most $D$ triangles. Hence each super-particle 
contains at most $\frac{1}{2}Dk\cdot N_{k,R}$, and the claim follows.
\end{proof}

\subsection{Proof of Claim \ref{claim:2-local}}
\label{apx:2-local}
Every local term which doesn't intersect with the drawn edges of $G$ 
acts only on one
super-particle. Every local term which intersects with an edge of
$G$ but not a vertex of $G$ interacts with at most two super-particles.
The only terms which act on three super-particles are those which 
intersect with three faces of $G$; these are the terms $h$
corresponding to
triangle centers. However the
paths were drawn in such a way that $h$ acts on one of those super-particles trivially so it is 2-local as well.

\end{appendices}

\end{document}